\documentclass[journal,transmag]{IEEEtran}
\usepackage{amsmath}
\usepackage{color}
\usepackage{cases}
\usepackage{amsfonts,amsmath,amssymb}
\usepackage{mathrsfs}
\usepackage{cite}
\usepackage{graphicx}
\usepackage{url}
\usepackage{enumerate}
\usepackage{subfigure}
\usepackage{hyperref}
\usepackage{algorithm}
\usepackage{algorithmic}
\usepackage{amsmath,bm}
\usepackage{caption}
\usepackage{float}
\usepackage{stfloats}
\usepackage{multirow}
\usepackage{url}
\usepackage{breakurl}
\usepackage{txfonts}
\usepackage{booktabs}
\usepackage{threeparttable}
\usepackage{multirow}
\usepackage{fancyhdr}
\usepackage{lettrine}

\usepackage{geometry}
\geometry{left=0.65in,right=0.65in,top=0.75in,bottom=0.65in}

\captionsetup{font={small}}

\DeclareMathOperator*{\argmin}{arg\,min}

\newtheorem{theorem}{Theorem}
\newtheorem{proposition}{Proposition}
\newtheorem{lemma}{Lemma}
\newtheorem{corollary}{Corollary}
\newtheorem{example}{Example}
\newtheorem{remark}{Remark}
\newtheorem{definition}{Definition}

\begin{document}
\title{Exact and Robust Reconstructions of Integer Vectors Based on Multidimensional Chinese Remainder Theorem (MD-CRT)}

\author{\mbox{Li Xiao, Xiang-Gen Xia,~\IEEEmembership{Fellow,~IEEE}, and Yu-Ping Wang,~\IEEEmembership{Senior Member,~IEEE}
}
\thanks{This work was supported in part by NIH under Grants R01GM109068 and R01MH104680, and in part by NSF under Grant 1539067.
}
\thanks{L. Xiao and Y.-P. Wang are with the Department of Biomedical Engineering,
Tulane University, New Orleans, LA 70118, USA (e-mail: lxiao1@tulane.edu; wyp@tulane.edu).}
\thanks{X.-G. Xia is with the Department of Electrical and Computer Engineering,
University of Delaware, Newark, DE 19716, USA (e-mail: xxia@ee.udel.edu).}
}
\maketitle

\thispagestyle{fancy}
\fancyhead{}
\lhead{}
\lfoot{\footnotesize{This work has been submitted to the IEEE for possible publication. Copyright may be transferred without notice, after which this version may no longer be accessible.}}
\cfoot{}
\rfoot{}

\begin{abstract}
The robust Chinese remainder theorem (CRT) has been recently proposed for robustly reconstructing a large nonnegative integer from erroneous remainders. It has found many applications in signal processing, including phase unwrapping and frequency estimation under sub-Nyquist sampling. Motivated by the applications in multidimensional (MD) signal processing, in this paper we propose the MD-CRT and robust MD-CRT for integer vectors. Specifically, by rephrasing the abstract CRT for rings in number-theoretic terms, we first derive the MD-CRT for integer vectors with respect to a general set of integer matrix moduli, which provides an algorithm to uniquely reconstruct an integer vector from its remainders, if it is in the fundamental parallelepiped of the lattice generated by a least common right multiple of all the moduli. For some special forms of moduli, we present explicit reconstruction formulae. Moreover, we derive the robust MD-CRT for integer vectors when the remaining integer matrices of all the moduli left divided by their greatest common left divisor (gcld) are pairwise commutative and coprime. Two different reconstruction algorithms are proposed, and accordingly, two different conditions on the remainder error bound for the reconstruction robustness are obtained, which are related to a quarter of the minimum distance of the lattice generated by the gcld of all the moduli or the Smith normal form of the gcld.
\end{abstract}

\begin{IEEEkeywords}
Chinese remainder theorem (CRT), integer matrices, lattices, multidimensional (MD) frequency estimation, robust CRT, robust MD-CRT.
\end{IEEEkeywords}

\section{Introduction}
The Chinese remainder theorem (CRT) is one of the most fundamental theorems in number theory, and has a long history going back to the 3rd--5th centuries AD \cite{crt_integer,crt_integer1,crt_integer2}. Basically, the CRT allows to uniquely reconstruct a large nonnegative integer from its remainders with respect to a set of small moduli, if the large integer is less than the least common multiple (lcm) of all the moduli. To date, there has been a surge in work on applying the CRT for partitioning a large task into a number of smaller but independent subtasks, which can be performed in parallel. For example, the CRT has been intensively utilized in the signal processing community in the context of cyclic convolution \cite{conv1,conv2}, fast Fourier transform \cite{fft1,fft2}, coprime sensor arrays \cite{coprime1,coprime2,conglin1,conglin2}, to name a few. It also finds applications in various other fields, such as computer arithmetic based on modulo operations (e.g., multiplication of very large numbers), coding theory (e.g., residue number system codes), and cryptography (e.g., secret sharing); see \cite{crt_integer,crt_integer1,crt_integer2} and references therein.

Motivated by the applications of the CRT in phase unwrapping and frequency estimation under sub-Nyquist sampling, a robust remaindering problem has been raised and investigated in \cite{xiaxianggenxia,gangli1,yiminzhang,zihuiyuan,radeee,xiaoweili,wenjiewang,jiaxu1}. In these applications, signals are usually subject to noise, and thereby the detected remainders may be erroneous. Two significant questions underlying the robust remaindering problem are: 1) what is the reconstruction range of the large nonnegative integer? and 2) how large can the remainder errors be to ensure the robust reconstruction?
It is well-known that the CRT is not robust against remainder errors, i.e., a small error in a remainder may result in a large error in the reconstruction solution. Directly applying the CRT to these applications will thus yield poor performance. Recently, the robust CRT has been proposed in \cite{xiaxianggenxia,xiaoweili,wenjiewang} and further systematically studied in \cite{guangwuxu,yangyang,yangyang2,xiaoli1,xiaoli2}, for solving the robust remaindering problem. The robust CRT demonstrates that even though every remainder has a small error, a large nonnegative integer can be robustly reconstructed in the sense that the reconstruction error is upper bounded by the bound of the remainder errors.
Beyond these applications aforementioned, the robust CRT may have or has offered applications in multi-wavelength optical measurement \cite{optics1,optics2,optics3}, distance or velocity ambiguity resolution \cite{wenchaoli,akhlaq,ocean1,ocean2}, fault-tolerant wireless sensor networks \cite{wsn1,wsn2,wsn3}, error-control neural coding \cite{biology1,biology2,biology3}, signal recovery using multi-channel modulo samplers \cite{lugan}, etc. Note that the (robust) CRT has been generalized to (robustly) reconstruct multiple large nonnegative integers from their unordered remainder sets as well \cite{xialao1,xialao2,xiaopingli1,huiyongliao,xiaoli5,hanshen1,hanshen2}.
A thorough review of the robust CRT can be found in \cite{xiaoli}.

In this paper, we extend the CRT and robust CRT for integers to the multidimensional (MD) case, called the MD-CRT and robust MD-CRT for integer vectors, so that they can be utilized in MD signal processing. Note that MD signal processing here refers to true (nonseparable) MD signal processing, since separable MD signal processing is straightforward by handling their 1-dimensional counterparts separately along each dimension. First, through rephrasing the abstract CRT for rings in number-theoretic terms, we derive the MD-CRT for integer vectors with respect to a general set of moduli (namely a set of arbitrary nonsingular integer matrices). It is basically that given a set of nonsingular moduli $\{\textbf{M}_i\}_{i=1}^{L}$, an integer vector $\textbf{m}\in\mathcal{N}(\textbf{R})$ can be uniquely reconstructed from its remainders $\textbf{r}_i$ for $1\leq i\leq L$, where $\textbf{R}$ is a least common right multiple of all the moduli, and $\mathcal{N}(\textbf{R})$ denotes the set of all integer vectors in the fundamental parallelepiped of the lattice generated by $\textbf{R}$. A reconstruction algorithm is proposed as well. Notably, the MD-CRT for integer vectors was previously investigated in \cite{jiawenxian,PPV1} for a special case when the $L$ moduli are given by $\textbf{M}_i=\textbf{U}\bm{\Lambda}_i\textbf{U}^{-1}$ for $1\leq i\leq L$ with $\textbf{U}$ being a unimodular matrix and $\bm{\Lambda}_i$'s coprime diagonal integer matrices. For some other special forms of moduli, we further obtain explicit reconstruction formulae of the MD-CRT for integer vectors in this paper.

Moreover, we derive the robust MD-CRT for integer vectors when the $L$ nonsingular moduli are in the form of $\textbf{M}_i=\textbf{M}\bm{\Gamma}_i$ for $1\leq i\leq L$, where
$\textbf{M}$ is an arbitrary integer matrix, and $\bm{\Gamma}_i$'s are pairwise commutative and coprime integer matrices. As in the robust CRT for integers \cite{xiaxianggenxia,xiaoweili,wenjiewang,guangwuxu,yangyang,yangyang2,xiaoli1,xiaoli2},
we attempt to accurately determine all the folding vectors $\textbf{n}_i$'s (i.e., the quotient vectors of $\textbf{m}$ left divided by the moduli), and
a robust reconstruction of $\textbf{m}$ can be calculated as the average of the reconstructions obtained from the folding vectors, i.e., $\widetilde{\textbf{m}}=\frac{1}{L}\sum_{i=1}^L(\textbf{M}_{i}\textbf{n}_{i}+\widetilde{\textbf{r}}_{i})$, where
$\widetilde{\textbf{r}}_{i}$ denotes the $i_0$-th erroneous remainder. We find that the size of the remainder error bound for the reconstruction robustness
depends on the reconstruction algorithm. In other words, different reconstruction algorithms will lead to different conditions on the remainder error bound.
We then propose two different reconstruction algorithms, and accordingly, we obtain two different conditions on the remainder error bound
for the reconstruction robustness, which are related to a quarter of the minimum distance of the lattice generated by $\textbf{M}$ or the Smith normal form of $\textbf{M}$. At the end, we verify the robust MD-CRT for integer vectors by numerical simulations and apply it to MD frequency estimation when a complex MD sinusoidal signal is undersampled using multiple sub-Nyquist sampling matrices.

The rest of this paper is organized as follows. In Section \ref{sec2}, we recall some background knowledge needed to make this paper more self-contained.
In Section \ref{sec3}, we derive the MD-CRT for integer vectors with respect to a general set of moduli, and provide explicit reconstruction formulae when the moduli are in some
special forms. In Section \ref{sec4}, we investigate the robust MD-CRT for integer vectors, and propose two different algorithms for robust reconstruction, resulting in two different conditions on the remainder error bound for the reconstruction robustness. In Section \ref{sec5}, we present simulation results of the robust MD-CRT for integer vectors as well as its application to MD sinusoidal frequency estimation with multiple sub-Nyquist samplings.
We conclude this paper in Section \ref{sec6}.

\textit{Notations}: Capital and lowercase boldfaced letters are used to denote matrices and vectors, respectively. Let $\mathbb{R}$ and $\mathbb{Z}$ denote the sets of reals and integers, respectively. The transpose, inverse, inverse transpose,
and determinant of a matrix $\textbf{A}$ are denoted as $\textbf{A}^T$, $\textbf{A}^{-1}$, $\textbf{A}^{-T}$, and $\text{det}(\textbf{A})$, respectively.
 Given a set of scalars $a_1, a_2, \cdots, a_D$, we denote by $\text{diag}(a_1,a_2,\cdots,a_D)$ the diagonal matrix with $a_i$ being the $i$-th diagonal element. A $D$-dimensional vector $\textbf{a}\in[c,d)^D$ means that every element of $\textbf{a}$ is in the range of $[c,d)$ and $c,d\in\mathbb{R}$. We denote the $(i,j)$-th element of a matrix $\textbf{A}$ as $A(i,j)$, and the $i$-th element of a vector $\textbf{a}$ as $a(i)$.
The symbols $\textbf{I}$ and $\textbf{0}$ denote the identity matrix and the all-zero vector/matrix, respectively, with size determined from context. The relative complement of a set $\mathcal{A}$ with respect to a set $\mathcal{B}$ is written as $\mathcal{B}\backslash\mathcal{A}$. Throughout this paper, all matrices are square matrices unless otherwise stated.

\section{Preliminaries}\label{sec2}
The preliminary knowledge involved in this paper is mainly related to some fundamental properties in elementary number theory. In this section, we recall general concepts and notations for integer vectors and integer matrices \cite{PPV3,PPV4,smith,lattice1,smith3}.

\begin{itemize}
  \item[\romannumeral1)] \textit{Unimodular matrix}: A matrix $\textbf{U}$ is unimodular if it is an integer matrix and $|\text{det}(\textbf{U})|=1$. For any unimodular matrix $\textbf{U}$, its inverse $\textbf{U}^{-1}$ is also unimodular because of $\textbf{U}^{-1}=\text{adj}(\textbf{U})/\text{det}(\textbf{U})$ and $|\text{det}(\textbf{U}^{-1})|=|\text{det}(\textbf{U})|=1$, where \text{adj}(\textbf{U}) stands for the adjugate of $\textbf{U}$ and is an integer matrix.

  \item[\romannumeral2)] \textit{Divisor}: An integer matrix $\textbf{A}$ is a left divisor of an integer matrix $\textbf{M}$ if $\textbf{A}^{-1}\textbf{M}$ is an integer matrix. Similarly, $\textbf{A}$ is a right divisor of $\textbf{M}$ if $\textbf{M}\textbf{A}^{-1}$ is an integer matrix.

   \item[\romannumeral3)] \textit{Multiple}: A nonsingular integer matrix $\textbf{A}$ is a left multiple of an integer integer $\textbf{M}$ if $\textbf{A}=\textbf{P}\textbf{M}$ for some integer matrix $\textbf{P}$. Similarly, $\textbf{A}$ is a right multiple of  $\textbf{M}$ if $\textbf{A}=\textbf{M}\textbf{Q}$ for some integer matrix $\textbf{Q}$.

  \item[\romannumeral4)] \textit{Greatest common divisor (gcd)}: An integer matrix $\textbf{A}$ is a common left divisor (cld) of $L$ ($L\geq2$) integer matrices $\textbf{M}_1, \textbf{M}_2, \cdots, \textbf{M}_L$, if $\textbf{A}^{-1}\textbf{M}_i$ is an integer matrix for each $1\leq i\leq L$.
       We call $\textbf{B}$ a greatest common left divisor (gcld) of $\textbf{M}_1, \textbf{M}_2, \cdots, \textbf{M}_L$, if any other cld is a left divisor of $\textbf{B}$. Note that among all cld's, a gcld has the greatest absolute determinant and is unique up to postmultiplication by a unimodular matrix (because if $\textbf{B}$ is a gcld, so will be $\textbf{B}\textbf{U}$ for any unimodular matrix $\textbf{U}$). Similarly, a common right divisor (crd) and a greatest common right divisor (gcrd) of $\textbf{M}_1, \textbf{M}_2, \cdots, \textbf{M}_L$ can be defined, respectively.

  \item[\romannumeral5)] \textit{Least common multiple (lcm)}: A nonsingular integer matrix $\textbf{A}$ is a common left multiple (clm) of $L$ ($L\geq2$) integer matrices $\textbf{M}_1, \textbf{M}_2, \cdots, \textbf{M}_L$, if $\textbf{A}=\textbf{P}_i\textbf{M}_i$ for some integer matrix $\textbf{P}_i$ and each $1\leq i\leq L$.
      We call $\textbf{C}$ a least common left multiple (lclm) of $\textbf{M}_1, \textbf{M}_2, \cdots, \textbf{M}_L$, if any other clm is a left multiple of $\textbf{C}$. Note that among all clm's, an lclm has the smallest absolute determinant and is unique up to premultiplication by a unimodular matrix (because if $\textbf{C}$ is an lclm, so will be \textbf{U}$\textbf{C}$ for any unimodular matrix $\textbf{U}$). Similarly, a common right multiple (crm) and a least common right multiple (lcrm) of $\textbf{M}_1, \textbf{M}_2, \cdots, \textbf{M}_L$ can be defined, respectively.

  \item[\romannumeral6)] \textit{Coprimeness}: Two integer matrices $\textbf{M}$ and $\textbf{N}$ are said to be left (right) coprime if their gcld (gcrd) is unimodular. In other words, $\textbf{M}$ and $\textbf{N}$ are left (right) coprime if they have no cld's (crd's) other than unimodular matrices.
\end{itemize}

Note that both divisors and multiples above are always taken to be nonsingular integer matrices in this paper.
Given a $D\times D$ nonsingular integer matrix $\textbf{M}$, we define $\mathcal{N}(\textbf{M})$ by
\begin{equation}
\mathcal{N}(\textbf{M})=\{\textbf{k}\,|\,\, \textbf{k}=\textbf{M}\textbf{x}, \textbf{x}\in[0,1)^{D},\text{ and }\textbf{k}\in\mathbb{Z}^D\}.
\end{equation}
The number of elements in $\mathcal{N}(\textbf{M})$ is equal to $|\text{det}(\textbf{M})|$ \cite{smith3}. In the $1$-dimensional case (i.e., $D=1$), letting $M$ be a positive integer, we have $\mathcal{N}(M)=\{0, 1, \cdots, M-1\}$.

Then, the integer vector division is defined as follows. A $D$-dimensional integer vector $\textbf{m}$ has a unique representation with respect to a $D\times D$ nonsingular integer matrix $\textbf{M}$ as $\textbf{m}=\textbf{M}\textbf{n}+\textbf{r}$, or equivalently
\begin{equation}\label{remainder}
\textbf{m}\equiv \textbf{r} \!\!\mod \textbf{M},
\end{equation}
with $\textbf{r}\in\mathcal{N}(\textbf{M})$, where $\textbf{M}$ is viewed as a modulus, and integer vectors $\textbf{n}$ and $\textbf{r}$ are the folding vector and remainder of $\textbf{m}$ with respect to the modulus $\textbf{M}$, respectively. For simplicity, we write $\textbf{r}$ in (\ref{remainder}) as $\textbf{r}=\langle\textbf{m}\rangle_{\textbf{M}}$.
We can compute $\textbf{r}$ by
\begin{equation}\label{comput1}
\textbf{r}=\textbf{m}-\textbf{M}\lfloor\textbf{M}^{-1}\textbf{m}\rfloor,
\end{equation}
i.e., the folding vector $\textbf{n}$ is computed by $\textbf{n}=\lfloor\textbf{M}^{-1}\textbf{m}\rfloor$, where $\lfloor\cdot\rfloor$ denotes the floor operation that is performed on every element of the vector. Since $\textbf{M}^{-1}$ is in general a matrix with rational elements, $\lfloor\textbf{M}^{-1}\textbf{m}\rfloor$ is subject to round-off errors due to finite precision arithmetic. To this end, an alternative \cite{PPV3} to compute $\textbf{r}$ is given by
\begin{equation}\label{comput2}
\textbf{r}=\textbf{M}\left(\text{adj}(\textbf{M})\textbf{m}\!\!\mod \text{det}(\textbf{M})\right)/\text{det}(\textbf{M}),
\end{equation}
where the modulo operation is performed on every element of $\text{adj}(\textbf{M})\textbf{m}$.

It is well known that when the involved matrices in MD signal processing are diagonal, most results in the $1$-dimensional case can be straightforwardly extended to the MD case by handling their 1-dimensional counterparts separately. For example, when $\textbf{M}$ in (\ref{remainder}) is diagonal, i.e., $\textbf{M}=\text{diag}(M_1,M_2,\cdots,M_D)$, then (\ref{remainder}) is equivalent to
$m(i)\equiv r(i)\!\!\mod M_i$ for $1\leq i\leq D$,
where $m(i)$ and $r(i)$ denote the $i$-th elements of $\textbf{m}$ and $\textbf{r}$, respectively. The division for integer vectors is therefore reduced to that for integers. However, the involved matrices are usually nondiagonal, and extending the results of $1$-dimensional signal processing to the MD case will become nontrivial. The Smith normal form, as a popular tool to diagonalize an integer matrix, has been widely used to simplify several MD signal processing problems;
see, for example, \cite{smith2,smith3}.

\begin{proposition}[The Smith normal form \cite{smith}]\label{pr2}
A $D\times K$ integer matrix $\textbf{M}$ can be decomposed as
\begin{equation}\label{smithform}
\textbf{U}\textbf{M}\textbf{V}=
\begin{cases}
    \left(
      \begin{array}{cc}
        \bm{\Lambda} & \bm{0} \\
      \end{array}
    \right) & \text{if } K>D,\\
        \bm{\Lambda}
       & \text{if } K=D,\\
    \left(
      \begin{array}{c}
        \bm{\Lambda} \\
        \bm{0} \\
      \end{array}
    \right)
                 & \text{if } K<D,
\end{cases}
\end{equation}
where $\textbf{U}$ and $\textbf{V}$ are $D\times D$ and $K\times K$ unimodular matrices, respectively, and $\bm{\Lambda}$ is a $\text{min}(K,D)\times\text{min}(K,D)$ diagonal integer matrix, i.e., $\bm{\Lambda}=\text{diag}(\lambda_1,\lambda_2,\cdots,\lambda_\gamma,0,\cdots,0)$ with $\lambda_i$'s being positive integers and $\gamma$ being the rank of $\textbf{M}$. Also, $\lambda_i$'s satisfy $\lambda_i|\lambda_{i+1}$, i.e., $\lambda_i$ divides $\lambda_{i+1}$, for each $1\leq i\leq\gamma-1$.
Under the conditions, $\bm{\Lambda}$ is unique for a given matrix $\textbf{M}$, while $\textbf{U}$ and $\textbf{V}$ are generally not.
Moreover, $\lambda_i$'s are called the invariant factors and can be computed by $\lambda_i=d_i/d_{i-1}$ for $1\leq i\leq \gamma$, where $d_i$ is the gcd of all $i\times i$ determinantal minors of $\textbf{M}$ and $d_0=1$.
\end{proposition}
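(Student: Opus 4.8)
The plan is to split the statement into two independent parts: \emph{existence} of the decomposition with the stated divisibility chain, and \emph{uniqueness} of $\boldsymbol{\Lambda}$ together with the minor formula $\lambda_i=d_i/d_{i-1}$. Existence I would establish constructively by reducing $\textbf{M}$ through integer row and column operations, each of which is realized by left- or right-multiplication by a unimodular matrix; uniqueness I would establish by exhibiting an invariant of $\textbf{M}$ that is preserved under such multiplications and that reads off the $\lambda_i$ directly.

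For existence, first I would recall that the three elementary operations---interchanging two rows (columns), negating a row (column), and adding an integer multiple of one row (column) to another---are each effected by multiplying on the left (right) by a $D\times D$ ($K\times K$) unimodular matrix, and that products of unimodular matrices are unimodular. The reduction then proceeds by a matrix Euclidean algorithm. Assuming $\textbf{M}\neq\textbf{0}$, bring a nonzero entry of smallest absolute value to the $(1,1)$ position by row and column swaps. If this pivot does not divide some entry in its row or column, the integer division algorithm yields a nonzero remainder strictly smaller in absolute value, which after a swap becomes a strictly smaller pivot; since the pivot forms a strictly decreasing sequence of positive integers, this subprocess terminates with a pivot dividing every entry in its row and column, whereupon those entries are cleared to zero. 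A further check enforces the divisibility chain: if the pivot $\lambda_1$ fails to divide some entry of the remaining $(D-1)\times(K-1)$ block, adding the corresponding row to the first row reintroduces an entry not divisible by $\lambda_1$ and restarts the decreasing-pivot argument, so we may assume $\lambda_1$ divides every entry of the block. Recursing on the block produces the diagonal form $\boldsymbol{\Lambda}=\text{diag}(\lambda_1,\ldots,\lambda_\gamma,0,\ldots,0)$ with $\lambda_1\mid\lambda_2\mid\cdots\mid\lambda_\gamma$ and $\gamma$ equal to the rank, and the accumulated operations give the unimodular $\textbf{U}$ and $\textbf{V}$; the three shapes in (\ref{smithform}) are merely the $K>D$, $K=D$, $K<D$ bookkeeping of the trailing zero block.

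For uniqueness and the minor formula, the key observation is that $d_i$, the gcd of all $i\times i$ minors, is invariant under multiplication by unimodular matrices. By the Cauchy--Binet formula every $i\times i$ minor of $\textbf{U}\textbf{M}\textbf{V}$ is an integer linear combination of the $i\times i$ minors of $\textbf{M}$, so $d_i(\textbf{M})$ divides $d_i(\textbf{U}\textbf{M}\textbf{V})$; applying the same to $\textbf{M}=\textbf{U}^{-1}(\textbf{U}\textbf{M}\textbf{V})\textbf{V}^{-1}$, with $\textbf{U}^{-1},\textbf{V}^{-1}$ unimodular, gives the reverse divisibility, hence equality. Applying this to the diagonal form $\boldsymbol{\Lambda}$, a nonzero $i\times i$ minor is a product of $i$ of the $\lambda_j$, and the divisibility chain forces the gcd of all such products to be $\lambda_1\lambda_2\cdots\lambda_i$; thus $d_i(\textbf{M})=d_i(\boldsymbol{\Lambda})=\prod_{j=1}^{i}\lambda_j$ for $1\leq i\leq\gamma$ and $d_i=0$ for $i>\gamma$. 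Since the $d_i$ depend only on $\textbf{M}$, so does each $\lambda_i=d_i/d_{i-1}$ with $d_0=1$, which proves both the formula and the uniqueness of $\boldsymbol{\Lambda}$ while leaving $\textbf{U},\textbf{V}$ unconstrained beyond their role.

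The step I expect to require the most care is the existence reduction: one must argue termination rigorously (via the strictly decreasing positive-integer pivot) and verify that enforcing $\lambda_i\mid\lambda_{i+1}$ does not undo the zeros already created, which is precisely why the ``add-a-row-and-repeat'' reinitialization is needed. The uniqueness half is comparatively routine once the Cauchy--Binet invariance of $d_i$ is in hand, since evaluating the minors of a diagonal matrix with a divisibility chain is immediate.
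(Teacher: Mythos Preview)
The paper does not prove this proposition at all: it is stated as a classical result with a citation to Smith's 1861 paper \cite{smith} and is used as background in the Preliminaries section. There is therefore no ``paper's own proof'' to compare against.

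That said, your proposal is correct and is precisely the standard textbook argument for the Smith normal form over $\mathbb{Z}$ (and more generally over any PID). The existence half via the matrix Euclidean algorithm with a strictly decreasing pivot is the classical construction, and your care about termination and about re-establishing divisibility into the lower-right block via the ``add a row and restart'' trick is exactly what is needed. The uniqueness half via invariance of the determinantal divisors $d_i$ under unimodular multiplication, together with the computation $d_i(\boldsymbol{\Lambda})=\lambda_1\cdots\lambda_i$ from the divisibility chain, is also the standard route; Cauchy--Binet is the clean way to see the invariance, though one can also argue it operation-by-operation for the three elementary types. Nothing in your outline is wrong or incomplete.
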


\begin{proposition}[The Bezout's theorem \cite{PPV4}]\label{pr3}
Let $\textbf{L}$ be a gcld of integer matrices $\textbf{M}$ and $\textbf{N}$. Then, there exist integer matrices $\textbf{P}$ and $\textbf{Q}$ such that
\begin{equation}\label{btl}
\textbf{M}\textbf{P}+\textbf{N}\textbf{Q}=\textbf{L}.
\end{equation}
Similarly, let $\textbf{L}$ be a gcrd of $\textbf{M}$ and $\textbf{N}$. Then, there exist integer matrices $\textbf{P}$ and $\textbf{Q}$ such that
\begin{equation}\label{btl2}
\textbf{P}\textbf{M}+\textbf{Q}\textbf{N}=\textbf{L}.
\end{equation}
\end{proposition}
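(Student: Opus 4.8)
The plan is to reduce the matrix Bezout identity to a single column reduction of the augmented matrix $[\textbf{M}\;\;\textbf{N}]$ by a unimodular matrix, mirroring the classical scalar argument in which $\gcd(a,b)$ generates the ideal $a\mathbb{Z}+b\mathbb{Z}$. First I would form the $D\times 2D$ integer matrix $[\textbf{M}\;\;\textbf{N}]$ and invoke the existence of a column Hermite (or Smith) reduction: since integer column operations are exactly right-multiplications by unimodular matrices, there is a $2D\times 2D$ unimodular matrix $\textbf{W}$ with $[\textbf{M}\;\;\textbf{N}]\textbf{W}=[\textbf{L}_0\;\;\textbf{0}]$, where $\textbf{L}_0$ is a $D\times D$ nonsingular integer matrix. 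Full row rank, and hence nonsingularity of $\textbf{L}_0$, is guaranteed because at least one of $\textbf{M},\textbf{N}$ is nonsingular, consistent with the standing assumption that divisors are nonsingular.

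Next I would partition $\textbf{W}$ and its inverse into $D\times D$ blocks,
\[
\textbf{W}=\begin{pmatrix}\textbf{P}&\textbf{W}_{12}\\ \textbf{Q}&\textbf{W}_{22}\end{pmatrix},\qquad
\textbf{W}^{-1}=\begin{pmatrix}\textbf{A}&\textbf{B}\\ \textbf{C}&\textbf{D}\end{pmatrix},
\]
all blocks having integer entries since $\textbf{W}$ is unimodular. Reading off the first block column of $[\textbf{M}\;\;\textbf{N}]\textbf{W}=[\textbf{L}_0\;\;\textbf{0}]$ gives immediately $\textbf{M}\textbf{P}+\textbf{N}\textbf{Q}=\textbf{L}_0$, which is the desired identity for the particular divisor $\textbf{L}_0$.

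It then remains to verify that $\textbf{L}_0$ is genuinely a gcld and to pass to an arbitrary gcld $\textbf{L}$. From $[\textbf{M}\;\;\textbf{N}]=[\textbf{L}_0\;\;\textbf{0}]\textbf{W}^{-1}$ I would read off $\textbf{M}=\textbf{L}_0\textbf{A}$ and $\textbf{N}=\textbf{L}_0\textbf{B}$ with $\textbf{A},\textbf{B}$ integer, so $\textbf{L}_0$ is a common left divisor. For the greatest property, if $\textbf{E}$ is any cld with $\textbf{M}=\textbf{E}\textbf{M}'$ and $\textbf{N}=\textbf{E}\textbf{N}'$, then the identity factors as $\textbf{L}_0=\textbf{E}(\textbf{M}'\textbf{P}+\textbf{N}'\textbf{Q})$, exhibiting $\textbf{E}$ as a left divisor of $\textbf{L}_0$; hence $\textbf{L}_0$ is a gcld. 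Finally, since any two gcld's differ by postmultiplication by a unimodular matrix (as noted in the preliminaries), a given gcld $\textbf{L}$ satisfies $\textbf{L}=\textbf{L}_0\textbf{U}$ for some unimodular $\textbf{U}$, and replacing $\textbf{P},\textbf{Q}$ by $\textbf{P}\textbf{U},\textbf{Q}\textbf{U}$ yields $\textbf{M}(\textbf{P}\textbf{U})+\textbf{N}(\textbf{Q}\textbf{U})=\textbf{L}$.

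I expect the main obstacle to be this bridging step: carefully justifying that the algorithmically produced $\textbf{L}_0$ from the Hermite reduction really coincides, up to a unimodular factor, with the abstractly defined gcld, rather than being merely some common divisor. The greatest-property verification above is what secures this, and it hinges on the observation that the same integer blocks $\textbf{P},\textbf{Q}$ of $\textbf{W}$ appear both in the Bezout combination and in the factorization through an arbitrary cld. The gcrd statement is entirely dual: I would stack $\textbf{M}$ and $\textbf{N}$ into the $2D\times D$ matrix $\bigl(\begin{smallmatrix}\textbf{M}\\ \textbf{N}\end{smallmatrix}\bigr)$, apply a unimodular row reduction on the left to reach $\bigl(\begin{smallmatrix}\textbf{L}_0\\ \textbf{0}\end{smallmatrix}\bigr)$, and extract $\textbf{P}\textbf{M}+\textbf{Q}\textbf{N}=\textbf{L}_0$ from the first block row, with the same greatest-divisor check transposed.
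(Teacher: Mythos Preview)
Your proposal is correct and follows essentially the same route as the paper's construction in Appendix~\ref{ap0}: form the $D\times 2D$ block $[\textbf{M}\;\;\textbf{N}]$, right-multiply by a unimodular matrix to reach $[\textbf{L}_0\;\;\textbf{0}]$, read off the Bezout coefficients from the first block column of that unimodular matrix, and use the inverse relation to verify that $\textbf{L}_0$ is a cld, then a gcld. The only cosmetic difference is that the paper invokes the Smith normal form (so the right unimodular factor is the $\textbf{V}$ of $\textbf{U}[\textbf{M}\;\;\textbf{N}]\textbf{V}=[\bm{\Lambda}\;\;\textbf{0}]$ and $\textbf{L}_0=\textbf{U}^{-1}\bm{\Lambda}$), whereas you appeal to a column Hermite reduction; and you add the explicit bridging step to an arbitrary gcld via $\textbf{L}=\textbf{L}_0\textbf{U}$, which the paper leaves implicit.
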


In Appendix \ref{ap0}, we introduce how to calculate a gcld $\textbf{L}$ of two given nonsingular $D\times D$ integer matrices $\textbf{M}$ and $\textbf{N}$, and the accompanying $\textbf{P}$ and $\textbf{Q}$ in (\ref{btl}) in the Bezout's theorem; see \cite{PPV4} for details. Similarly, we can calculate a gcrd $\textbf{L}$ of $\textbf{M}$ and $\textbf{N}$, and the accompanying $\textbf{P}$ and $\textbf{Q}$ in (\ref{btl2}).

\begin{proposition}[\!\!\cite{PPV3}]\label{pr4}
Let $\textbf{M}$ and $\textbf{N}$ be two nonsingular integer matrices. When $\textbf{M}\textbf{N}=\textbf{N}\textbf{M}$, the following four statements are equivalent:
$1)$ $\textbf{M}$ and $\textbf{N}$ are right coprime;
$2)$ $\textbf{M}$ and $\textbf{N}$ are left coprime;
$3)$ $\textbf{M}\textbf{N}$ is an lcrm of $\textbf{M}$ and $\textbf{N}$; and
$4)$ $\textbf{M}\textbf{N}$ is an lclm of $\textbf{M}$ and $\textbf{N}$.
\end{proposition}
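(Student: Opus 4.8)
The plan is to prove the four statements equivalent through the cyclic chain $1)\Rightarrow 3)\Rightarrow 2)\Rightarrow 4)\Rightarrow 1)$. The backbone observation is that, because $\textbf{M}\textbf{N}=\textbf{N}\textbf{M}$, the product $\textbf{M}\textbf{N}$ is at once a common right multiple (crm) and a common left multiple (clm) of $\textbf{M}$ and $\textbf{N}$: it is a right multiple of $\textbf{M}$ (as $\textbf{M}\cdot\textbf{N}$) and of $\textbf{N}$ (as $\textbf{N}\cdot\textbf{M}$), and likewise a left multiple of $\textbf{N}$ (as $\textbf{M}\cdot\textbf{N}$) and of $\textbf{M}$ (as $\textbf{N}\cdot\textbf{M}$), with $|\text{det}(\textbf{M}\textbf{N})|=|\text{det}(\textbf{M})|\,|\text{det}(\textbf{N})|$. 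Commutativity will be used throughout via $(\textbf{M}\textbf{N})^{-1}=\textbf{M}^{-1}\textbf{N}^{-1}=\textbf{N}^{-1}\textbf{M}^{-1}$.

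For $1)\Rightarrow 3)$ I would argue directly with Bezout's theorem. By right coprimeness the gcrd is unimodular, so premultiplying (\ref{btl2}) by its inverse yields integer matrices $\textbf{P},\textbf{Q}$ with $\textbf{P}\textbf{M}+\textbf{Q}\textbf{N}=\textbf{I}$. Let $\textbf{A}=\textbf{M}\textbf{X}=\textbf{N}\textbf{Y}$ be an arbitrary crm with integer $\textbf{X},\textbf{Y}$. Using the commutativity identities $\textbf{M}(\textbf{M}\textbf{N})^{-1}=\textbf{N}^{-1}$ and $\textbf{N}(\textbf{M}\textbf{N})^{-1}=\textbf{M}^{-1}$, together with $\textbf{N}^{-1}\textbf{A}=\textbf{Y}$ and $\textbf{M}^{-1}\textbf{A}=\textbf{X}$, I get
\[
(\textbf{M}\textbf{N})^{-1}\textbf{A}=(\textbf{P}\textbf{M}+\textbf{Q}\textbf{N})(\textbf{M}\textbf{N})^{-1}\textbf{A}=\textbf{P}\textbf{Y}+\textbf{Q}\textbf{X},
\]
an integer matrix. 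Hence every crm is a right multiple of $\textbf{M}\textbf{N}$; since $\textbf{M}\textbf{N}$ is itself a crm, it is an lcrm, which is $3)$. The implication $2)\Rightarrow 4)$ is entirely analogous: starting from the gcld identity (\ref{btl}) $\textbf{M}\textbf{P}+\textbf{N}\textbf{Q}=\textbf{I}$ and forming $\textbf{A}(\textbf{M}\textbf{N})^{-1}$ for an arbitrary clm $\textbf{A}$ shows $\textbf{A}$ is a left multiple of $\textbf{M}\textbf{N}$, so $\textbf{M}\textbf{N}$ is an lclm. (Alternatively, $2)\Rightarrow 4)$ follows from $1)\Rightarrow 3)$ applied to $\textbf{M}^{T},\textbf{N}^{T}$, since transposition swaps left/right coprimeness and turns the lcrm of $\textbf{M}^{T},\textbf{N}^{T}$ into the lclm of $\textbf{M},\textbf{N}$.)

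For the remaining two implications I would invoke the integer-matrix analogue of the scalar identity $\text{lcm}\cdot\gcd=mn$: for nonsingular integer matrices an lcrm $\textbf{C}$ and a gcld $\textbf{G}$ of $\textbf{M},\textbf{N}$ satisfy $|\text{det}(\textbf{C})|\,|\text{det}(\textbf{G})|=|\text{det}(\textbf{M})|\,|\text{det}(\textbf{N})|$, and dually an lclm and a gcrd obey the same relation (standard in the divisor theory of \cite{PPV4}, and also derivable from the Smith normal form \cite{smith}). Granting this and recalling that all lcrm's share one absolute determinant, if $3)$ holds then $|\text{det}(\textbf{M})||\text{det}(\textbf{N})|=|\text{det}(\textbf{M}\textbf{N})|=|\text{det}(\textbf{C})|=|\text{det}(\textbf{M})||\text{det}(\textbf{N})|/|\text{det}(\textbf{G})|$, forcing $|\text{det}(\textbf{G})|=1$; thus the gcld is unimodular and $\textbf{M},\textbf{N}$ are left coprime, which is $2)$. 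The step $4)\Rightarrow 1)$ is identical with lcrm/gcld replaced by lclm/gcrd. Chaining $1)\Rightarrow 3)\Rightarrow 2)\Rightarrow 4)\Rightarrow 1)$ then closes the equivalence.

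The main obstacle is bridging the two natural worlds: right coprimeness is paired with the lclm and left coprimeness with the lcrm through the determinant identity, which needs no commutativity, whereas crossing between them (as in $1)\Rightarrow 3)$) is exactly where $\textbf{M}\textbf{N}=\textbf{N}\textbf{M}$ is indispensable. Commutativity enters at the simplifications $\textbf{M}(\textbf{M}\textbf{N})^{-1}=\textbf{N}^{-1}$ and $\textbf{N}(\textbf{M}\textbf{N})^{-1}=\textbf{M}^{-1}$ and, more basically, in the very fact that $\textbf{M}\textbf{N}$ is a common multiple at all; without it statements $3)$ and $4)$ need not even be meaningful and the equivalence fails. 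I would also track the left/right sidedness carefully throughout, namely which Bezout identity is used and on which side the unimodular freedom sits, since that is the easiest place to slip.
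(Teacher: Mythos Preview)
The paper does not prove Proposition~\ref{pr4}; it is quoted from \cite{PPV3} as background and no argument is supplied. So there is no ``paper's own proof'' to compare against. That said, your proposal is a correct self-contained proof and would serve as a fine justification.

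Your two engines are sound. The Bezout step for $1)\Rightarrow 3)$ is clean: with $\textbf{P}\textbf{M}+\textbf{Q}\textbf{N}=\textbf{I}$ and commutativity giving $\textbf{M}(\textbf{M}\textbf{N})^{-1}=\textbf{N}^{-1}$, $\textbf{N}(\textbf{M}\textbf{N})^{-1}=\textbf{M}^{-1}$, you indeed get $(\textbf{M}\textbf{N})^{-1}\textbf{A}=\textbf{P}\textbf{Y}+\textbf{Q}\textbf{X}$ integer for every crm $\textbf{A}=\textbf{M}\textbf{X}=\textbf{N}\textbf{Y}$, and the mirror argument handles $2)\Rightarrow 4)$. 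The determinant identity you invoke for $3)\Rightarrow 2)$ and $4)\Rightarrow 1)$ is also correct: identifying $\text{LAT}(\text{gcld})=\text{LAT}(\textbf{M})+\text{LAT}(\textbf{N})$ and $\text{LAT}(\text{lcrm})=\text{LAT}(\textbf{M})\cap\text{LAT}(\textbf{N})$ (the latter is Proposition~\ref{pr6}), the index relation $[\mathbb{Z}^D:\mathcal{A}\cap\mathcal{B}]\cdot[\mathbb{Z}^D:\mathcal{A}+\mathcal{B}]=[\mathbb{Z}^D:\mathcal{A}]\cdot[\mathbb{Z}^D:\mathcal{B}]$ for finite-index sublattices yields $|\det(\text{lcrm})|\,|\det(\text{gcld})|=|\det(\textbf{M})|\,|\det(\textbf{N})|$, and the lclm/gcrd version follows by transposition. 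One small presentational point: since the paper itself does not state or derive this determinant identity, in a standalone write-up you would want to justify it (e.g., via the lattice/index argument just sketched, or via Smith normal form) rather than merely cite \cite{PPV4}.
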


\begin{remark}
As stated in Proposition \ref{pr4}, when $\textbf{M}$ and $\textbf{N}$ are commutative, i.e., $\textbf{M}\textbf{N}=\textbf{N}\textbf{M}$, their left coprimeness and right coprimeness can imply each other, so we use the simpler term ``\textit{coprimeness}''. Similarly, when $\textbf{M}$ and $\textbf{N}$ are commutative and coprime, their product $\textbf{M}\textbf{N}$ is both an lcrm and an lclm, so we use the simpler term ``\textit{lcm}''.
For the $1$-dimensional case (i.e., integer case), Propositions \ref{pr3} and \ref{pr4} are well-known facts.
\end{remark}

Given a $D\times D$ nonsingular matrix $\textbf{M}$ (which is not necessarily an integer matrix), the set of all integer linear combinations of the columns of $\textbf{M}$, i.e.,
\begin{equation}
\text{LAT}(\textbf{M})=\left\{\textbf{M}\textbf{n}\,|\,\, \textbf{n} \text{ is an integer vector}\right\},
\end{equation}
is called the $D$-dimensional lattice generated by $\textbf{M}$, denoted as $\text{LAT}(\textbf{M})$.
The fundamental parallelepiped of $\text{LAT}(\textbf{M})$ is defined as the region:
\begin{equation}
\mathcal{F}_{\text{LAT}(\textbf{M})}=\left\{\textbf{M}\textbf{x}\,|\,\, \textbf{x}\in[0,1)^{D}\right\}.
\end{equation}
The shape of $\mathcal{F}_{\text{LAT}(\textbf{M})}$ defined above depends on the generating matrix $\textbf{M}$. All lattice cells of $\text{LAT}(\textbf{M})$ have the same volume equal to $|\text{det}(\textbf{M})|$ \cite{PPV4}. One can observe that $\mathcal{F}_{\text{LAT}(\textbf{M})}$ and its shifted copies (i.e., the other lattice cells) constitute the whole real vector space $\mathbb{R}^{D}$.
When $\textbf{M}$ is a nonsingular integer matrix,
we obtain $\mathcal{N}(\textbf{M})\subset\mathcal{F}_{\text{LAT}(\textbf{M})}$ and
$\mathcal{N}(\textbf{M})=\mathcal{F}_{\text{LAT}(\textbf{M})}\cap\mathbb{Z}^{D}$.

\begin{proposition}[\!\!\cite{lattice1}]\label{pr5}
Two nonsingular integer matrices $\textbf{M}$ and $\textbf{N}$ generate the same lattice, i.e., $\text{LAT}(\textbf{M})=\text{LAT}(\textbf{N})$, if and only if $\textbf{M}=\textbf{N}\textbf{P}$, where $\textbf{P}$ is a unimodular matrix.
\end{proposition}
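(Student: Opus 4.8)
The plan is to prove both directions directly from the definition $\text{LAT}(\textbf{M})=\{\textbf{M}\textbf{n}\,|\,\textbf{n}\in\mathbb{Z}^D\}$, making essential use of the fact recorded in item \romannumeral1) of the preliminaries that the inverse of a unimodular matrix is again an integer (indeed unimodular) matrix.

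For the sufficiency ($\Leftarrow$), I would assume $\textbf{M}=\textbf{N}\textbf{P}$ with $\textbf{P}$ unimodular and verify the two inclusions. Any lattice point $\textbf{M}\textbf{n}=\textbf{N}(\textbf{P}\textbf{n})$ lies in $\text{LAT}(\textbf{N})$ since $\textbf{P}\textbf{n}\in\mathbb{Z}^D$; conversely, because $\textbf{P}^{-1}$ is also an integer matrix, $\textbf{N}=\textbf{M}\textbf{P}^{-1}$ gives $\textbf{N}\textbf{n}=\textbf{M}(\textbf{P}^{-1}\textbf{n})\in\text{LAT}(\textbf{M})$. The two inclusions yield $\text{LAT}(\textbf{M})=\text{LAT}(\textbf{N})$.

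For the necessity ($\Rightarrow$), the key step is to express each generating matrix in terms of the other. Since every column $\textbf{m}_j$ of $\textbf{M}$ belongs to $\text{LAT}(\textbf{M})=\text{LAT}(\textbf{N})$, it is an integer combination of the columns of $\textbf{N}$, i.e., $\textbf{m}_j=\textbf{N}\textbf{p}_j$ for some $\textbf{p}_j\in\mathbb{Z}^D$; stacking these columns gives $\textbf{M}=\textbf{N}\textbf{P}$ with $\textbf{P}$ an integer matrix. The symmetric argument applied to the columns of $\textbf{N}$ likewise produces an integer matrix $\textbf{Q}$ with $\textbf{N}=\textbf{M}\textbf{Q}$.

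It then remains to show that $\textbf{P}$ is unimodular, which I expect to be the only point requiring care. Substituting gives $\textbf{M}=\textbf{N}\textbf{P}=\textbf{M}\textbf{Q}\textbf{P}$, and since $\textbf{M}$ is nonsingular, left-multiplying by $\textbf{M}^{-1}$ forces $\textbf{Q}\textbf{P}=\textbf{I}$. Taking determinants yields $\text{det}(\textbf{Q})\,\text{det}(\textbf{P})=1$; because $\textbf{P}$ and $\textbf{Q}$ are integer matrices, their determinants are integers, so the only possibility is $\text{det}(\textbf{P})=\text{det}(\textbf{Q})=\pm1$. Hence $|\text{det}(\textbf{P})|=1$, so $\textbf{P}$ is unimodular, which completes the proof.
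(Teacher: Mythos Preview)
Your proof is correct and is the standard argument for this classical fact. Note, however, that the paper does not supply its own proof of this proposition: it is stated as a known result with a citation to \cite{lattice1} and used without proof, so there is nothing in the paper to compare your argument against.
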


\begin{proposition}[\!\!\cite{lattice1}]\label{pr6}
Given two nonsingular integer matrices $\textbf{M}$ and $\textbf{N}$, let $\textbf{C}$ be an lcrm of $\textbf{M}$ and $\textbf{N}$. Then, $\text{LAT}(\textbf{C})=$ $\text{LAT}(\textbf{M})\cap\text{LAT}(\textbf{N})$.
\end{proposition}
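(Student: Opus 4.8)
The plan is to pass between the lattice-theoretic language and the divisor/multiple language via the elementary fact that, for nonsingular integer matrices $\textbf{A}$ and $\textbf{B}$, the containment $\text{LAT}(\textbf{A})\subseteq\text{LAT}(\textbf{B})$ holds if and only if $\textbf{A}=\textbf{B}\textbf{S}$ for some integer matrix $\textbf{S}$. Reading column by column, each column of $\textbf{A}$ lies in $\text{LAT}(\textbf{B})$ exactly when it is an integer combination of the columns of $\textbf{B}$, which assembles into $\textbf{A}=\textbf{B}\textbf{S}$. This is the natural generalization of Proposition \ref{pr5}, which handles the special case of equality with $\textbf{S}$ unimodular, and I would record it first as the dictionary used throughout.

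Next I would dispatch the easy inclusion. Because $\textbf{C}$ is a common right multiple of $\textbf{M}$ and $\textbf{N}$, we may write $\textbf{C}=\textbf{M}\textbf{Q}_1=\textbf{N}\textbf{Q}_2$ with $\textbf{Q}_1,\textbf{Q}_2$ integer matrices. Hence for every integer vector $\textbf{n}$ we have $\textbf{C}\textbf{n}=\textbf{M}(\textbf{Q}_1\textbf{n})=\textbf{N}(\textbf{Q}_2\textbf{n})$, which lies in both $\text{LAT}(\textbf{M})$ and $\text{LAT}(\textbf{N})$. Therefore $\text{LAT}(\textbf{C})\subseteq\text{LAT}(\textbf{M})\cap\text{LAT}(\textbf{N})$.

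For the reverse inclusion I would first argue that $\mathcal{I}:=\text{LAT}(\textbf{M})\cap\text{LAT}(\textbf{N})$ is itself a full-rank integer lattice, so that $\mathcal{I}=\text{LAT}(\textbf{G})$ for some nonsingular integer matrix $\textbf{G}$. Indeed $\mathcal{I}$ is a subgroup of $\mathbb{Z}^{D}$, hence free abelian of rank at most $D$; and since $\textbf{M}\,\text{adj}(\textbf{M})=\text{det}(\textbf{M})\textbf{I}$ gives $\text{det}(\textbf{M})\,\mathbb{Z}^{D}\subseteq\text{LAT}(\textbf{M})$ (and likewise for $\textbf{N}$), the intersection contains $\text{det}(\textbf{M})\,\text{det}(\textbf{N})\,\mathbb{Z}^{D}$ and is therefore full rank. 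Taking the columns of $\textbf{G}$ to be a basis of $\mathcal{I}$, every column lies in $\text{LAT}(\textbf{M})$ and in $\text{LAT}(\textbf{N})$, so $\textbf{G}=\textbf{M}\textbf{Q}_1'=\textbf{N}\textbf{Q}_2'$ for integer matrices $\textbf{Q}_1',\textbf{Q}_2'$; thus $\textbf{G}$ is a nonsingular integer common right multiple of $\textbf{M}$ and $\textbf{N}$.

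The final step invokes the minimality built into the lcrm. Since $\textbf{C}$ is an lcrm and $\textbf{G}$ is a crm, $\textbf{G}$ is a right multiple of $\textbf{C}$, i.e., $\textbf{G}=\textbf{C}\textbf{S}$ for some integer matrix $\textbf{S}$; by the dictionary this yields $\text{LAT}(\textbf{M})\cap\text{LAT}(\textbf{N})=\text{LAT}(\textbf{G})\subseteq\text{LAT}(\textbf{C})$, and combining with the easy inclusion gives $\text{LAT}(\textbf{C})=\text{LAT}(\textbf{M})\cap\text{LAT}(\textbf{N})$. I expect the main obstacle to be exactly this reverse inclusion, specifically the bookkeeping that justifies $\mathcal{I}$ is a full-rank integer lattice and converts column containment into the right-multiple relation; once that dictionary is in place, the minimality of the lcrm closes the argument in one line. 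As an alternative that avoids relying on the stated uniqueness of lcrm, I could instead verify directly that the $\textbf{G}$ above satisfies the defining ``least'' property among all crm's (any crm $\textbf{A}$ has $\text{LAT}(\textbf{A})\subseteq\mathcal{I}=\text{LAT}(\textbf{G})$, hence $\textbf{A}=\textbf{G}\textbf{S}$), so $\textbf{G}$ is an lcrm, and then conclude via Proposition \ref{pr5} that $\textbf{C}$ and $\textbf{G}$ generate the same lattice.
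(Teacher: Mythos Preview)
Your argument is correct. The paper, however, does not supply its own proof of this proposition: it is listed among the preliminary facts in Section~\ref{sec2} and simply cited from \cite{lattice1}. So there is no ``paper's proof'' to compare against; you have filled in what the authors take as known.

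Substantively, your route is the standard one and each step is sound. The dictionary $\text{LAT}(\textbf{A})\subseteq\text{LAT}(\textbf{B})\Leftrightarrow \textbf{A}=\textbf{B}\textbf{S}$ with $\textbf{S}$ integer is exactly right, the easy inclusion follows immediately from the definition of crm, and your justification that $\text{LAT}(\textbf{M})\cap\text{LAT}(\textbf{N})$ has full rank via $\text{det}(\textbf{M})\,\text{det}(\textbf{N})\,\mathbb{Z}^{D}\subseteq\mathcal{I}$ is clean. The closing step---any crm is a right multiple of the lcrm, hence $\text{LAT}(\textbf{G})\subseteq\text{LAT}(\textbf{C})$---matches the paper's definition of lcrm verbatim. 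Your alternative ending (showing $\textbf{G}$ itself is an lcrm and then invoking Proposition~\ref{pr5}) is also fine and arguably more self-contained.
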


\section{MD-CRT for Integer Vectors}\label{sec3}
The well-known CRT for integers allows the reconstruction of a large nonnegative integer from its remainders with respect to a general set of moduli (namely a set of arbitrary positive integers), and it has been successfully applied in $1$-dimensional signal processing, cryptography, parallel arithmetic computing, coding theory, etc.; see \cite{crt_integer,crt_integer1,crt_integer2} and references therein. In this section, as a natural extension of the CRT for integers, the MD-CRT for integer vectors is systematically studied, which provides a reconstruction algorithm for an integer vector from its remainders with respect to a general set of moduli (namely a set of arbitrary nonsingular integer matrices), and possesses potential usefulness in MD signal processing. To begin with, we briefly revisit the CRT for integers as follows.

\begin{proposition}[CRT for integers \cite{crt_integer2}]\label{crt_integer}
Given $L$ moduli $M_i$ for $1\leq i\leq L$, which are arbitrary positive integers, let
$R$ be their lcm. For an integer $m\in\mathcal{N}(R)$ (i.e., $0\leq m<R$), we can uniquely reconstruct $m$ from its remainders $r_i=\langle m\rangle_{M_i}$ as
\begin{equation}
m=\left\langle\sum_{i=1}^{L}W_i\widehat{W}_ir_i\right\rangle_{R},
\end{equation}
where $W_i=R/N_i$, $\widehat{W}_i$ is the modular multiplicative inverse of $W_i$ modulo $N_i$, i.e.,
$W_i\widehat{W}_i\equiv 1 \!\!\mod N_i$,
(or equivalently, $\widehat{W}_i$ is some integer satisfying
\begin{equation}
W_i\widehat{W}_i+N_iQ_i=1
\end{equation}
for some integer $Q_i$), if $N_i\neq1$, else $\widehat{W}_i=0$, and $N_1,N_2,\cdots,N_L$ are taken to be any $L$ pairwise coprime positive integers such that $R=N_1N_2\cdots N_L$ and $N_i$ divides $M_i$ for each $1\leq i\leq L$.
\end{proposition}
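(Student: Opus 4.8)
The plan is to reduce the general (non-coprime) situation to the pairwise-coprime setting by exploiting the auxiliary moduli $N_i$. First I would record the two divisibility facts that drive the formula: since $R$ is the lcm of the $M_i$, each $M_i$ divides $R$, and by hypothesis each $N_i$ divides $M_i$, so $N_i \mid M_i \mid R$. The first consequence is that the observed remainder $r_i=\langle m\rangle_{M_i}$ already pins down the residue of $m$ modulo $N_i$: writing $m=M_i n_i + r_i$ and reducing modulo $N_i$ gives $m\equiv r_i \!\!\mod N_i$ for every $i$. Thus even though we only observe remainders with respect to the $M_i$, we have effectively recovered the residues of $m$ modulo the $N_i$, which is exactly what the coprime machinery consumes.

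Second, I would verify that the weights $W_i\widehat{W}_i$ act as orthogonal selectors across the moduli $N_j$. Because the $N_j$ are pairwise coprime, $W_i=R/N_i=\prod_{j\neq i}N_j$ is divisible by every $N_j$ with $j\neq i$, so $W_i\equiv 0 \!\!\mod N_j$ for $j\neq i$; moreover $W_i$ is coprime to $N_i$ (a product of factors each coprime to $N_i$), so its inverse $\widehat{W}_i$ modulo $N_i$ exists and $W_i\widehat{W}_i\equiv 1 \!\!\mod N_i$. The degenerate branch $N_i=1$ is harmless, since every integer is $\equiv 0 \!\!\mod 1$, so the convention $\widehat{W}_i=0$ contributes nothing and stays consistent. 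Combining these, $W_i\widehat{W}_i\equiv \delta_{ij}\!\!\mod N_j$, in the sense that it is $1$ modulo $N_i$ and $0$ modulo $N_j$ for $j\neq i$.

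Setting $x=\sum_{i=1}^{L}W_i\widehat{W}_i r_i$, the main computation is to evaluate $x$ modulo each $N_j$: every term with $i\neq j$ vanishes, leaving $x\equiv r_j\equiv m \!\!\mod N_j$ for all $j$, where the last congruence is the first step. Since the $N_j$ are pairwise coprime with product $R=N_1N_2\cdots N_L$, two integers that agree modulo every $N_j$ must agree modulo their product; hence $x\equiv m \!\!\mod R$. Finally, because $m\in\mathcal{N}(R)$ means $0\le m<R$, reducing $x$ into this range yields exactly $m=\langle x\rangle_R$, which is the claimed formula, and the same ``agreement modulo $R$'' statement gives uniqueness of the reconstruction.

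I expect the only genuinely delicate point to be the existence and legitimacy of the factorization $R=N_1\cdots N_L$ into pairwise coprime factors with $N_i\mid M_i$. This is assumed in the statement, but it deserves justification (e.g., in a preliminary remark): taking the prime-power factorization of $R=\operatorname{lcm}(M_1,\dots,M_L)$, one assigns each prime power $p^{a}$ occurring in $R$ to some $M_i$ whose $p$-adic valuation equals $a$, and such an $M_i$ exists precisely because $R$ is the lcm. Everything else is the standard selector argument, and none of the nondiagonal/noncommutative subtleties that complicate the matrix MD case arise here, since this proposition is purely the scalar warm-up.
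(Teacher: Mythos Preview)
Your argument is correct, but note that the paper does not actually prove Proposition~\ref{crt_integer}: it is stated as a known result from \cite{crt_integer2} and serves only as the scalar warm-up before the MD generalizations. There is therefore no ``paper's own proof'' to compare against for this particular statement.

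That said, your proof is precisely the scalar specialization of the argument the paper gives in Appendix~C for Corollary~\ref{cor1}, the MD analogue. There too the proof proceeds by (i) passing from the remainders $\textbf{r}_i$ modulo $\textbf{M}_i$ to residues $\bm{\xi}_i=\langle\textbf{r}_i\rangle_{\textbf{N}_i}$ modulo the coprime auxiliary moduli $\textbf{N}_i$, and (ii) verifying directly that $\langle\sum_i\textbf{W}_i\widehat{\textbf{W}}_i\textbf{r}_i\rangle_{\textbf{N}_j}=\bm{\xi}_j$ using the selector property $\textbf{W}_i\widehat{\textbf{W}}_i\equiv\textbf{I}\!\!\mod\textbf{N}_i$ and $\textbf{W}_i\equiv\textbf{0}\!\!\mod\textbf{N}_j$ for $j\neq i$. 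Your final paragraph on the existence of the factorization $R=N_1\cdots N_L$ via prime-power assignment is a nice addition; the paper simply assumes this factorization in the hypothesis and does not justify it, so you have been slightly more careful than the source.
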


It is worth noting that when the moduli $M_1, M_2,\cdots,M_L$ are pairwise coprime, we can take $N_i=M_i$ for $1\leq i\leq L$, and then Proposition \ref{crt_integer} reduces to the CRT for integers with respect to pairwise coprime moduli.

We next extend the CRT for integers to the integer vector reconstruction problem. We call it the MD-CRT for integer vectors. The non-commutativity of matrix multiplication prevents many results for integers from being clearly established for integer vectors and integer matrices. For this reason, it is necessary to explicitly derive the MD-CRT for integer vectors in this paper. Before presenting the main results, we first give the following lemma, which will be used in the sequel.

\begin{lemma}\label{lem1}
Given integer matrices $\textbf{M}_1,\textbf{M}_2,\cdots,\textbf{M}_L$, if $\textbf{B}$ is an lcrm of $\textbf{M}_1,\textbf{M}_2,\cdots,\textbf{M}_{L-1}$, and $\textbf{R}$ is an lcrm of $\textbf{B}$ and $\textbf{M}_L$, then $\textbf{R}$ is an lcrm of $\textbf{M}_1,\textbf{M}_2,\cdots,\textbf{M}_L$. In addition, a similar statement holds when lcrm above is replaced with lclm.
\end{lemma}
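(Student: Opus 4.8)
The plan is to establish the lcrm claim by verifying the two defining properties of a least common right multiple directly from the definitions in items \romannumeral3) and \romannumeral5): namely, that $\textbf{R}$ is a common right multiple (crm) of $\textbf{M}_1,\textbf{M}_2,\cdots,\textbf{M}_L$, and that it is minimal in the sense that every other crm of these matrices is a right multiple of $\textbf{R}$.

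First I would prove that $\textbf{R}$ is a crm of all $L$ moduli. Since $\textbf{R}$ is an lcrm of $\textbf{B}$ and $\textbf{M}_L$, by definition $\textbf{R}=\textbf{B}\textbf{S}$ and $\textbf{R}=\textbf{M}_L\textbf{T}$ for some integer matrices $\textbf{S},\textbf{T}$; in particular $\textbf{R}$ is a right multiple of $\textbf{M}_L$. For the remaining moduli, I use that $\textbf{B}$ is an lcrm (hence a crm) of $\textbf{M}_1,\cdots,\textbf{M}_{L-1}$, so $\textbf{B}=\textbf{M}_i\textbf{Q}_i$ for each $1\leq i\leq L-1$. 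Composing with $\textbf{R}=\textbf{B}\textbf{S}$ gives $\textbf{R}=\textbf{M}_i\textbf{Q}_i\textbf{S}=\textbf{M}_i(\textbf{Q}_i\textbf{S})$, and since $\textbf{Q}_i\textbf{S}$ is an integer matrix, $\textbf{R}$ is a right multiple of each $\textbf{M}_i$. Thus $\textbf{R}$ is a crm of all $L$ moduli.

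Next I would verify minimality. Let $\textbf{A}$ be any crm of $\textbf{M}_1,\cdots,\textbf{M}_L$; I must show $\textbf{A}$ is a right multiple of $\textbf{R}$. Since $\textbf{A}$ is a right multiple of each of $\textbf{M}_1,\cdots,\textbf{M}_{L-1}$, it is a crm of these $L-1$ matrices, and because $\textbf{B}$ is their \emph{lclm}\,---\,rather, their lcrm\,---\,minimality of $\textbf{B}$ forces $\textbf{A}$ to be a right multiple of $\textbf{B}$, say $\textbf{A}=\textbf{B}\textbf{X}$ for an integer matrix $\textbf{X}$. Now $\textbf{A}$ is simultaneously a right multiple of $\textbf{B}$ and of $\textbf{M}_L$, hence a crm of $\textbf{B}$ and $\textbf{M}_L$. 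Applying the minimality of $\textbf{R}$ as an lcrm of $\textbf{B}$ and $\textbf{M}_L$, we conclude that $\textbf{A}$ is a right multiple of $\textbf{R}$, as required. This completes the verification that $\textbf{R}$ is an lcrm of all $L$ moduli.

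The argument is essentially a transitivity-of-divisibility bookkeeping exercise, so I do not expect a serious obstacle; the one point demanding care is the direction of the divisibility relations. Because matrix multiplication is noncommutative, I must consistently track that ``right multiple'' means a factorization with the modulus on the \emph{left}, and correspondingly that the minimality clause of the lcrm definition yields a left factor equal to the lcrm. The potential pitfall is silently swapping left/right factors when composing $\textbf{R}=\textbf{B}\textbf{S}$ with $\textbf{B}=\textbf{M}_i\textbf{Q}_i$; keeping the factor orders fixed throughout avoids this. Finally, the lclm analogue follows by the entirely symmetric argument, replacing every right multiple / lcrm by its left counterpart and reversing the order of the factors, so I would state it holds \emph{mutatis mutandis} rather than rewriting the proof.
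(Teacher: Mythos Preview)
Your proposal is correct and follows essentially the same approach as the paper's proof: both spell out that $\textbf{R}$ is a crm of all $L$ moduli (the paper simply calls this ``obvious'' while you write out the factorizations), and then, for any other crm, use minimality of $\textbf{B}$ among the first $L-1$ moduli followed by minimality of $\textbf{R}$ among $\textbf{B}$ and $\textbf{M}_L$ to conclude it is a right multiple of $\textbf{R}$. The lclm case is likewise dispatched by symmetry in both versions.
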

\begin{proof}
See Appendix \ref{ap1}.
\end{proof}

We then have the following result.
\begin{theorem}[MD-CRT for integer vectors]\label{them1}
Given $L$ moduli $\textbf{M}_i$ for $1\leq i\leq L$, which are arbitrary nonsingular integer matrices, let $\textbf{R}$ be anyone of their lcrm's. For an integer vector $\textbf{m}\in\mathcal{N}(\textbf{R})$, we can uniquely reconstruct $\textbf{m}$ from its remainders $\textbf{r}_i=\langle \textbf{m}\rangle_{\textbf{M}_i}$.
\end{theorem}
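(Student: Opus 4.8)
The plan is to read the theorem as a uniqueness (injectivity) statement: the remainder map $\textbf{m}\mapsto(\textbf{r}_1,\dots,\textbf{r}_L)$ is injective on $\mathcal{N}(\textbf{R})$, so that the remainders determine $\textbf{m}$ uniquely; existence of a recovered vector is automatic since the $\textbf{r}_i$ are assumed to come from an actual $\textbf{m}\in\mathcal{N}(\textbf{R})$. The central tool will be the lattice-intersection description of an lcrm furnished by Proposition \ref{pr6}, lifted from two moduli to $L$ moduli by Lemma \ref{lem1}. The strategy is to convert ``same remainders'' into a statement about the difference lying in an intersection of lattices, identify that intersection with $\text{LAT}(\textbf{R})$, and then collapse a lattice vector confined to the fundamental parallelepiped to zero.

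First I would establish the lattice identity $\bigcap_{i=1}^{L}\text{LAT}(\textbf{M}_i)=\text{LAT}(\textbf{R})$ by induction on $L$. The base case $L=2$ is exactly Proposition \ref{pr6}. For the inductive step, let $\textbf{B}$ be an lcrm of $\textbf{M}_1,\dots,\textbf{M}_{L-1}$, so that $\text{LAT}(\textbf{B})=\bigcap_{i=1}^{L-1}\text{LAT}(\textbf{M}_i)$ by the induction hypothesis; taking $\textbf{R}$ to be an lcrm of $\textbf{B}$ and $\textbf{M}_L$, Proposition \ref{pr6} gives $\text{LAT}(\textbf{R})=\text{LAT}(\textbf{B})\cap\text{LAT}(\textbf{M}_L)=\bigcap_{i=1}^{L}\text{LAT}(\textbf{M}_i)$, while Lemma \ref{lem1} certifies that this $\textbf{R}$ is genuinely an lcrm of the full collection $\textbf{M}_1,\dots,\textbf{M}_L$. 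I would also note that any two lcrm's of the $\textbf{M}_i$ differ by postmultiplication by a unimodular matrix, so by Proposition \ref{pr5} they generate the same lattice; hence the identity, and therefore the theorem, is independent of which lcrm is chosen.

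Second, I would run the injectivity argument. Suppose $\textbf{m},\textbf{m}'\in\mathcal{N}(\textbf{R})$ share all remainders, i.e. $\langle\textbf{m}\rangle_{\textbf{M}_i}=\langle\textbf{m}'\rangle_{\textbf{M}_i}=\textbf{r}_i$ for every $i$. Writing $\textbf{m}=\textbf{M}_i\textbf{n}_i+\textbf{r}_i$ and $\textbf{m}'=\textbf{M}_i\textbf{n}_i'+\textbf{r}_i$ and subtracting yields $\textbf{m}-\textbf{m}'=\textbf{M}_i(\textbf{n}_i-\textbf{n}_i')\in\text{LAT}(\textbf{M}_i)$ for each $i$, whence $\textbf{m}-\textbf{m}'\in\bigcap_{i=1}^{L}\text{LAT}(\textbf{M}_i)=\text{LAT}(\textbf{R})$, so $\textbf{m}-\textbf{m}'=\textbf{R}\textbf{k}$ for some $\textbf{k}\in\mathbb{Z}^D$. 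On the other hand, membership in $\mathcal{N}(\textbf{R})$ lets me write $\textbf{m}=\textbf{R}\textbf{x}$ and $\textbf{m}'=\textbf{R}\textbf{x}'$ with $\textbf{x},\textbf{x}'\in[0,1)^D$, so $\textbf{R}(\textbf{x}-\textbf{x}')=\textbf{R}\textbf{k}$. Nonsingularity of $\textbf{R}$ forces $\textbf{x}-\textbf{x}'=\textbf{k}\in\mathbb{Z}^D$, yet $\textbf{x}-\textbf{x}'\in(-1,1)^D$, so $\textbf{k}=\textbf{0}$ and hence $\textbf{m}=\textbf{m}'$, giving uniqueness.

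I expect the only genuinely delicate point to be the multi-modulus lattice identity of the first step: verifying that the inductively chosen $\textbf{R}$ is simultaneously an lcrm of all $L$ moduli (handled by Lemma \ref{lem1}) and that the conclusion is insensitive to the choice of lcrm (handled by Proposition \ref{pr5} together with the postmultiplication-uniqueness of lcrm's). The remaining moves — passing from ``equal remainders'' to ``difference in each $\text{LAT}(\textbf{M}_i)$'', and then annihilating a lattice vector that also lies in $\mathcal{F}_{\text{LAT}(\textbf{R})}$ — are routine once the lattice picture is in place.
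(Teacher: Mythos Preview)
Your uniqueness argument is correct and in fact coincides with the paper's own uniqueness step almost verbatim: both show that $\textbf{m}-\textbf{m}'\in\bigcap_i\text{LAT}(\textbf{M}_i)=\text{LAT}(\textbf{R})$ via Proposition~\ref{pr6} and Lemma~\ref{lem1}, then kill the lattice vector using $\textbf{R}^{-1}(\textbf{m}-\textbf{m}')\in(-1,1)^D$ and nonsingularity of $\textbf{R}$. Your inductive derivation of the $L$-fold lattice identity is a slightly cleaner packaging of what the paper uses.

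Where your approach differs is that you deliberately read the theorem as a pure injectivity statement and skip any constructive content, whereas the paper's proof is mainly devoted to building an explicit reconstruction algorithm. The paper uses Bezout's theorem (Proposition~\ref{pr3}) on the first two moduli to write down a concrete simultaneous solution $\textbf{m}_1=\textbf{M}_2\textbf{P}_2\textbf{G}_1^{-1}\textbf{r}_1+\textbf{M}_1\textbf{P}_1\textbf{G}_1^{-1}\textbf{r}_2$ (with $\textbf{G}_1$ a gcld), verifies it satisfies both congruences, reduces modulo an lcrm $\textbf{R}_1$, and then cascades this merge through the remaining congruences one at a time; the final $\langle\textbf{m}_{L-1}\rangle_{\textbf{R}}$ is the reconstructed vector. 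Only after this construction does the paper append the uniqueness argument you gave. So your route is shorter and fully adequate for the theorem as stated, while the paper's route buys an actual algorithm --- which the paper explicitly relies on later (e.g., in the worked example and in the robust MD-CRT algorithms).
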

\begin{proof}
Let $\textbf{G}_1$ and $\textbf{R}_1$ be a gcld and an lcrm of $\textbf{M}_1$ and $\textbf{M}_2$, respectively. Based on the Bezout's theorem in Proposition \ref{pr3}, we have, for some integer matrices $\textbf{P}_1$ and $\textbf{P}_2$, $\textbf{M}_1\textbf{P}_1+\textbf{M}_2\textbf{P}_2=\textbf{G}_1$, on both sides of which
we right-multiply $\textbf{G}_1^{-1}$ and obtain
\begin{equation}\label{them1eq2}
\textbf{M}_1\textbf{P}_1\textbf{G}_1^{-1}+\textbf{M}_2\textbf{P}_2\textbf{G}_1^{-1}=\textbf{I}.
\end{equation}
Let
\begin{equation}\label{them1eq3}
\textbf{m}_1=\textbf{M}_2\textbf{P}_2\textbf{G}_1^{-1}\textbf{r}_1+\textbf{M}_1\textbf{P}_1\textbf{G}_1^{-1}\textbf{r}_2.
\end{equation}
We next prove that $\textbf{m}_1$ given in (\ref{them1eq3}) is a solution of a system of congruences as follows:
\begin{equation}\label{them1eq4}
\left\{\begin{array}{ll}
\textbf{m}\equiv \textbf{r}_1 \!\!\mod \textbf{M}_1\\
\textbf{m}\equiv \textbf{r}_2 \!\!\mod \textbf{M}_2.\\
\end{array}\right.
\end{equation}
From (\ref{them1eq2}), we can rewrite (\ref{them1eq3}) as
\begin{equation} \label{them1eq5}
\begin{split}
\textbf{m}_1
 & = (\textbf{I}-\textbf{M}_1\textbf{P}_1\textbf{G}_1^{-1})\textbf{r}_1+\textbf{M}_1\textbf{P}_1\textbf{G}_1^{-1}\textbf{r}_2\\
 & = \textbf{r}_1+\textbf{M}_1\textbf{P}_1\textbf{G}_1^{-1}(\textbf{r}_2-\textbf{r}_1).
\end{split}
\end{equation}
One can see from (\ref{them1eq4}) that $\textbf{M}_1\textbf{n}_1-\textbf{M}_2\textbf{n}_2=\textbf{r}_2-\textbf{r}_1$ holds for some integer vectors $\textbf{n}_1$ and $\textbf{n}_2$, and thus we have $\textbf{G}_1^{-1}(\textbf{r}_2-\textbf{r}_1)$ $=\textbf{G}_1^{-1}\textbf{M}_1\textbf{n}_1-\textbf{G}_1^{-1}\textbf{M}_2\textbf{n}_2$.
Since $\textbf{G}_1$ is a gcld of $\textbf{M}_1$ and $\textbf{M}_2$, we know that $\textbf{G}_1^{-1}\textbf{M}_1$ and $\textbf{G}_1^{-1}\textbf{M}_2$ are integer matrices, and thus $\textbf{G}_1^{-1}(\textbf{r}_2-\textbf{r}_1)$ is an integer vector. Therefore,
$\textbf{m}_1$ given in (\ref{them1eq3}) is an integer vector, and
we have, from (\ref{them1eq5}), $\textbf{m}_1\equiv \textbf{r}_1 \!\!\mod \textbf{M}_1$.
Similarly, we can rewrite (\ref{them1eq3}) as $\textbf{m}_1=\textbf{r}_2+\textbf{M}_2\textbf{P}_2\textbf{G}_1^{-1}(\textbf{r}_1-\textbf{r}_2)$,
and $\textbf{m}_1$ given in (\ref{them1eq3}) satisfies $\textbf{m}_1\equiv \textbf{r}_2 \!\!\mod \textbf{M}_2$.
That is to say, $\textbf{m}_1$ given in (\ref{them1eq3}) is a solution of the system of congruences in (\ref{them1eq4}). Thus, we have $\textbf{m}-\textbf{m}_1\in\text{LAT}(\textbf{M}_1)$ and $\textbf{m}-\textbf{m}_1\in\text{LAT}(\textbf{M}_2)$. From Proposition \ref{pr6}, we have $\textbf{m}-\textbf{m}_1\in\text{LAT}(\textbf{R}_1)$, i.e., $\textbf{m}\equiv \textbf{m}_1 \!\!\mod \textbf{R}_1$.
Based on the cascade architecture of the congruences, we can accordingly calculate a solution $\textbf{m}_2$ of
\begin{equation}
\left\{\begin{array}{ll}
\textbf{m}\equiv \textbf{m}_1 \!\!\mod \textbf{R}_1\\
\textbf{m}\equiv \textbf{r}_3 \!\!\mod \textbf{M}_3.\\
\end{array}\right.
\end{equation}
Letting $\textbf{R}_2$ be an lcrm of $\textbf{R}_1$ and $\textbf{M}_3$, we have $\textbf{m}\equiv \textbf{m}_2 \!\!\mod \textbf{R}_2$.
Moreover, from Lemma \ref{lem1}, $\textbf{R}_2$ is an lcrm of $\textbf{M}_1,\textbf{M}_2$, and $\textbf{M}_3$.
Following the above procedure, we merge two congruences at a time until we calculate a solution $\textbf{m}_{L-1}$ of
\begin{equation}
\left\{\begin{array}{ll}
\textbf{m}\equiv \textbf{m}_{L-2} \!\!\mod \textbf{R}_{L-2}\\
\textbf{m}\equiv \textbf{r}_L \!\!\mod \textbf{M}_L,\\
\end{array}\right.
\end{equation}
where $\textbf{R}_{L-2}$ is an lcrm of $\textbf{M}_1,\textbf{M}_2,\cdots,\textbf{M}_{L-1}$.
Let $\textbf{R}_{L-1}$ be an lcrm of $\textbf{R}_{L-2}$ and $\textbf{M}_L$, and we have $\textbf{m}\equiv \textbf{m}_{L-1} \!\!\mod \textbf{R}_{L-1}$,
where we readily know from Lemma \ref{lem1} that $\textbf{R}_{L-1}$ is an lcrm of $\textbf{M}_1,\textbf{M}_2,\cdots,\textbf{M}_L$. Without loss of generality, we can let $\textbf{R}_{L-1}=\textbf{R}$. So, we can get $\textbf{m}\in\mathcal{N}(\textbf{R})$ as
\begin{equation}
\textbf{m}=\langle\textbf{m}_{L-1}\rangle_{\textbf{R}}.
\end{equation}
Finally, we prove the uniqueness of the solution for $\textbf{m}$ modulo $\textbf{R}$. Assume that there exists another solution $\textbf{m}'\in\mathcal{N}(\textbf{R})$ that satisfies $\textbf{r}_i=\langle\textbf{m}'\rangle_{\textbf{M}_i}$ for $1\leq i\leq L$. Let $\textbf{m}''=\textbf{m}-\textbf{m}'$. We know
$\textbf{m}''\equiv \bm{0} \!\!\mod \textbf{M}_i$ for $1\leq i\leq L$, that is,
\begin{equation}
\textbf{m}''\in\text{LAT}(\textbf{M}_1)\cap\text{LAT}(\textbf{M}_1)\cap\cdots\cap\text{LAT}(\textbf{M}_L)=\text{LAT}(\textbf{R}),
\end{equation}
where the last equality is valid due to Proposition \ref{pr6} and Lemma \ref{lem1}.
Hence, we have $\textbf{m}''\in\text{LAT}(\textbf{R})$, i.e.,
\begin{equation}\label{christ}
\textbf{m}''=\textbf{R}\textbf{k}\; \text{ for some integer vector }\textbf{k}.
\end{equation}
Since $\textbf{m},\textbf{m}'\in\mathcal{N}(\textbf{R})$ and $\textbf{m}''=\textbf{m}-\textbf{m}'$, we have
\begin{equation}
\textbf{m}''\in\{\textbf{n}\,|\, \textbf{n}=\textbf{R}\textbf{x}, \textbf{x}\in(-1,1)^{D}\text{ and }\textbf{n}\in\mathbb{Z}^{D}\},
\end{equation}
where $D$ is the length of $\textbf{m}''$. Since $\textbf{R}$ is nonsingular from the definition of lcrm, this implies $\textbf{k}=\bm{0}$ in (\ref{christ}), and thus $\textbf{m}''=\bm{0}$. The proof is completed.
\end{proof}

As it can be seen in the proof of Theorem \ref{them1}, a reconstruction algorithm for the MD-CRT for integer vectors is given as well, which solves the first two congruences, uses that result as the remainder with respect to an lcrm of the first two moduli, and combines this new congruence with the third congruence, and so on.
We assume that there exist $L$ pairwise commutative and coprime integer matrices, denoted by $\textbf{N}_1,\textbf{N}_2,\cdots,\textbf{N}_L$, such that $\textbf{R}=\textbf{N}_1\textbf{N}_2\cdots\textbf{N}_L\textbf{U}$ for some unimodular matrix $\textbf{U}$ and $\textbf{N}_i$ is a left divisor of $\textbf{M}_i$ for each $1\leq i\leq L$ in Theorem \ref{them1}.
Under this assumption, we can derive a simple reconstruction formula for the MD-CRT for integer vectors as follows.

\begin{lemma}\label{lem2}
Let $\textbf{N}_i$ for $1\leq i\leq L$ be $L$ nonsingular integer matrices, which are pairwise commutative and coprime, i.e., $\textbf{N}_i\textbf{N}_j=\textbf{N}_j\textbf{N}_i$, and $\textbf{N}_i$ and $\textbf{N}_j$ are coprime for each pair of $i$ and $j$, $1\leq i\neq j\leq L$. Then,
$\textbf{N}_{i_1}\textbf{N}_{i_2}\cdots\textbf{N}_{i_p}$ and $\textbf{N}_{j_1}\textbf{N}_{j_2}\cdots\textbf{N}_{j_q}$ are commutative and coprime for any subsets $\{i_1,i_2,\cdots,i_p\}\subset\{1,2,\cdots,L\}$ and $\{j_1,j_2,\cdots,j_q\}\subset \{1,2,\cdots,L\} \backslash\{i_1,i_2,\cdots,i_p\}$. Moreover, $\textbf{N}_{i_1}\textbf{N}_{i_2}\cdots\textbf{N}_{i_p}$ is an lcm of $\textbf{N}_{i_1}, \textbf{N}_{i_2}, \cdots, \textbf{N}_{i_p}$ for any subset $\{i_1,i_2,\cdots,i_p\}\subset\{1,2,\cdots,L\}$ with $p\geq2$.
\end{lemma}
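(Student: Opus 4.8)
The plan is to isolate a single building block --- that coprimeness is preserved under taking products --- and bootstrap both assertions of the lemma from it by induction, treating the lcm claim separately via Proposition \ref{pr4} and Lemma \ref{lem1}. First I would dispose of commutativity, which is routine: since the $\textbf{N}_i$'s commute pairwise, any two products $\textbf{N}_{i_1}\cdots\textbf{N}_{i_p}$ and $\textbf{N}_{j_1}\cdots\textbf{N}_{j_q}$ formed over disjoint index sets commute, because each factor of one commutes with each factor of the other.

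The crux is the following building block: if nonsingular integer matrices $\textbf{A}$, $\textbf{B}$, $\textbf{C}$ are pairwise commutative and $\textbf{A}$ is coprime to both $\textbf{B}$ and $\textbf{C}$, then $\textbf{A}$ is coprime to $\textbf{B}\textbf{C}$. I would prove this with Bezout's theorem (Proposition \ref{pr3}). Using right coprimeness (available by Proposition \ref{pr4} since the matrices commute), write $\textbf{P}_1\textbf{A}+\textbf{Q}_1\textbf{B}=\textbf{I}$ and $\textbf{P}_2\textbf{A}+\textbf{Q}_2\textbf{C}=\textbf{I}$ for integer matrices $\textbf{P}_i,\textbf{Q}_i$. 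Right-multiplying the first identity by $\textbf{C}$ and using $\textbf{A}\textbf{C}=\textbf{C}\textbf{A}$ gives $\textbf{Q}_1\textbf{B}\textbf{C}=\textbf{C}-\textbf{P}_1\textbf{C}\textbf{A}$; substituting the resulting expression for $\textbf{C}$ into the second identity and again invoking commutativity to keep $\textbf{A}$ on the right, I obtain $\textbf{X}\textbf{A}+\textbf{Y}\textbf{B}\textbf{C}=\textbf{I}$ with $\textbf{X}=\textbf{P}_2+\textbf{Q}_2\textbf{P}_1\textbf{C}$ and $\textbf{Y}=\textbf{Q}_2\textbf{Q}_1$ integer matrices. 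Hence $\textbf{A}$ and $\textbf{B}\textbf{C}$ are right coprime, and since they commute, coprime by Proposition \ref{pr4}. By symmetry of coprimeness this also yields that $\textbf{A}\textbf{C}$ is coprime to $\textbf{B}$ whenever $\textbf{A}$ and $\textbf{C}$ are each coprime to $\textbf{B}$.

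From the building block the first claim follows by a double induction. Fixing an index $i$, induction on $q$ shows $\textbf{N}_i$ is coprime to $\textbf{N}_{j_1}\cdots\textbf{N}_{j_q}$ for $i$ outside $\{j_1,\ldots,j_q\}$ (the base case is the hypothesis; the inductive step applies the building block to $\textbf{A}=\textbf{N}_i$, $\textbf{B}=\textbf{N}_{j_1}\cdots\textbf{N}_{j_{q-1}}$, $\textbf{C}=\textbf{N}_{j_q}$). A second induction on $p$, using this together with the building block in the swapped role, then shows $\textbf{N}_{i_1}\cdots\textbf{N}_{i_p}$ is coprime to $\textbf{N}_{j_1}\cdots\textbf{N}_{j_q}$ over disjoint index sets, which is the first assertion. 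For the lcm assertion I would induct on $p$: for $p=2$, Proposition \ref{pr4} gives that $\textbf{N}_{i_1}\textbf{N}_{i_2}$ is an lcm; for the step, the first assertion shows $\textbf{B}=\textbf{N}_{i_1}\cdots\textbf{N}_{i_{p-1}}$ and $\textbf{N}_{i_p}$ are commutative and coprime, so $\textbf{B}\textbf{N}_{i_p}$ is both an lcrm and an lclm of $\textbf{B}$ and $\textbf{N}_{i_p}$ by Proposition \ref{pr4}, and Lemma \ref{lem1} (applied to both lcrm and lclm) promotes this to $\textbf{N}_{i_1}\cdots\textbf{N}_{i_p}$ being an lcm of $\textbf{N}_{i_1},\ldots,\textbf{N}_{i_p}$.

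The main obstacle will be the building block, and specifically arranging the Bezout manipulation so that, even though the Bezout coefficients commute with nothing, the target factor is consistently factored out on a single side. Commutativity of the $\textbf{N}_i$'s is precisely what lets me shuttle $\textbf{A}$ and $\textbf{C}$ past one another to achieve this, and it is the one place where the argument genuinely departs from the elementary scalar (or general commutative-ring) proof that products of coprime elements remain coprime.
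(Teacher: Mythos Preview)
Your proposal is correct and follows essentially the same architecture as the paper's proof: commutativity is immediate, the core step is showing coprimeness is preserved under products (the paper does this for the $L=3$ case and then generalizes), and the lcm assertion is then obtained by iterating Proposition~\ref{pr4} together with Lemma~\ref{lem1}. The only minor difference is in the building block --- the paper argues via a gcrd (taking a common right divisor $\textbf{D}$ of $\textbf{N}_1\textbf{N}_2$ and $\textbf{N}_3$, using one Bezout identity to show $\textbf{D}$ must also right-divide $\textbf{N}_2$, hence is unimodular), whereas you combine two Bezout identities to produce a Bezout identity for $\textbf{A}$ and $\textbf{B}\textbf{C}$ directly; both are standard and rely on commutativity in the same way.
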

\begin{proof}
See Appendix \ref{ap2}.
\end{proof}

\begin{corollary}\label{cor1}
Given $L$ moduli $\textbf{M}_i$ for $1\leq i\leq L$, which are arbitrary nonsingular integer matrices, let $\textbf{R}$ be anyone of their lcrm's. Let us assume that there exist $L$ pairwise commutative and coprime integer matrices, denoted by $\textbf{N}_1,\textbf{N}_2,\cdots,\textbf{N}_L$,
such that $\textbf{R}=\textbf{N}_1\textbf{N}_2\cdots\textbf{N}_L\textbf{U}$ for some unimodular matrix $\textbf{U}$ and $\textbf{N}_i$ is a left divisor of $\textbf{M}_i$ for each $1\leq i\leq L$.
For an integer vector $\textbf{m}\in\mathcal{N}(\textbf{R})$, we can uniquely reconstruct $\textbf{m}$ from its remainders $\textbf{r}_i=\langle\textbf{m}\rangle_{\textbf{M}_i}$ as
\begin{equation}\label{jiejie}
\textbf{m}=\left\langle\sum_{i=1}^{L}\textbf{W}_i\widehat{\textbf{W}}_i\textbf{r}_i\right\rangle_{\textbf{R}},
\end{equation}
where $\textbf{W}_i=\textbf{N}_1\cdots\textbf{N}_{i-1}\textbf{N}_{i+1}\cdots\textbf{N}_L$, and if $\textbf{N}_i$ is not unimodular, $\widehat{\textbf{W}}_i$ is some integer matrix satisfying
\begin{equation}\label{bbbbb}
\textbf{W}_i\widehat{\textbf{W}}_i+\textbf{N}_i\textbf{Q}_i=\textbf{I}
\end{equation}
for some integer matrix $\textbf{Q}_i$, and can be calculated by following the procedure (\ref{calcal})--(\ref{zhongdian}) in advance; otherwise $\widehat{\textbf{W}}_i=\textbf{0}$.
\end{corollary}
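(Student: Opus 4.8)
The plan is to transplant the classical CRT reconstruction argument (Proposition \ref{crt_integer}) into the matrix setting, the only genuinely new ingredient being that the commutativity and pairwise coprimeness of the $\textbf{N}_i$'s must be invoked at every place where the integer proof silently relies on commutativity. Write $\textbf{s}=\sum_{i=1}^{L}\textbf{W}_i\widehat{\textbf{W}}_i\textbf{r}_i$ for the quantity inside the brackets in (\ref{jiejie}). Since Theorem \ref{them1} already guarantees that the vector $\textbf{m}\in\mathcal{N}(\textbf{R})$ determined by the remainders is unique, it suffices to prove $\textbf{s}\equiv\textbf{m}\bmod\textbf{R}$ and then apply $\langle\cdot\rangle_{\textbf{R}}$ to both sides, using that $\langle\textbf{m}\rangle_{\textbf{R}}=\textbf{m}$ because $\textbf{m}\in\mathcal{N}(\textbf{R})$.

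First I would record two reductions. Since $\textbf{N}_i$ is a left divisor of $\textbf{M}_i$, we have $\textbf{M}_i=\textbf{N}_i\textbf{A}_i$ for an integer matrix $\textbf{A}_i$, hence $\text{LAT}(\textbf{M}_i)\subseteq\text{LAT}(\textbf{N}_i)$; consequently $\textbf{r}_i=\langle\textbf{m}\rangle_{\textbf{M}_i}$ (which gives $\textbf{m}-\textbf{r}_i\in\text{LAT}(\textbf{M}_i)$) yields $\textbf{m}\equiv\textbf{r}_i\bmod\textbf{N}_i$ for every $i$. Next, by Lemma \ref{lem2} the matrix $\textbf{W}_i=\prod_{k\neq i}\textbf{N}_k$ commutes with and is coprime to $\textbf{N}_i$, so their gcld is unimodular; Proposition \ref{pr3} (Bezout) then supplies integer matrices $\widehat{\textbf{W}}_i,\textbf{Q}_i$ with $\textbf{W}_i\widehat{\textbf{W}}_i+\textbf{N}_i\textbf{Q}_i=\textbf{I}$, which is exactly (\ref{bbbbb}) and establishes that the objects appearing in the formula exist (after absorbing the unimodular gcld into $\widehat{\textbf{W}}_i$).

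The core computation is to reduce $\textbf{s}$ modulo each $\textbf{N}_i$. For a term with $j\neq i$, the factor $\textbf{N}_i$ occurs inside the product $\textbf{W}_j$; because the $\textbf{N}_k$'s pairwise commute I can reorder that product to move $\textbf{N}_i$ to the far left, writing $\textbf{W}_j=\textbf{N}_i\textbf{T}_{ij}$ with $\textbf{T}_{ij}=\prod_{k\neq i,j}\textbf{N}_k$ an integer matrix, so that $\textbf{W}_j\widehat{\textbf{W}}_j\textbf{r}_j\in\text{LAT}(\textbf{N}_i)$ and this term vanishes modulo $\textbf{N}_i$. For the term $j=i$, the Bezout relation gives $\textbf{W}_i\widehat{\textbf{W}}_i\textbf{r}_i=(\textbf{I}-\textbf{N}_i\textbf{Q}_i)\textbf{r}_i\equiv\textbf{r}_i\bmod\textbf{N}_i$. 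Combining the two cases, $\textbf{s}\equiv\textbf{r}_i\equiv\textbf{m}\bmod\textbf{N}_i$ for every $i$, i.e. $\textbf{s}-\textbf{m}\in\bigcap_{i=1}^{L}\text{LAT}(\textbf{N}_i)$.

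Finally I would identify this intersection with $\text{LAT}(\textbf{R})$. By Lemma \ref{lem2} the product $\textbf{N}_1\textbf{N}_2\cdots\textbf{N}_L$ is an lcm of $\textbf{N}_1,\ldots,\textbf{N}_L$, so Proposition \ref{pr6} together with Lemma \ref{lem1} (precisely as in the uniqueness part of Theorem \ref{them1}) gives $\bigcap_i\text{LAT}(\textbf{N}_i)=\text{LAT}(\textbf{N}_1\cdots\textbf{N}_L)$; and since $\textbf{R}=\textbf{N}_1\cdots\textbf{N}_L\textbf{U}$ with $\textbf{U}$ unimodular, Proposition \ref{pr5} yields $\text{LAT}(\textbf{R})=\text{LAT}(\textbf{N}_1\cdots\textbf{N}_L)$. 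Hence $\textbf{s}-\textbf{m}\in\text{LAT}(\textbf{R})$, i.e. $\textbf{s}\equiv\textbf{m}\bmod\textbf{R}$, and (\ref{jiejie}) follows. I expect the main obstacle to be purely the bookkeeping forced by non-commutativity: the reduction of $\textbf{s}\bmod\textbf{N}_i$ works only because $\textbf{N}_i$ can be exhibited as a genuine \emph{left} factor of each $\textbf{W}_j$ ($j\neq i$), which is legitimate solely thanks to the commutativity of the $\textbf{N}_k$'s, while the cancellation at $j=i$ needs the Bezout identity on the matching side; keeping these sidedness conventions consistent (left divisors, left multiples, and $\text{LAT}(\cdot)$ as left multiples of the generator) is where care is required, whereas the lattice-intersection step is inherited essentially verbatim from Theorem \ref{them1}.
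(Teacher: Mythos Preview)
Your proposal is correct and follows essentially the same approach as the paper: reduce the congruences from the $\textbf{M}_i$'s to the $\textbf{N}_i$'s using that $\textbf{N}_i$ left-divides $\textbf{M}_i$, verify that the candidate $\textbf{s}=\sum_i\textbf{W}_i\widehat{\textbf{W}}_i\textbf{r}_i$ has the correct residue modulo each $\textbf{N}_j$ (using commutativity to pull $\textbf{N}_j$ to the left of each $\textbf{W}_i$ with $i\neq j$, and Bezout for the $i=j$ term), and then invoke uniqueness modulo $\textbf{R}$. The only cosmetic difference is that the paper cites Theorem \ref{them1} directly for uniqueness, whereas you reprove it via $\bigcap_i\text{LAT}(\textbf{N}_i)=\text{LAT}(\textbf{R})$; and you do not separately address the edge case $\textbf{N}_i$ unimodular with $\widehat{\textbf{W}}_i=\textbf{0}$, but that term is harmless since the congruence modulo a unimodular modulus is vacuous.
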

\begin{proof}
See Appendix \ref{ap3}.
\end{proof}

In what follows, let us see in detail some special cases of the MD-CRT for integer vectors, where the $L$ nonsingular moduli are given by
\begin{equation}\label{sspeal2}
\textbf{M}_i=\textbf{M}\bm{\Gamma}_i\; \text{ for }1\leq i\leq L,
\end{equation}
and $\textbf{M}$ and $\bm{\Gamma}_i$'s here are integer matrices. Clearly, the moduli given by (\ref{sspeal2}) are in general not commutative. For the specific moduli in (\ref{sspeal2}), we first prove the following lemma.
\begin{lemma}\label{caona}
For the moduli $\textbf{M}_i$'s in (\ref{sspeal2}), if $\textbf{A}$ is an lcrm of $\bm{\Gamma}_i$ for $1\leq i\leq L$, then
$\textbf{M}\textbf{A}$ is an lcrm of $\textbf{M}_i$ for $1\leq i\leq L$.
\end{lemma}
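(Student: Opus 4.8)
The plan is to verify the two defining properties of an lcrm directly from the factorization $\textbf{M}_i=\textbf{M}\bm{\Gamma}_i$, exploiting the fact that left-multiplication by the common left factor $\textbf{M}$ is invertible. First I would record that $\textbf{M}$ is nonsingular: since each modulus $\textbf{M}_i=\textbf{M}\bm{\Gamma}_i$ is nonsingular by assumption, both $\textbf{M}$ and every $\bm{\Gamma}_i$ must be nonsingular, so $\textbf{M}^{-1}$ exists and $\textbf{M}\textbf{A}$ is again a nonsingular integer matrix (as $\textbf{A}$, being an lcrm, is nonsingular and integer).

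For the first property, that $\textbf{M}\textbf{A}$ is a common right multiple (crm) of the $\textbf{M}_i$'s, I would use that $\textbf{A}$ is in particular a crm of the $\bm{\Gamma}_i$'s, so $\textbf{A}=\bm{\Gamma}_i\textbf{S}_i$ for some integer matrix $\textbf{S}_i$ and each $i$. Left-multiplying by $\textbf{M}$ gives $\textbf{M}\textbf{A}=\textbf{M}\bm{\Gamma}_i\textbf{S}_i=\textbf{M}_i\textbf{S}_i$, which exhibits $\textbf{M}\textbf{A}$ as a right multiple of each $\textbf{M}_i$. This direction is immediate.

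The substantive direction is minimality. Let $\textbf{C}$ be an arbitrary crm of the $\textbf{M}_i$'s; I must show $\textbf{C}$ is a right multiple of $\textbf{M}\textbf{A}$. By definition $\textbf{C}=\textbf{M}_i\textbf{T}_i=\textbf{M}\bm{\Gamma}_i\textbf{T}_i$ for some integer matrix $\textbf{T}_i$ and each $i$. The key observation is that the single matrix $\textbf{D}=\textbf{M}^{-1}\textbf{C}$ then satisfies $\textbf{D}=\bm{\Gamma}_i\textbf{T}_i$ simultaneously for every $i$; in particular $\textbf{D}$ is an integer matrix (it equals a product of integer matrices) and is nonsingular (as $\textbf{M}$ and $\textbf{C}$ are), so $\textbf{D}$ is a crm of the $\bm{\Gamma}_i$'s. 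Since $\textbf{A}$ is an lcrm of the $\bm{\Gamma}_i$'s, $\textbf{D}$ must be a right multiple of $\textbf{A}$, say $\textbf{D}=\textbf{A}\textbf{T}$ for some integer matrix $\textbf{T}$. Left-multiplying by $\textbf{M}$ recovers $\textbf{C}=\textbf{M}\textbf{D}=\textbf{M}\textbf{A}\textbf{T}$, i.e., $\textbf{C}$ is a right multiple of $\textbf{M}\textbf{A}$, which completes the argument.

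The only place requiring care, and the main obstacle such as it is, lies in confirming that passing from $\textbf{C}$ to $\textbf{D}=\textbf{M}^{-1}\textbf{C}$ yields a bona fide crm of the $\bm{\Gamma}_i$'s: one must check that $\textbf{D}$ is genuinely an integer matrix (not merely rational) and that the same $\textbf{D}$ works for all indices $i$ at once. Both follow immediately because $\textbf{D}=\bm{\Gamma}_i\textbf{T}_i$ is forced by the factorization through the common left factor $\textbf{M}$, so no further commutativity or coprimeness hypotheses on the $\bm{\Gamma}_i$'s are needed for this lemma.
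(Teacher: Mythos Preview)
Your proof is correct and follows essentially the same approach as the paper's: both show $\textbf{M}\textbf{A}$ is a crm directly, then for an arbitrary crm $\textbf{C}$ of the $\textbf{M}_i$'s pass to $\textbf{M}^{-1}\textbf{C}$, observe it is a crm of the $\bm{\Gamma}_i$'s, invoke the lcrm property of $\textbf{A}$, and left-multiply by $\textbf{M}$ to conclude. Your version is slightly more explicit about nonsingularity and the integrality of $\textbf{M}^{-1}\textbf{C}$, but the argument is identical in substance.
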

\begin{proof}
See Appendix \ref{ap4}.
\end{proof}

Then, we present the following results.

\begin{corollary}\label{cor_xiao}
Given $L$ nonsingular moduli $\textbf{M}_i=\textbf{M}\bm{\Gamma}_i$ for $1\leq i\leq L$,
where $\textbf{M},\bm{\Gamma}_1,\bm{\Gamma}_2,\cdots,\bm{\Gamma}_L$ are pairwise commutative and coprime integer matrices, let $\textbf{R}$ be anyone of their lcrm's,
i.e., $\textbf{R}=\textbf{M}\bm{\Gamma}_1\bm{\Gamma}_2\cdots\bm{\Gamma}_L\textbf{U}$ for any unimodular matrix $\textbf{U}$. For an integer vector $\textbf{m}\in \mathcal{N}(\textbf{R})$, we can uniquely reconstruct $\textbf{m}$ from its remainders $\textbf{r}_i=\langle\textbf{m}\rangle_{\textbf{M}_i}$ as in Corollary \ref{cor1}.
\end{corollary}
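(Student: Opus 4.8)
The plan is to deduce this corollary directly from Corollary \ref{cor1} by exhibiting an explicit choice of the auxiliary matrices $\textbf{N}_1,\ldots,\textbf{N}_L$ demanded there. The family $\textbf{M},\bm{\Gamma}_1,\ldots,\bm{\Gamma}_L$ contains $L+1$ pairwise commutative and coprime matrices, but there are only $L$ moduli, so the first move is to fold $\textbf{M}$ into a single factor. Concretely, I would set $\textbf{N}_1=\textbf{M}\bm{\Gamma}_1$ and $\textbf{N}_i=\bm{\Gamma}_i$ for $2\leq i\leq L$, and then verify the three hypotheses of Corollary \ref{cor1} for this choice.

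First I would confirm that $\textbf{R}=\textbf{M}\bm{\Gamma}_1\bm{\Gamma}_2\cdots\bm{\Gamma}_L$ (up to a postmultiplied unimodular matrix) is indeed an lcrm of the $\textbf{M}_i$'s, which both justifies the stated form of $\textbf{R}$ and matches the product $\textbf{N}_1\textbf{N}_2\cdots\textbf{N}_L$. By Lemma \ref{lem2}, $\bm{\Gamma}_1\bm{\Gamma}_2\cdots\bm{\Gamma}_L$ is an lcm of the $\bm{\Gamma}_i$'s; Lemma \ref{caona} then upgrades this to the statement that $\textbf{M}\bm{\Gamma}_1\cdots\bm{\Gamma}_L$ is an lcrm of the $\textbf{M}_i$'s. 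Invoking commutativity, $\textbf{N}_1\textbf{N}_2\cdots\textbf{N}_L=\textbf{M}\bm{\Gamma}_1\bm{\Gamma}_2\cdots\bm{\Gamma}_L$, so the product condition $\textbf{R}=\textbf{N}_1\cdots\textbf{N}_L\textbf{U}$ holds with this lcrm and the same unimodular $\textbf{U}$.

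Next I would check the pairwise commutativity and coprimeness of the $\textbf{N}_i$'s. For indices $i,j\geq2$ this is immediate since $\textbf{N}_i=\bm{\Gamma}_i$ and $\textbf{N}_j=\bm{\Gamma}_j$. For the pair $\textbf{N}_1=\textbf{M}\bm{\Gamma}_1$ and $\textbf{N}_j=\bm{\Gamma}_j$ with $j\geq2$, I would apply Lemma \ref{lem2} to the augmented family $\{\textbf{M},\bm{\Gamma}_1,\ldots,\bm{\Gamma}_L\}$: since $\{\textbf{M},\bm{\Gamma}_1\}$ and $\{\bm{\Gamma}_j\}$ are disjoint index subsets, the product $\textbf{M}\bm{\Gamma}_1$ is commutative and coprime with $\bm{\Gamma}_j$. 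For the left-divisibility condition, $\textbf{N}_1=\textbf{M}_1$ trivially left-divides $\textbf{M}_1$, while for $i\geq2$ the commutation $\bm{\Gamma}_i^{-1}\textbf{M}_i=\bm{\Gamma}_i^{-1}\textbf{M}\bm{\Gamma}_i=\textbf{M}$ shows that $\textbf{N}_i^{-1}\textbf{M}_i$ is an integer matrix. With all hypotheses of Corollary \ref{cor1} satisfied, its reconstruction formula (\ref{jiejie}) applies verbatim.

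The main obstacle is not any single computation but making the reduction legitimate: the crucial leverage is recognizing that folding $\textbf{M}$ into $\textbf{N}_1$ preserves both commutativity and coprimeness, which is precisely what Lemma \ref{lem2} guarantees once $\textbf{M}$ is treated as an additional member of the commuting coprime family. One must also keep in mind that the single commutation relation $\bm{\Gamma}_i\textbf{M}=\textbf{M}\bm{\Gamma}_i$ is what drives both the left-divisibility step and the identification of $\textbf{N}_1\cdots\textbf{N}_L$ with the prescribed form of $\textbf{R}$.
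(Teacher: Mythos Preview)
Your proposal is correct and follows essentially the same route as the paper: define $\textbf{N}_1=\textbf{M}\bm{\Gamma}_1$ and $\textbf{N}_i=\bm{\Gamma}_i$ for $i\geq 2$, use Lemma~\ref{lem2} and Lemma~\ref{caona} to verify the hypotheses of Corollary~\ref{cor1}, then invoke it. Your write-up is slightly more explicit than the paper's in checking the left-divisibility condition via the commutation $\bm{\Gamma}_i^{-1}\textbf{M}\bm{\Gamma}_i=\textbf{M}$, but the argument is the same.
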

\begin{proof}
See Appendix \ref{ap5}.
\end{proof}

\begin{corollary}\label{cor2}
Given $L$ nonsingular moduli $\textbf{M}_i=\textbf{M}\bm{\Gamma}_i$ for $1\leq i\leq L$,
where $\textbf{M}$ is a unimodular matrix, and $\bm{\Gamma}_i$'s are pairwise commutative and coprime integer matrices,
let $\textbf{R}$ be anyone of their lcrm's, i.e., $\textbf{R}=\textbf{M}\bm{\Gamma}_1\bm{\Gamma}_2\cdots\bm{\Gamma}_L\textbf{U}$ for any unimodular matrix $\textbf{U}$. For an integer vector $\textbf{m}\in\mathcal{N}(\textbf{R})$, we can uniquely reconstruct $\textbf{m}$ from its remainders $\textbf{r}_i=\langle\textbf{m}\rangle_{\textbf{M}_i}$ as
\begin{equation}\label{crtsolution2}
\textbf{m}=\left\langle\sum_{i=1}^{L}\textbf{W}_i\widehat{\textbf{W}}_i\textbf{r}_i\right\rangle_{\textbf{R}},
\end{equation}
where $\textbf{W}_i=\textbf{M}\bm{\Gamma}_1\cdots\bm{\Gamma}_{i-1}\bm{\Gamma}_{i+1}\cdots\bm{\Gamma}_L$, and $\widehat{\textbf{W}}_i$ is some integer matrix satisfying
\begin{equation}\label{bbb2}
\textbf{W}_i\widehat{\textbf{W}}_i+\textbf{M}_i\textbf{Q}_i=\textbf{I}
\end{equation}
for some integer matrix $\textbf{Q}_i$ and can be calculated by following the procedure (\ref{calcal})--(\ref{zhongdian}) in advance.
\end{corollary}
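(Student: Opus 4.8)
The plan is to mirror the reconstruction argument behind Corollary \ref{cor1} and Corollary \ref{cor_xiao}, verifying that each summand $\textbf{W}_i\widehat{\textbf{W}}_i\textbf{r}_i$ behaves like the classical CRT ``idempotent'': it reproduces $\textbf{r}_i$ modulo $\textbf{M}_i$ and vanishes modulo $\textbf{M}_j$ for $j\neq i$. First I would record that $\textbf{R}=\textbf{M}\bm{\Gamma}_1\bm{\Gamma}_2\cdots\bm{\Gamma}_L\textbf{U}$ really is an lcrm of the $\textbf{M}_i$'s: since the $\bm{\Gamma}_i$'s are pairwise commutative and coprime, Lemma \ref{lem2} gives that $\bm{\Gamma}_1\bm{\Gamma}_2\cdots\bm{\Gamma}_L$ is an lcrm of the $\bm{\Gamma}_i$'s, and then Lemma \ref{caona} promotes $\textbf{M}\bm{\Gamma}_1\cdots\bm{\Gamma}_L$ to an lcrm of the $\textbf{M}_i=\textbf{M}\bm{\Gamma}_i$'s; the unimodular factor $\textbf{U}$ accounts for the non-uniqueness of an lcrm up to postmultiplication by a unimodular matrix.

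The crux is the existence of $\widehat{\textbf{W}}_i$ satisfying (\ref{bbb2}), which amounts to left coprimeness of $\textbf{W}_i=\textbf{M}\bm{\Gamma}_1\cdots\bm{\Gamma}_{i-1}\bm{\Gamma}_{i+1}\cdots\bm{\Gamma}_L$ and $\textbf{M}_i=\textbf{M}\bm{\Gamma}_i$. Writing $\bm{\Gamma}:=\bm{\Gamma}_1\cdots\bm{\Gamma}_{i-1}\bm{\Gamma}_{i+1}\cdots\bm{\Gamma}_L$, Lemma \ref{lem2} tells me $\bm{\Gamma}$ and $\bm{\Gamma}_i$ are commutative and coprime, so the Bezout theorem (Proposition \ref{pr3}) supplies integer matrices $\textbf{P},\textbf{Q}$ with $\bm{\Gamma}\textbf{P}+\bm{\Gamma}_i\textbf{Q}=\textbf{I}$ (normalizing the gcld to $\textbf{I}$). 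Here is where unimodularity of $\textbf{M}$ does the work that commutativity did in Corollary \ref{cor_xiao}: left-multiplying by $\textbf{M}$ yields $\textbf{W}_i\textbf{P}+\textbf{M}_i\textbf{Q}=\textbf{M}$, and then right-multiplying by the integer matrix $\textbf{M}^{-1}$ gives $\textbf{W}_i(\textbf{P}\textbf{M}^{-1})+\textbf{M}_i(\textbf{Q}\textbf{M}^{-1})=\textbf{I}$, so $\widehat{\textbf{W}}_i=\textbf{P}\textbf{M}^{-1}$ and $\textbf{Q}_i=\textbf{Q}\textbf{M}^{-1}$ are the desired integer matrices; this simultaneously furnishes the advertised computation of $\widehat{\textbf{W}}_i$.

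With $\widehat{\textbf{W}}_i$ in hand, I would verify the two congruences. From (\ref{bbb2}), $\textbf{W}_i\widehat{\textbf{W}}_i=\textbf{I}-\textbf{M}_i\textbf{Q}_i$, so $\textbf{W}_i\widehat{\textbf{W}}_i\equiv\textbf{I}\!\!\mod\textbf{M}_i$. For $j\neq i$, I would compute $\textbf{M}_j^{-1}\textbf{W}_i=\bm{\Gamma}_j^{-1}\textbf{M}^{-1}\textbf{M}\bm{\Gamma}=\bm{\Gamma}_j^{-1}\bm{\Gamma}$; since $j\neq i$, the commuting product $\bm{\Gamma}$ contains $\bm{\Gamma}_j$ as a factor that may be moved to the front, whence $\bm{\Gamma}_j^{-1}\bm{\Gamma}=\prod_{k\neq i,j}\bm{\Gamma}_k$ is an integer matrix and therefore $\textbf{W}_i\widehat{\textbf{W}}_i\equiv\textbf{0}\!\!\mod\textbf{M}_j$. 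Setting $\textbf{s}=\sum_{i=1}^{L}\textbf{W}_i\widehat{\textbf{W}}_i\textbf{r}_i$ and reducing modulo $\textbf{M}_j$, only the $j$-th term survives and reduces to $\textbf{r}_j$, so $\textbf{s}\equiv\textbf{r}_j\equiv\textbf{m}\!\!\mod\textbf{M}_j$ for every $j$. Hence $\textbf{s}-\textbf{m}$ lies in $\bigcap_{j=1}^{L}\text{LAT}(\textbf{M}_j)=\text{LAT}(\textbf{R})$ by Proposition \ref{pr6} and Lemma \ref{lem1}, i.e. $\textbf{s}\equiv\textbf{m}\!\!\mod\textbf{R}$; since $\textbf{m}\in\mathcal{N}(\textbf{R})$, this gives $\textbf{m}=\langle\textbf{s}\rangle_{\textbf{R}}$, and uniqueness follows from Theorem \ref{them1}.

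The main obstacle I anticipate is precisely the non-commutativity of $\textbf{M}$ with the $\bm{\Gamma}_i$'s, which rules out a direct appeal to Corollary \ref{cor_xiao} (whose hypotheses force $\textbf{M}$ to commute with and be coprime to each $\bm{\Gamma}_i$) and also blocks the naive attempt to invoke Corollary \ref{cor1} with $\textbf{N}_i=\bm{\Gamma}_i$, since $\bm{\Gamma}_i$ need not be a left divisor of $\textbf{M}\bm{\Gamma}_i$. The resolution is to keep $\textbf{M}$ on the left throughout and use only that $\textbf{M}^{-1}$ is an integer matrix; the one point demanding care is the placement of $\textbf{M}$ and $\textbf{M}^{-1}$ (left versus right multiplication) in the Bezout step and in the two congruence computations, so that every intermediate quantity remains an integer matrix.
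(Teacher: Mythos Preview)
Your proposal is correct and follows essentially the same route as the paper's proof: establish that $\textbf{R}$ is an lcrm via Lemma~\ref{lem2} and Lemma~\ref{caona}, show that $\textbf{W}_i$ and $\textbf{M}_i$ admit a Bezout identity (\ref{bbb2}), verify that the sum $\textbf{s}=\sum_i\textbf{W}_i\widehat{\textbf{W}}_i\textbf{r}_i$ is congruent to $\textbf{r}_j$ modulo each $\textbf{M}_j$, and conclude via the uniqueness in Theorem~\ref{them1}. The one tactical difference is that the paper proves left coprimeness of $\textbf{W}_i$ and $\textbf{M}_i$ by taking an arbitrary gcld $\textbf{D}$, observing that $\textbf{M}^{-1}\textbf{D}$ is then an integer cld of the coprime pair $\bm{\Gamma}$ and $\bm{\Gamma}_i$ and hence unimodular, and only then invoking Proposition~\ref{pr3} abstractly; you instead construct the witnesses directly by lifting the Bezout identity $\bm{\Gamma}\textbf{P}+\bm{\Gamma}_i\textbf{Q}=\textbf{I}$ through left-multiplication by $\textbf{M}$ and right-multiplication by $\textbf{M}^{-1}$, which is a slightly more explicit but equivalent maneuver.
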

\begin{proof}
See Appendix \ref{ap6}.
\end{proof}

Particularly, when $\textbf{M}$ is the identity matrix, i.e., $\textbf{M}=\textbf{I}$, Corollary \ref{cor2} reduces to the MD-CRT for integer vectors with respect to pairwise commutative and coprime moduli (which is simply denoted as the CC MD-CRT for integer vectors), as stated below, in comparison with the CRT for integers with respect to pairwise coprime moduli.

\begin{theorem}[CC MD-CRT for integer vectors]\label{them2}
Given $L$ nonsingular moduli $\textbf{M}_i$ for $1\leq i\leq L$, which are pairwise commutative and coprime integer matrices, let $\textbf{R}$ be anyone of their lcrm's, i.e., $\textbf{R}=\textbf{M}_1\textbf{M}_2\cdots\textbf{M}_L\textbf{U}$ for any unimodular matrix $\textbf{U}$. For an integer vector $\textbf{m}\in\mathcal{N}(\textbf{R})$,
we can uniquely reconstruct $\textbf{m}$ from its remainders $\textbf{r}_i=\langle \textbf{m}\rangle_{\textbf{M}_i}$ as in Corollary \ref{cor2} with $\textbf{M}=\textbf{I}$.
\end{theorem}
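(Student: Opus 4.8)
The plan is to recognize Theorem \ref{them2} as precisely the instance $\textbf{M}=\textbf{I}$ of Corollary \ref{cor2}, so that the statement follows by specialization rather than by a fresh argument. First I would match the hypotheses by setting $\bm{\Gamma}_i=\textbf{M}_i$ for each $i$ and $\textbf{M}=\textbf{I}$. Then $\textbf{M}=\textbf{I}$ is unimodular since $|\text{det}(\textbf{I})|=1$, and the pairwise commutativity and coprimeness of the $\textbf{M}_i$ assumed in the theorem is exactly the condition imposed on the $\bm{\Gamma}_i$ in Corollary \ref{cor2}. The factorization $\textbf{M}_i=\textbf{M}\bm{\Gamma}_i$ holds because $\textbf{I}\,\textbf{M}_i=\textbf{M}_i$, so the moduli of Corollary \ref{cor2} coincide with those of the theorem.

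With the hypotheses verified, I would substitute $\textbf{M}=\textbf{I}$ into every quantity produced by Corollary \ref{cor2}. The lcrm $\textbf{R}=\textbf{M}\bm{\Gamma}_1\bm{\Gamma}_2\cdots\bm{\Gamma}_L\textbf{U}$ collapses to $\textbf{R}=\textbf{M}_1\textbf{M}_2\cdots\textbf{M}_L\textbf{U}$, matching the form asserted in the theorem; the coefficient $\textbf{W}_i=\textbf{M}\bm{\Gamma}_1\cdots\bm{\Gamma}_{i-1}\bm{\Gamma}_{i+1}\cdots\bm{\Gamma}_L$ becomes $\textbf{W}_i=\textbf{M}_1\cdots\textbf{M}_{i-1}\textbf{M}_{i+1}\cdots\textbf{M}_L$; and the Bezout identity $\textbf{W}_i\widehat{\textbf{W}}_i+\textbf{M}_i\textbf{Q}_i=\textbf{I}$ is left unchanged. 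Consequently the reconstruction formula $\textbf{m}=\langle\sum_{i=1}^{L}\textbf{W}_i\widehat{\textbf{W}}_i\textbf{r}_i\rangle_{\textbf{R}}$ and the uniqueness of $\textbf{m}$ over $\mathcal{N}(\textbf{R})$ are inherited verbatim.

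The only point worth checking carefully---and where I would concentrate the verification---is that the derivation underlying Corollary \ref{cor2} requires nothing of $\textbf{M}$ beyond unimodularity: in particular it must not tacitly assume that $\textbf{M}$ commutes with or is coprime to the $\bm{\Gamma}_i$, nor that $\textbf{M}\neq\textbf{I}$. Since $\textbf{I}$ commutes with every matrix and left-divides every integer matrix, any such implicit condition would be satisfied automatically, so I expect no genuine obstacle. As an independent consistency check I would note that Lemma \ref{lem2} already gives $\textbf{M}_1\textbf{M}_2\cdots\textbf{M}_L$ as an lcm of the $\textbf{M}_i$, and, since all lcrm's generate the same lattice, Proposition \ref{pr5} shows they differ by a unimodular right factor, reconfirming the form $\textbf{R}=\textbf{M}_1\textbf{M}_2\cdots\textbf{M}_L\textbf{U}$.
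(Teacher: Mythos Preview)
Your proposal is correct and matches the paper's approach exactly: the paper presents Theorem \ref{them2} as the specialization of Corollary \ref{cor2} to $\textbf{M}=\textbf{I}$, without a separate proof. Your careful verification that the hypotheses and all derived quantities specialize cleanly is if anything more thorough than what the paper provides.
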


We next see another special case of the MD-CRT for integer vectors, where
the $L$ nonsingular moduli can be simultaneously diagonalized by using two common unimodular matrices, i.e.,
\begin{equation}\label{jiafafa}
\textbf{M}_i=\textbf{U}\bm{\Lambda}_i\textbf{V}\in\mathbb{Z}^{D\times D}\; \text{ for }1\leq i\leq L
\end{equation}
with $\bm{\Lambda}_i$'s being diagonal integer matrices, and $\textbf{U}$ and $\textbf{V}$ being unimodular matrices.
For each $1\leq i\leq L$, write $\bm{\Lambda}_i$ as $\bm{\Lambda}_i=\text{diag}(\Lambda_i(1,1),\Lambda_i(2,2),\cdots,\Lambda_i(D,D))$.
Let
\begin{equation}\label{shenmeshi}
\bm{\Lambda}=\text{diag}(\Lambda(1,1),\Lambda(2,2),\cdots,\Lambda(D,D)),
\end{equation}
and $\Lambda(j,j)$ be the lcm of $\Lambda_1(j,j),\Lambda_2(j,j),\cdots,\Lambda_L(j,j)$ for each $1\leq j\leq D$. It is readily verified that
$\bm{\Lambda}$ is an lcm of $\bm{\Lambda}_i$'s.

We next prove that $\bm{\Lambda}$ is also an lcrm of $\bm{\Lambda}_i\textbf{V}$ for $1\leq i\leq L$. Since $\bm{\Lambda}$ is an lcm of $\bm{\Lambda}_i$'s, we have $\bm{\Lambda}=\bm{\Lambda}_i\textbf{P}_i$ for some integer matrices $\textbf{P}_i$'s. Due to the unimodularity of $\textbf{V}$, we have $\bm{\Lambda}=\bm{\Lambda}_i\textbf{V}\textbf{V}^{-1}\textbf{P}_i$ and $\textbf{V}^{-1}\textbf{P}_i$ is an integer matrix for each $1\leq i\leq L$. So, $\bm{\Lambda}$ is a crm of $\bm{\Lambda}_i\textbf{V}$ for $1\leq i\leq L$. For any other crm $\textbf{Q}$ of $\bm{\Lambda}_i\textbf{V}$ for $1\leq i\leq L$, we have $\textbf{Q}=\bm{\Lambda}_i\textbf{V}\textbf{Q}_i$ for some integer matrices $\textbf{Q}_i$'s, which indicates that $\textbf{Q}$ is a crm of $\bm{\Lambda}_i$'s. Thus, $\textbf{Q}$ is a right multiple of $\bm{\Lambda}$, i.e., $\bm{\Lambda}$ is an lcrm of $\bm{\Lambda}_i\textbf{V}$ for $1\leq i\leq L$.
Furthermore, from Lemma \ref{caona}, we obtain that $\textbf{U}\bm{\Lambda}$ is an lcrm of $\textbf{M}_i$'s given by (\ref{jiafafa}).
Let $\textbf{R}$ be anyone of the lcrm's of $\textbf{M}_i$'s, i.e., $\textbf{R}=\textbf{U}\bm{\Lambda}\textbf{B}$ for any unimodular matrix $\textbf{B}$.
For an integer vector $\textbf{m}\in\mathcal{N}(\textbf{R})$ and its remainders $\textbf{r}_i=\langle\textbf{m}\rangle_{\textbf{M}_i}$,
we have
\begin{equation}\label{liugn}
\textbf{m}= \textbf{U}\bm{\Lambda}_i\textbf{V}\textbf{n}_i+\textbf{r}_i \text{ and then }
\textbf{U}^{-1}\textbf{m}= \bm{\Lambda}_i\textbf{V}\textbf{n}_i+\textbf{U}^{-1}\textbf{r}_i,
\end{equation}
for $1\leq i\leq L$.
Due to the unimodularity of $\textbf{U}$, $\textbf{U}^{-1}\textbf{m}$ and $\textbf{U}^{-1}\textbf{r}_i$'s are all integer vectors.
Hence, we can view (\ref{liugn}) as a system of congruences with respect to the moduli $\bm{\Lambda}_i$'s, i.e.,
\begin{equation}\label{caolll}
\textbf{U}^{-1}\textbf{m}\equiv \textbf{U}^{-1}\textbf{r}_i \!\!\mod \bm{\Lambda}_i\; \text{ for }1\leq i\leq L,
\end{equation}
and then calculate the remainders $\bm{\zeta}_i\in\mathcal{N}(\bm{\Lambda}_i)$ of $\textbf{U}^{-1}\textbf{r}_i$ modulo $\bm{\Lambda}_i$,
i.e., $\textbf{U}^{-1}\textbf{r}_i\equiv \bm{\zeta}_i \!\!\mod \bm{\Lambda}_i$, for $1\leq i\leq L$. From (\ref{caolll}), we get
\begin{equation}
\textbf{U}^{-1}\textbf{m}\equiv \bm{\zeta}_i \!\!\mod \bm{\Lambda}_i\; \text{ for }1\leq i\leq L.
\end{equation}
Since $\textbf{U}$ is unimodular and $\textbf{m}\in\mathcal{N}(\textbf{U}\bm{\Lambda}\textbf{B})$ for any unimodular matrix $\textbf{B}$, we have $\textbf{U}^{-1}\textbf{m}\in\mathcal{N}(\bm{\Lambda}\textbf{B})$. Furthermore, as $\bm{\Lambda}_i$'s are diagonal integer matrices, it is always ready to find
$L$ pairwise commutative and coprime integer matrices (i.e., coprime diagonal integer matrices), denoted by $\textbf{N}_1,\textbf{N}_2,\cdots,\textbf{N}_L$, such that $\bm{\Lambda}=\textbf{N}_1\textbf{N}_2\cdots\textbf{N}_L$ and $\textbf{N}_i$ is a left divisor of $\bm{\Lambda}_i$ for each $1\leq i\leq L$. Therefore, from Corollary \ref{cor1}, we can uniquely reconstruct such $\textbf{m}$. When $\textbf{m}$ is restricted to $\textbf{m}\in\mathcal{N}(\textbf{U}\bm{\Lambda})$, i.e., the unimodular matrix $\textbf{B}$ is taken to be the identity matrix, the reconstruction of $\textbf{m}$ is equivalent to that via the $D$ independent conventional CRT for integers as follows. Let $\textbf{a}=\textbf{U}^{-1}\textbf{m}\in\mathbb{Z}^{D}$. Because of $\textbf{m}\in\mathcal{N}(\textbf{U}\bm{\Lambda})$, we obtain $\textbf{a}\in\mathcal{N}(\bm{\Lambda})$. That is to say, every element $a(j)$ of $\textbf{a}$ satisfies $a(j)\in\mathcal{N}(\Lambda(j,j))$ (i.e., $0\leq a(j)<\Lambda(j,j)$) for $1\leq j\leq D$. Therefore, via the CRT for integers, we can uniquely reconstruct $a(j)$ for each $1\leq j\leq D$ in the following system of congruences:
\begin{equation}
a(j)\equiv \zeta_i(j) \!\!\mod |\Lambda_i(j,j)|\; \text{ for }1\leq i\leq L.
\end{equation}

Based on the above analysis, we have the following result.

\begin{corollary}\label{shuju}
Let $L$ nonsingular moduli $\textbf{M}_i$ for $1\leq i\leq L$ be given by (\ref{jiafafa}), and $\textbf{R}$ be anyone of their lcrm's, i.e., $\textbf{R}=\textbf{U}\bm{\Lambda}\textbf{B}$ for any unimodular matrix $\textbf{B}$, where $\bm{\Lambda}$ is given by (\ref{shenmeshi}). For an integer vector $\textbf{m}\in\mathcal{N}(\textbf{R})$, we can uniquely reconstruct $\textbf{m}$ from its remainders $\textbf{r}_i=\langle \textbf{m}\rangle_{\textbf{M}_i}$ as in Corollary \ref{cor1}. Interestingly, when $\textbf{B}$ is the identity matrix, i.e., $\textbf{R}=\textbf{U}\bm{\Lambda}$, the reconstruction of $\textbf{m}\in\mathcal{N}(\textbf{R})$ is equivalent to that via the $D$ independent conventional CRT for integers.
\end{corollary}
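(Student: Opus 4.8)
The plan is to reduce the whole problem, via the left unimodular factor $\textbf{U}$, to a system of congruences with the diagonal moduli $\bm{\Lambda}_i$, and then invoke Corollary~\ref{cor1}. First I would establish that $\textbf{U}\bm{\Lambda}$ is genuinely an lcrm of the $\textbf{M}_i$'s, which legitimizes writing $\textbf{R}=\textbf{U}\bm{\Lambda}\textbf{B}$. Since each $\bm{\Lambda}_i$ is diagonal, the componentwise integer lcm's $\Lambda(j,j)$ make $\bm{\Lambda}$ an lcm of the $\bm{\Lambda}_i$'s; using the unimodularity of $\textbf{V}$ I would promote this to the statement that $\bm{\Lambda}$ is an lcrm of the $\bm{\Lambda}_i\textbf{V}$'s, and then apply Lemma~\ref{caona} with $\textbf{M}=\textbf{U}$ and $\bm{\Gamma}_i=\bm{\Lambda}_i\textbf{V}$ to conclude that $\textbf{U}\bm{\Lambda}$ is an lcrm of the $\textbf{M}_i=\textbf{U}\bm{\Lambda}_i\textbf{V}$.

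The key maneuver in the second step is to absorb the right unimodular factor $\textbf{V}$ into the folding vectors. From $\textbf{m}=\textbf{U}\bm{\Lambda}_i\textbf{V}\textbf{n}_i+\textbf{r}_i$ I would left-multiply by $\textbf{U}^{-1}$ to obtain $\textbf{U}^{-1}\textbf{m}=\bm{\Lambda}_i(\textbf{V}\textbf{n}_i)+\textbf{U}^{-1}\textbf{r}_i$; because $\textbf{U}$ is unimodular, $\textbf{U}^{-1}\textbf{m}$ and $\textbf{U}^{-1}\textbf{r}_i$ are integer vectors, and $\textbf{V}\textbf{n}_i$ is again an arbitrary integer vector. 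This exhibits $\textbf{U}^{-1}\textbf{m}\equiv\textbf{U}^{-1}\textbf{r}_i\!\!\mod\bm{\Lambda}_i$ as an honest congruence modulo the diagonal matrix $\bm{\Lambda}_i$, even though the original $\textbf{M}_i$ need not commute. Reducing $\textbf{U}^{-1}\textbf{r}_i$ modulo $\bm{\Lambda}_i$ to its canonical residue $\bm{\zeta}_i\in\mathcal{N}(\bm{\Lambda}_i)$ then gives $\textbf{U}^{-1}\textbf{m}\equiv\bm{\zeta}_i\!\!\mod\bm{\Lambda}_i$ for $1\leq i\leq L$.

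Next I would verify the hypothesis of Corollary~\ref{cor1} for the transformed system. Since the $\bm{\Lambda}_i$ are diagonal and $\bm{\Lambda}=\text{diag}(\Lambda(1,1),\dots,\Lambda(D,D))$ with each $\Lambda(j,j)$ the scalar lcm of the $\Lambda_i(j,j)$, one can always pick pairwise commutative and coprime diagonal integer matrices $\textbf{N}_1,\dots,\textbf{N}_L$ with $\bm{\Lambda}=\textbf{N}_1\cdots\textbf{N}_L$ and $\textbf{N}_i$ a left divisor of $\bm{\Lambda}_i$ — this is just the componentwise factorization underlying the scalar CRT in Proposition~\ref{crt_integer}. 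With $\textbf{U}^{-1}\textbf{m}\in\mathcal{N}(\bm{\Lambda}\textbf{B})$ (immediate from $\textbf{m}\in\mathcal{N}(\textbf{U}\bm{\Lambda}\textbf{B})$ and the unimodularity of $\textbf{U}$), Corollary~\ref{cor1} applies and reconstructs $\textbf{U}^{-1}\textbf{m}$ uniquely, whence $\textbf{m}=\textbf{U}(\textbf{U}^{-1}\textbf{m})$ is recovered.

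Finally, for the case $\textbf{B}=\textbf{I}$ I would specialize: writing $\textbf{a}=\textbf{U}^{-1}\textbf{m}\in\mathcal{N}(\bm{\Lambda})$, the diagonality of $\bm{\Lambda}$ forces $0\leq a(j)<\Lambda(j,j)$, and the diagonality of each $\bm{\Lambda}_i$ decouples the congruences into $a(j)\equiv\zeta_i(j)\!\!\mod|\Lambda_i(j,j)|$ for each fixed $j$. Since $\Lambda(j,j)$ is the lcm of the $\Lambda_i(j,j)$ and $a(j)$ lies in $[0,\Lambda(j,j))$, the scalar CRT (Proposition~\ref{crt_integer}) recovers each $a(j)$ independently, so the reconstruction splits into $D$ one-dimensional CRTs. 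The main obstacle is the non-commutativity of the $\textbf{M}_i$: the entire argument hinges on the observation that although $\textbf{U}$ and $\textbf{V}$ destroy the commutativity enjoyed by the diagonal $\bm{\Lambda}_i$, left-multiplying by $\textbf{U}^{-1}$ and letting the right factor $\textbf{V}$ be absorbed into the folding vector $\textbf{V}\textbf{n}_i$ restores a genuinely diagonal — hence commutative and coprime — congruence system to which Corollary~\ref{cor1} is applicable.
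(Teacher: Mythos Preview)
Your proposal is correct and follows essentially the same route as the paper: the paper's argument (given in the paragraphs immediately preceding the corollary) also shows $\bm{\Lambda}$ is an lcrm of the $\bm{\Lambda}_i\textbf{V}$'s via unimodularity of $\textbf{V}$, invokes Lemma~\ref{caona} to get that $\textbf{U}\bm{\Lambda}$ is an lcrm of the $\textbf{M}_i$'s, left-multiplies by $\textbf{U}^{-1}$ to obtain diagonal congruences $\textbf{U}^{-1}\textbf{m}\equiv\bm{\zeta}_i\!\!\mod\bm{\Lambda}_i$, chooses coprime diagonal $\textbf{N}_i$'s to meet the hypothesis of Corollary~\ref{cor1}, and then specializes to $\textbf{B}=\textbf{I}$ to decouple into $D$ scalar CRTs. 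Your emphasis on absorbing $\textbf{V}$ into the folding vector $\textbf{V}\textbf{n}_i$ is exactly the mechanism the paper uses implicitly when it rewrites (\ref{liugn}) as a congruence modulo $\bm{\Lambda}_i$.
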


In particular, when the $D\times D$ nonsingular moduli $\textbf{M}_i$'s can be simultaneously diagonalized as
\begin{equation}\label{sspeal}
\textbf{M}_i=\textbf{U}\bm{\Lambda}_i\textbf{U}^{-1}\; \text{ for }1\leq i\leq L,
\end{equation}
where $\textbf{U}$ is a $D\times D$ unimodular matrix, and $\bm{\Lambda}_i$'s are diagonal integer matrices that are pairwise coprime, it is readily verified that the moduli are pairwise commutative and coprime.
Note that $\bm{\Lambda}_i$ and $\bm{\Lambda}_j$ are coprime if and only if their corresponding diagonal elements $\Lambda_i(k,k)$ and $\Lambda_j(k,k)$ are coprime for each $1\leq k\leq D$. For this case, as a direct consequence of Theorem \ref{them2} or Corollary \ref{shuju}, we obtain the following result, which has been presented in \cite{jiawenxian,PPV1}.

\begin{corollary}[\!\!\cite{PPV1}]\label{cor4}
Let $L$ nonsingular moduli $\textbf{M}_i$ for $1\leq i\leq L$ be given by (\ref{sspeal}), and $\textbf{R}$ be anyone of their lcrm's, i.e., $\textbf{R}=\textbf{U}\bm{\Lambda}_1\bm{\Lambda}_2\cdots\bm{\Lambda}_L\textbf{B}$ for any unimodular matrix $\textbf{B}$. For an integer vector $\textbf{m}\in\mathcal{N}(\textbf{R})$, we can uniquely reconstruct $\textbf{m}$ from its remainders $\textbf{r}_i=\langle\textbf{m}\rangle_{\textbf{M}_i}$ as in Theorem \ref{them2}.
\end{corollary}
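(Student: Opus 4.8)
The plan is to verify that the moduli $\textbf{M}_i=\textbf{U}\bm{\Lambda}_i\textbf{U}^{-1}$ satisfy the hypotheses of Theorem \ref{them2}, namely that they are pairwise commutative and coprime, and that their lcrm has the stated product form; the reconstruction then follows verbatim from Theorem \ref{them2}. Commutativity is immediate: for any $i\neq j$,
\begin{equation}
\textbf{M}_i\textbf{M}_j=\textbf{U}\bm{\Lambda}_i\textbf{U}^{-1}\textbf{U}\bm{\Lambda}_j\textbf{U}^{-1}=\textbf{U}\bm{\Lambda}_i\bm{\Lambda}_j\textbf{U}^{-1}=\textbf{U}\bm{\Lambda}_j\bm{\Lambda}_i\textbf{U}^{-1}=\textbf{M}_j\textbf{M}_i,
\end{equation}
where the middle equality uses that diagonal matrices commute.

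The main work is coprimeness. First I would record, as noted just before the statement, that the pairwise coprimeness of the diagonal matrices $\bm{\Lambda}_i$ reduces to scalar coprimeness $\gcd(\Lambda_i(k,k),\Lambda_j(k,k))=1$ along each diagonal index $k$. For each $k$ the scalar Bezout identity supplies integers $p_k,q_k$ with $\Lambda_i(k,k)p_k+\Lambda_j(k,k)q_k=1$; collecting these into diagonal matrices $\textbf{P}=\text{diag}(p_1,\cdots,p_D)$ and $\textbf{Q}=\text{diag}(q_1,\cdots,q_D)$ yields a matrix Bezout identity $\bm{\Lambda}_i\textbf{P}+\bm{\Lambda}_j\textbf{Q}=\textbf{I}$ with integer $\textbf{P},\textbf{Q}$.

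Next I would conjugate this identity by $\textbf{U}$. Inserting $\textbf{U}^{-1}\textbf{U}=\textbf{I}$ after each $\bm{\Lambda}$ gives
\begin{equation}
\textbf{M}_i(\textbf{U}\textbf{P}\textbf{U}^{-1})+\textbf{M}_j(\textbf{U}\textbf{Q}\textbf{U}^{-1})=\textbf{U}(\bm{\Lambda}_i\textbf{P}+\bm{\Lambda}_j\textbf{Q})\textbf{U}^{-1}=\textbf{U}\textbf{U}^{-1}=\textbf{I},
\end{equation}
and since $\textbf{U}$ is unimodular, $\textbf{X}=\textbf{U}\textbf{P}\textbf{U}^{-1}$ and $\textbf{Y}=\textbf{U}\textbf{Q}\textbf{U}^{-1}$ are integer matrices. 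From $\textbf{M}_i\textbf{X}+\textbf{M}_j\textbf{Y}=\textbf{I}$ I would deduce left coprimeness: if $\textbf{A}$ is any cld of $\textbf{M}_i$ and $\textbf{M}_j$, then $\textbf{A}^{-1}\textbf{M}_i$ and $\textbf{A}^{-1}\textbf{M}_j$ are integer matrices, so $\textbf{A}^{-1}=(\textbf{A}^{-1}\textbf{M}_i)\textbf{X}+(\textbf{A}^{-1}\textbf{M}_j)\textbf{Y}$ is an integer matrix; as $\textbf{A}$ is itself integer, $\text{det}(\textbf{A})=\pm1$ and $\textbf{A}$ is unimodular. Hence $\textbf{M}_i$ and $\textbf{M}_j$ are left coprime, and being commutative, they are coprime by Proposition \ref{pr4}.

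Having verified the hypotheses, Theorem \ref{them2} applies and $\textbf{R}=\textbf{M}_1\textbf{M}_2\cdots\textbf{M}_L\textbf{U}'$ for some unimodular $\textbf{U}'$; using $\textbf{M}_1\cdots\textbf{M}_L=\textbf{U}\bm{\Lambda}_1\cdots\bm{\Lambda}_L\textbf{U}^{-1}$ (again telescoping the $\textbf{U}^{-1}\textbf{U}$ factors) gives $\textbf{R}=\textbf{U}\bm{\Lambda}_1\cdots\bm{\Lambda}_L\textbf{B}$ with $\textbf{B}=\textbf{U}^{-1}\textbf{U}'$ unimodular, matching the stated form, and the reconstruction is exactly the one in Theorem \ref{them2}. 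I expect the only subtle point to be the coprimeness-lifting step; an alternative, essentially bookkeeping-free route is to observe that (\ref{sspeal}) is the special case $\textbf{V}=\textbf{U}^{-1}$ of (\ref{jiafafa}) with pairwise coprime $\bm{\Lambda}_i$, so that $\bm{\Lambda}$ in (\ref{shenmeshi}) equals $\bm{\Lambda}_1\cdots\bm{\Lambda}_L$ and Corollary \ref{shuju} applies directly.
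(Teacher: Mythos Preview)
Your proposal is correct and follows essentially the same route as the paper: the paper simply remarks that for moduli of the form (\ref{sspeal}) ``it is readily verified that the moduli are pairwise commutative and coprime'' and then declares the result a direct consequence of Theorem \ref{them2} (or Corollary \ref{shuju}), which is precisely what you carry out in detail, including the alternative via Corollary \ref{shuju} that you mention at the end. Your explicit Bezout-lifting argument for coprimeness is a clean way to unpack the paper's ``readily verified'' step.
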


It is worth pointing out that the results of the MD-CRT for integer vectors in this section are closely related to the already established results on the abstract CRT for rings \cite{crt_integer}. In the context of the non-commutative ring $\mathbb{Z}^{D\times D}$ of integer matrices, let $\mathcal{M}_i=\textbf{M}_i\mathbb{Z}^{D\times D}$ for $1\leq i\leq L$ be right ideals in $\mathbb{Z}^{D\times D}$, where $\textbf{M}_i$'s are pairwise left coprime. Let $\mathbb{Z}^{D\times D}/\textbf{M}_i\mathbb{Z}^{D\times D}$, called the quotient ring of $\mathbb{Z}^{D\times D}$ by $\mathcal{M}_i$, be defined as the set of all cosets of $\mathcal{M}_i$ (i.e., $\mathbb{Z}^{D\times D}/\textbf{M}_i\mathbb{Z}^{D\times D}=\{\textbf{R}+\mathcal{M}_i\,|\,\, \textbf{R}\in\mathbb{Z}^{D\times D}\}$). Based on the Bezout's theorem in Proposition \ref{pr3}, there exists a ring isomorphism
\begin{equation}\label{isomap}
\mathbb{Z}^{D\times D}/\cap_{i}\mathcal{M}_i\cong\mathbb{Z}^{D\times D}/\mathcal{M}_1\oplus \mathbb{Z}^{D\times D}/\mathcal{M}_2\oplus\cdots\oplus\mathbb{Z}^{D\times D}/\mathcal{M}_L,
\end{equation}
where $\oplus$ stands for the direct product of rings. Given $\textbf{A}, \textbf{B}\in\mathbb{Z}^{D\times D}$, $\textbf{A}$ is congruent to $\textbf{B}$ modulo $\textbf{M}_i$ if and only if $\textbf{A}-\textbf{B}\in\mathcal{M}_i$. The elements in $\mathbb{Z}^{D\times D}/\mathcal{M}_i$ can be taken as remainder classes of $\mathbb{Z}^{D\times D}$ modulo $\textbf{M}_i$. We can rephrase the isomorphism in (\ref{isomap}) by stating that
the system of congruences modulo $\textbf{M}_i$'s can be solved uniquely. This result can be correspondingly generalized to the case with arbitrary nonsingular integer matrices $\textbf{M}_i$'s.
As the modulo operation on a matrix is carried out independently along every column of the matrix, the MD-CRT for integer vectors in this paper can be regarded as a special case of the abstract CRT in the specific algebraic settings of $\mathbb{Z}^{D\times D}$.

We conclude this section by showing an example to explain how to implement the MD-CRT for integer vectors step by step. Since this example involves a family of $2\times2$ integer circulant matrices, let us first introduce some existing results for integer circulant matrices.
An integer matrix is said to be circulant if each row can be obtained from the preceding row by a right circular shift, e.g., a $2\times 2$ integer circulant matrix $\textbf{P}=\left(
                  \begin{array}{cc}
                    p & q \\
                    q & p \\
                  \end{array}
                \right)$.
It is obvious that integer circulant matrices are commutative with one another. It has been proved in \cite{PPV5,PPV6} that any two $2\times 2$ integer circulant matrices
$\textbf{P}_1=\left(
                  \begin{array}{cc}
                    p_1 & q_1 \\
                    q_1 & p_1 \\
                  \end{array}
                \right)$ and
                $\textbf{P}_2=\left(
                  \begin{array}{cc}
                    p_2 & q_2 \\
                    q_2 & p_2 \\
                  \end{array}
                \right)$
are coprime, if and only if $p_1+q_1$ is coprime with $p_2+q_2$, and $p_1-q_1$ is coprime with $p_2-q_2$. A trivial subclass of $2\times 2$ integer circulant matrices with all equal elements is excluded from consideration in \cite{PPV5,PPV6} and this paper.
We then prove that a $2\times2$ integer circulant matrix $\textbf{P}=\left(
                  \begin{array}{cc}
                    p & q \\
                    q & p \\
                  \end{array}
                \right)$
with $q\neq0$ cannot be diagonalized as in (\ref{sspeal}).

\begin{lemma}\label{wwwwww}
A $2\times2$ integer circulant matrix $\textbf{P}=\left(
                  \begin{array}{cc}
                    p & q \\
                    q & p \\
                  \end{array}
                \right)$ with $q\neq0$ cannot be diagonalized as $\textbf{P}=\textbf{U}\bm{\Lambda}\textbf{U}^{-1}$, where $\textbf{U}$ is a $2\times2$ unimodular matrix and $\bm{\Lambda}$ is a diagonal integer matrix.
\end{lemma}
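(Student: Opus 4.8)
The plan is to exploit the rigid eigenstructure of $\textbf{P}$ together with the integrality constraint forced by unimodularity. First I would diagonalize $\textbf{P}$ over the reals: being circulant, $\textbf{P}$ has the eigenvector $(1,1)^T$ with eigenvalue $p+q$ and the eigenvector $(1,-1)^T$ with eigenvalue $p-q$. Since $q\neq0$, these two eigenvalues are distinct, so each eigenspace is exactly one-dimensional.

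Next, arguing by contradiction, I would suppose $\textbf{P}=\textbf{U}\bm{\Lambda}\textbf{U}^{-1}$ with $\textbf{U}$ a $2\times2$ unimodular matrix and $\bm{\Lambda}=\text{diag}(\lambda_1,\lambda_2)$ a diagonal integer matrix. Rewriting this as $\textbf{P}\textbf{U}=\textbf{U}\bm{\Lambda}$ shows that the $i$-th column $\textbf{u}_i$ of $\textbf{U}$ satisfies $\textbf{P}\textbf{u}_i=\lambda_i\textbf{u}_i$; that is, each column of $\textbf{U}$ is an eigenvector of $\textbf{P}$, and because similar matrices share their spectrum, $\{\lambda_1,\lambda_2\}=\{p+q,p-q\}$. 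As the eigenspaces are one-dimensional and spanned by $(1,1)^T$ and $(1,-1)^T$, each column $\textbf{u}_i$ must be a scalar multiple of one of these two fixed vectors.

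The crucial step is to combine this with the integrality of $\textbf{U}$. Since $\textbf{U}$ is unimodular, it is an integer matrix with nonsingular, hence nonzero, columns; and a nonzero integer vector that is a scalar multiple of $(1,\pm1)^T$ is necessarily an \emph{integer} multiple of it. Thus, up to reordering the diagonal entries of $\bm{\Lambda}$, I may write $\textbf{U}=\left(\begin{smallmatrix} a & b\\ a & -b\end{smallmatrix}\right)$ for some nonzero integers $a,b$. Then $\det(\textbf{U})=-2ab$, so $|\text{det}(\textbf{U})|=2|ab|\geq2$, which contradicts the requirement $|\text{det}(\textbf{U})|=1$ for a unimodular matrix. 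This contradiction establishes the claim.

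I expect the only genuinely delicate point to be justifying that the columns of $\textbf{U}$ are forced into the two fixed one-dimensional eigenspaces: this relies on the eigenvalues $p+q$ and $p-q$ being distinct, which is exactly where the hypothesis $q\neq0$ enters. Once that is secured, the unavoidable factor of $2$ in $\det(\textbf{U})$ makes unimodularity impossible, and the only bookkeeping that remains is the harmless swap of the two diagonal entries of $\bm{\Lambda}$, which yields the same determinant up to sign.
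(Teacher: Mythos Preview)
Your proof is correct and follows essentially the same route as the paper: identify the two one-dimensional eigenspaces of $\textbf{P}$ spanned by $(1,1)^T$ and $(1,-1)^T$, use $q\neq0$ to force the columns of $\textbf{U}$ to be nonzero integer multiples of these vectors, and then observe that $\det(\textbf{U})$ is an even integer, contradicting unimodularity. Your use of the spectrum argument to immediately place the two columns in distinct eigenspaces is slightly cleaner than the paper's enumeration of four possible forms of $\textbf{U}$, but the core idea is identical.
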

\begin{proof}
See Appendix \ref{ap7}.
\end{proof}

\begin{example}
 Let $L=3$ moduli be
$\textbf{M}_i=\textbf{M}\bm{\Gamma}_i$ for $1\leq i\leq3$, where $\bm{\Gamma}_i$'s are pairwise coprime integer circulant matrices, i.e., $\bm{\Gamma}_1=\left(
                  \begin{array}{cc}
                    4 & -1 \\
                    -1 & 4 \\
                  \end{array}
                \right)$, $\bm{\Gamma}_2=\left(
                  \begin{array}{cc}
                    7 & 4 \\
                    4 & 7 \\
                  \end{array}
                \right)$, and $\bm{\Gamma}_3=\left(
                  \begin{array}{cc}
                    -2 & 6 \\
                    6 & -2 \\
                  \end{array}
                \right)$.
We then consider the following two cases that are not covered by \cite{PPV1} or Corollary \ref{cor4} in this paper.
\begin{itemize}
  \item[$\romannumeral1)$] $\textbf{M}$ is commutative and coprime with each $\bm{\Gamma}_i$ for $1\leq i\leq3$. This case corresponds to the moduli given in Corollary \ref{cor_xiao}. Without loss of generality, we take $\textbf{M}=\left(
                  \begin{array}{cc}
                    4 & 3 \\
                    3 & 4 \\
                  \end{array}
                \right),$ which is also an integer circulant matrix.
One can see that $\textbf{M}_i$ for $1\leq i\leq3$ in this case are $2\times2$ integer circulant matrices and their nondiagonal elements are non-zero. Therefore, from Lemma \ref{wwwwww}, the moduli $\textbf{M}_i$'s cannot be diagonalized as in (\ref{sspeal}).
Let $\textbf{R}=\textbf{M}\bm{\Gamma}_1\bm{\Gamma}_2\bm{\Gamma}_3=\left(\begin{array}{cc}
                    402 & 522 \\
                    522 & 402 \\
                  \end{array}\right)$
and $\textbf{m}=\left(\begin{array}{cc}
                    402 & 522 \\
                    522 & 402 \\
                  \end{array}\right)\left(
     \begin{array}{c}
      1/6 \\
      1/2 \\
     \end{array}
     \right)=\left(
     \begin{array}{c}
      328 \\
      288 \\
     \end{array}
     \right)\in\mathcal{N}(\textbf{R})$. The remainders of $\textbf{m}$ modulo $\textbf{M}_i$ for $1\leq i\leq3$ are calculated from (\ref{comput2}), respectively, i.e., $\textbf{r}_1=\left(
                  \begin{array}{c}
                    14 \\
                    14 \\
                  \end{array}
                \right)$, $\textbf{r}_2=\left(
                  \begin{array}{c}
                    39 \\
                    38 \\
                  \end{array}
                \right)$, and $\textbf{r}_3=\left(
                  \begin{array}{c}
                    14 \\
                    14 \\
                  \end{array}
                \right)$.
Conversely, we can reconstruct $\textbf{m}$ from its remainders $\textbf{r}_i$ for $1\leq i\leq3$ via the MD-CRT for integer vectors in Corollary \ref{cor_xiao}. Let $\textbf{N}_1=\textbf{M}\bm{\Gamma}_1$, $\textbf{N}_2=\bm{\Gamma}_2$, and $\textbf{N}_3=\bm{\Gamma}_3$. Let $\textbf{W}_1=\textbf{N}_2\textbf{N}_3$, $\textbf{W}_2=\textbf{N}_1\textbf{N}_3$, and $\textbf{W}_3=\textbf{N}_1\textbf{N}_2$, and then by following the procedure (\ref{calcal})--(\ref{zhongdian}) to calculate the corresponding $\widehat{\textbf{W}}_i$ for each $1\leq i\leq 3$ in the Bezout's theorem such that (\ref{bbbbb}) holds, we get
$\widehat{\textbf{W}}_1=\left(\begin{array}{cc}
                    9 & -3 \\
                    23 & -7 \\
                  \end{array}\right)$, $\widehat{\textbf{W}}_2=\left(\begin{array}{cc}
                    11 & -4 \\
                    -3 & 1 \\
                  \end{array}\right)$, and $\widehat{\textbf{W}}_3=\left(\begin{array}{cc}
                    -7 & 8 \\
                    50 & -57 \\
                  \end{array}\right)$. Then, from the reconstruction formula in (\ref{jiejie}), we have
\begin{equation*}
\begin{split}
\textbf{m} & = \left\langle\sum_{i=1}^{3}\textbf{W}_i\widehat{\textbf{W}}_i\textbf{r}_i\right\rangle_{\textbf{R}} \\
 & = \left\langle \left(\begin{array}{c}
                    8456 \\
                    5096 \\
                  \end{array}\right)+\left(\begin{array}{c}
                    1196 \\
                    15436 \\
                  \end{array}\right)+\left(\begin{array}{c}
                    -8862 \\
                    -10542 \\
                  \end{array}\right) \right\rangle_{\textbf{R}}\\
& =\left\langle \left(\begin{array}{c}
                    790 \\
                    9990 \\
                  \end{array}\right)\right\rangle_{\textbf{R}}=\left(\begin{array}{c}
                    328 \\
                    288 \\
                  \end{array}\right).
\end{split}
\end{equation*}

  \item[$\romannumeral2)$] $\textbf{M}$ is an arbitrary nonsingular integer matrix, which is not commutative or coprime with $\bm{\Gamma}_i$'s. This case corresponds to the general moduli given in Theorem \ref{them1}. Without loss of generality, we take $\textbf{M}=\left(
                  \begin{array}{cc}
                    2 & 3 \\
                    4 & 5 \\
                  \end{array}
                \right)$.
Obviously, $\textbf{M}_i$'s are not pairwise commutative, and thus they cannot be diagonalized as in (\ref{sspeal}). Let
$\textbf{R}=\textbf{M}\bm{\Gamma}_1\bm{\Gamma}_2\bm{\Gamma}_3=\left(\begin{array}{cc}
                    390 & 270 \\
                    654 & 534 \\
                  \end{array}\right)$
and $\textbf{m}=\left(\begin{array}{cc}
                    390 & 270 \\
                    654 & 534 \\
                  \end{array}\right)\left(
     \begin{array}{c}
      1/2 \\
      1/3 \\
     \end{array}
     \right)=\left(
     \begin{array}{c}
      285 \\
      505 \\
     \end{array}
     \right)\in\mathcal{N}(\textbf{R})$. The remainders of $\textbf{m}$ modulo $\textbf{M}_i$ for $1\leq i\leq3$ are calculated from (\ref{comput2}), respectively, i.e., $\textbf{r}_1=\left(
                  \begin{array}{c}
                    5 \\
                    9 \\
                  \end{array}
                \right)$, $\textbf{r}_2=\left(
                  \begin{array}{c}
                    27 \\
                    49 \\
                  \end{array}
                \right)$, and $\textbf{r}_3=\left(
                  \begin{array}{c}
                    3 \\
                    7 \\
                  \end{array}
                \right)$.
Conversely, we can reconstruct $\textbf{m}$ from its remainders $\textbf{r}_i$ for $1\leq i\leq3$ via the MD-CRT for integer vectors in Theorem \ref{them1}. In this case, even though we do not have an explicit reconstruction formula as Case $\romannumeral1)$, we can reconstruct $\textbf{m}$ by following the algorithm exhibited in the proof of Theorem \ref{them1}. One can readily verify that $\textbf{M}$ and $\textbf{R}_1=\textbf{M}\bm{\Gamma}_1\bm{\Gamma}_2$ are a gcld and an lcrm of $\textbf{M}_1$ and $\textbf{M}_2$, respectively. Based on the Bezout's theorem in Proposition \ref{pr3}, we follow the procedure (\ref{calcal})--(\ref{zhongdian}) to get $\textbf{P}_1=\left(\begin{array}{cc}
                    3 & 11 \\
                    1 & 4 \\
                  \end{array}\right)$ and $\textbf{P}_2=\left(\begin{array}{cc}
                    -2 & -8 \\
                    1 & 4 \\
                  \end{array}\right)$ such that $\textbf{M}_1\textbf{P}_1+\textbf{M}_2\textbf{P}_2=\textbf{M}$. So, from (\ref{them1eq3}), we obtain
$\textbf{m}_1=\textbf{M}_2\textbf{P}_2\textbf{M}^{-1}\textbf{r}_1+\textbf{M}_1\textbf{P}_1\textbf{M}^{-1}\textbf{r}_2=\left(
                  \begin{array}{c}
                    510 \\
                    994 \\
                  \end{array}
                \right)$,
which satisfies
\begin{equation*}
\left\{\begin{array}{ll}
\textbf{m}_1\equiv \textbf{r}_1 \!\!\mod \textbf{M}_1\\
\textbf{m}_1\equiv \textbf{r}_2 \!\!\mod \textbf{M}_2.\\
\end{array}\right.
\end{equation*}
We then calculate the remainder $\bm{\nu}_1$ of $\textbf{m}_1$ modulo $\textbf{R}_1$, i.e.,
$\bm{\nu}_1=\langle\textbf{m}_1\rangle_{\textbf{R}_1}=\left(
                  \begin{array}{c}
                    30 \\
                    52 \\
                  \end{array}
                \right)$.
Following the above procedure, we calculate a solution of a system of congruences:
\begin{equation*}
\left\{\begin{array}{ll}
\textbf{m}\equiv \bm{\nu}_1 \!\!\mod \textbf{R}_1\\
\textbf{m}\equiv \textbf{r}_3 \!\!\mod \textbf{M}_3.\\
\end{array}\right.
\end{equation*}
It is also readily verified that $\textbf{M}$ and $\textbf{R}_2=\textbf{R}=\textbf{M}\bm{\Gamma}_1\bm{\Gamma}_2\bm{\Gamma}_3$ are a gcld and an lcrm of $\textbf{R}_1$ and $\textbf{M}_3$, respectively.
Based on the Bezout's theorem in Proposition \ref{pr3}, we follow the procedure (\ref{calcal})--(\ref{zhongdian}) to get $\textbf{Q}_1=\left(\begin{array}{cc}
                    8 & -21 \\
                    -7 & 18 \\
                  \end{array}\right)$ and $\textbf{Q}_2=\left(\begin{array}{cc}
                    10 & -24 \\
                    -18 & 49 \\
                  \end{array}\right)$ such that $\textbf{R}_1\textbf{Q}_1+\textbf{M}_3\textbf{Q}_2=\textbf{M}$. From (\ref{them1eq3}), we get
$\textbf{m}_2=\textbf{M}_3\textbf{Q}_2\textbf{M}^{-1}\bm{\nu}_1+\textbf{R}_1\textbf{Q}_1\textbf{M}^{-1}\textbf{r}_3=\left(
                  \begin{array}{c}
                    -375 \\
                    1429 \\
                  \end{array}
                \right)$.
Therefore, we get
\begin{equation*}
\textbf{m}=\langle\textbf{m}_2\rangle_{\textbf{R}_2}=\left(
                  \begin{array}{c}
                    285 \\
                    505 \\
                  \end{array}
                \right).
\end{equation*}

\end{itemize}
\end{example}

\section{Robust MD-CRT for Integer Vectors}\label{sec4}
In practice, signals of interest are usually subject to noise, and accordingly the detected remainders may be erroneous in many signal processing applications of the CRT.
To this end, the robust CRT for integers has been proposed in \cite{xiaxianggenxia,xiaoweili,wenjiewang} and further
dedicatedly studied in \cite{guangwuxu,yangyang,yangyang2,xiaoli1,xiaoli2}.
It basically says that even though every remainder has a small error, a large nonnegative integer can be robustly reconstructed in the sense that the reconstruction error is upper bounded by the remainder error bound.
In this section, motivated by the applications in MD signal processing, we want to extend the robust CRT for integers to the MD case, called the robust MD-CRT for integer vectors.
Before presenting that, we first review the robust CRT for integers in \cite{wenjiewang}, for comparison purposes.

\begin{proposition}[Robust CRT for integers \cite{wenjiewang}]\label{robustcrt_integer}
Let $L$ moduli be $M_i=M\Gamma_i$ for $1\leq i\leq L$, where $\Gamma_i$'s are pairwise coprime positive integers, and $M>1$ is an arbitrary positive integer. Let $R\triangleq M\Gamma_1\Gamma_2\cdots\Gamma_L$ be their lcm. For an integer $m\in\mathcal{N}(R)$ (i.e., $0\leq m<R$), let $r_i$'s be its remainders, i.e.,  $r_i=\langle m\rangle_{M_i}$ or
\begin{equation}
m=M_in_i+r_i\;\text{ for }1\leq i\leq L,
\end{equation}
where $n_i$'s are its folding integers. Let $\tilde{r}_i\triangleq r_i+\triangle r_i$, $1\leq i\leq L$, denote the erroneous remainders, where $\triangle r_i$'s are the remainder errors. From the erroneous remainders $\tilde{r}_i$'s, we can accurately determine the folding integers $n_i$'s, if and only if
\begin{equation}\label{chifan}
-\frac{M}{2}\leq \triangle r_i-\triangle r_1<\frac{M}{2}\;\text{ for }2\leq i\leq L.
\end{equation}
In addition, let $\tau$ be the remainder error bound, i.e., $|\triangle r_i|\leq\tau$ for $1\leq i\leq L$, and a simple sufficient condition for accurately determining the folding integers $n_i$'s is derived as
\begin{equation}
\tau<\frac{M}{4}.
\end{equation}
Once the folding integers $n_i$'s are accurately obtained, a robust reconstruction of $m$ can be calculated by
\begin{equation}
\hat{m}=\left[\frac{1}{L}\sum_{i=1}^{L}(M_in_i+\tilde{r}_i)\right],
\end{equation}
where $[\cdot]$ denotes the rounding operation. Obviously, the reconstruction error is upper bounded by $\tau$, i.e., $|\hat{m}-m|\leq\tau$.
\end{proposition}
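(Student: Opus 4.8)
The plan is to reduce this robust remaindering problem to an ordinary CRT over the pairwise coprime moduli $\Gamma_i$ by peeling off the common factor $M$. First I would set $\bar r=\langle m\rangle_{M}$ and $\ell=\lfloor m/M\rfloor$, so that $m=M\ell+\bar r$ with $0\le\bar r<M$ and $\ell\in\mathcal N(\Gamma_1\Gamma_2\cdots\Gamma_L)$. Because each $M_i=M\Gamma_i$ is a multiple of $M$, every true remainder satisfies $r_i\equiv\bar r\pmod M$; hence the quotients $\xi_i:=\lfloor r_i/M\rfloor=(r_i-\bar r)/M$ obey $0\le\xi_i<\Gamma_i$ and $\xi_i=\langle\ell\rangle_{\Gamma_i}$. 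Writing $\ell=\Gamma_i n_i+\xi_i=\Gamma_1 n_1+\xi_1$ then yields the single governing relation $\Gamma_1 n_1\equiv \xi_i-\xi_1\pmod{\Gamma_i}$ for $2\le i\le L$, and since the $\Gamma_i$ are pairwise coprime and $n_1=\lfloor\ell/\Gamma_1\rfloor\in\mathcal N(\Gamma_2\cdots\Gamma_L)$, the CRT for integers (Proposition \ref{crt_integer}) shows that $n_1$, and therefore every folding integer $n_i$, is uniquely determined by the difference vector $(\xi_2-\xi_1,\dots,\xi_L-\xi_1)$. This converts ``determine the folding integers'' into ``determine the differences $\xi_i-\xi_1$.''

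The second step handles the errors. Since $r_i-r_1=M(\xi_i-\xi_1)$ exactly, I would estimate each difference by rounding the scaled erroneous difference, $\widehat{\xi_i-\xi_1}=\left[(\tilde r_i-\tilde r_1)/M\right]$, and observe that $(\tilde r_i-\tilde r_1)/M=(\xi_i-\xi_1)+(\triangle r_i-\triangle r_1)/M$. The rounding returns the exact integer $\xi_i-\xi_1$ precisely when $(\triangle r_i-\triangle r_1)/M$ stays within the half-integer window, which is exactly $-M/2\le\triangle r_i-\triangle r_1<M/2$ under the chosen rounding convention; this gives the sufficiency half of the iff. For necessity I would argue by contraposition: when the inequality fails for some index the rounded difference cannot equal $\xi_i-\xi_1$, and, using that the difference vector determines $(n_1,\dots,n_L)$ injectively, this perturbs the reconstructed folding integers away from the truth. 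The simple sufficient condition $\tau<M/4$ then follows from the triangle inequality $|\triangle r_i-\triangle r_1|\le|\triangle r_i|+|\triangle r_1|\le 2\tau<M/2$.

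Finally, for the reconstruction bound I would substitute the now-correct folding integers back. Because $M_i n_i+r_i=m$, each term $M_i n_i+\tilde r_i=m+\triangle r_i$, so the average before rounding equals $m+\frac1L\sum_{i=1}^{L}\triangle r_i$, whose deviation from $m$ is at most $\frac1L\sum_{i=1}^{L}|\triangle r_i|\le\tau$; rounding to the nearest integer produces the integer estimate $\hat m$ and keeps the error within $\tau$.

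I expect the necessity half of the iff to be the main obstacle: sufficiency, the $\tau<M/4$ bound, and the averaging are essentially forced once the reduction to $\ell$ is in place, but the ``only if'' requires showing that \emph{any} violation of the rounding inequality genuinely corrupts the folding integers. The delicate point is that a rounding error which happens to be a multiple of $\Gamma_i$ would leave $n_1\bmod\Gamma_i$ unchanged, so the cleanest route is to first prove the injectivity of the difference-to-folding correspondence carefully (via the uniqueness of the CRT solution for $n_1$ in $\mathcal N(\Gamma_2\cdots\Gamma_L)$) and then verify that a violated inequality forces the realized error to change at least one residue $\xi_i-\xi_1\bmod\Gamma_i$, so that the unique reconstructed $n_1$ must differ from the truth.
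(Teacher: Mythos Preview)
The paper does not actually prove this proposition; it is quoted from \cite{wenjiewang} as background. The closest in-paper analogue is the proof of Theorem~\ref{them_rcrt}, the MD generalization, and your reduction (peel off the common factor $M$, work with $\ell=\lfloor m/M\rfloor$ and $\xi_i=\langle\ell\rangle_{\Gamma_i}$, estimate $\xi_i-\xi_1$ by rounding $(\tilde r_i-\tilde r_1)/M$, then CRT) is exactly the $1$-dimensional specialization of that argument. Your sufficiency, the $\tau<M/4$ bound, and the averaging estimate are all correct and match the paper's approach.

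The gap is in your plan for necessity. You propose to ``verify that a violated inequality forces the realized error to change at least one residue $\xi_i-\xi_1\bmod\Gamma_i$,'' but this is not true in general: the rounding error $e_i:=\widehat{\xi_i-\xi_1}-(\xi_i-\xi_1)$ can be a nonzero multiple of $\Gamma_i$, in which case the CRT input $\widehat{\xi_i-\xi_1}\bmod\Gamma_i$ is unchanged and the reconstructed $n_1$ is still correct. The fix is not to force a residue change but to do a case split, exactly as the paper does in Cases~A and~B of the proof of Theorem~\ref{them_rcrt}. If $e_i\not\equiv 0\pmod{\Gamma_i}$ for some $i$, then by CRT uniqueness the recovered $n_1$ is wrong. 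If instead every $e_i\equiv 0\pmod{\Gamma_i}$ but some $e_k\neq 0$, then $n_1$ is recovered correctly, yet the algorithm computes $\tilde n_k=(\Gamma_1 n_1-\widehat{\xi_k-\xi_1})/\Gamma_k=n_k-e_k/\Gamma_k\neq n_k$. Either way at least one folding integer is wrong, which is the desired contrapositive. Once you replace your residue-change claim with this two-case argument, the proof goes through.
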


In \cite{wenjiewang}, a closed-form algorithm for determining the folding integers $n_i$'s in Proposition \ref{robustcrt_integer} was proposed as well. For more information on the robust CRT for integers, we refer the reader to a thorough review in \cite{xiaoli}.

Motivated by Proposition \ref{robustcrt_integer} or \cite{wenjiewang}, we propose the robust MD-CRT for integer vectors through accurately determining the folding vectors in the rest of this section. Before that, let us first state two significant definitions related to lattices.

\begin{definition}[The shortest vector problem (SVP) on lattices]
For a lattice $\text{LAT}(\textbf{M})$ that is generated by a nonsingular matrix $\textbf{M}$, the minimum distance of $\text{LAT}(\textbf{M})$ is the smallest distance between any two lattice points:
\begin{equation}
\lambda_{\text{LAT}(\textbf{M})}=\min_{\substack{\textbf{w},\textbf{v}\in\text{LAT}(\textbf{M}), \\\textbf{w}\neq\textbf{v}}}\lVert\textbf{w}-\textbf{v}\rVert.
\end{equation}
It is obvious that lattices are closed under addition and subtraction operations. Therefore, the minimum distance of $\text{LAT}(\textbf{M})$ is equivalently defined as the length (magnitude) of the shortest non-zero lattice point:
\begin{equation}\label{zuiduanjuli}
\lambda_{\text{LAT}(\textbf{M})}=\min_{\textbf{v}\in\text{LAT}(\textbf{M})\backslash\{\textbf{0}\}}\lVert\textbf{v}\rVert.
\end{equation}
\end{definition}

\begin{definition}[The closest vector problem (CVP) on lattices]
For a lattice $\text{LAT}(\textbf{M})$ that is generated by a nonsingular matrix $\textbf{M}\in\mathbb{R}^{D\times D}$, given an arbitrary point $\textbf{w}\in\mathbb{R}^{D}$, we find a closest lattice point of $\text{LAT}(\textbf{M})$ to $\textbf{w}$ by
\begin{equation}
\text{dist}(\text{LAT}(\textbf{M}),\textbf{w})=\min_{\textbf{v}\in\text{LAT}(\textbf{M})}\lVert\textbf{v}-\textbf{w}\rVert.
\end{equation}
\end{definition}

The SVP and CVP are the two most important computational problems on lattices. The algorithms for solving these problems either exactly or approximately have been extensively studied \cite{latticproblem,latticproblem2}. Note that the distance above can be measured by any norm of vectors, e.g., the Euclidean norm $\lVert\textbf{v}\rVert_2=\sqrt{\sum_i\lvert v(i)\rvert^2}$,
the $\ell_1$ norm $\lVert\textbf{v}\rVert_1=\sum_i\lvert v(i)\rvert$, and the $\ell_\infty$ norm $\lVert\textbf{v}\rVert_\infty=\max_i\lvert v(i)\rvert$.

Let $L$ nonsingular moduli $\textbf{M}_i\in\mathbb{Z}^{D\times D}$ for $1\leq i\leq L$ be given by
\begin{equation}\label{rrrr}
\textbf{M}_i=\textbf{M}\bm{\Gamma}_i,
\end{equation}
where $\bm{\Gamma}_i\in\mathbb{Z}^{D\times D}$ for $1\leq i\leq L$ are pairwise commutative and coprime, and $\textbf{M}\in\mathbb{Z}^{D\times D}$. Define
\begin{equation}\label{rangerange}
\mathcal{A}_i=\left\{\textbf{m}\in\mathbb{Z}^{D}\,|\,\, \lfloor \textbf{M}_i^{-1}\textbf{m}\rfloor\in\mathcal{N}(\bm{\Gamma}_1\cdots\bm{\Gamma}_{i-1}\bm{\Gamma}_{i+1}\cdots\bm{\Gamma}_L\textbf{U}_i)\right\}
\end{equation}
for $1\leq i\leq L$, where $\textbf{U}_i\in\mathbb{Z}^{D\times D}$ is any unimodular matrix.
Let $\widetilde{\textbf{r}}_i=\textbf{r}_i+\triangle\textbf{r}_i\in\mathbb{Z}^{D}$ for $1\leq i\leq L$ be the erroneous remainders of an integer vector $\textbf{m}$ with respect to the moduli $\textbf{M}_i$'s, where $\textbf{r}_i$'s and $\triangle\textbf{r}_i$'s are the remainders and remainder errors, respectively.

Since $\textbf{r}_i$'s are the remainders of $\textbf{m}$ with respect to the moduli $\textbf{M}_i$'s in (\ref{rrrr}), we have
\begin{equation}\label{fangchengzu}
\left\{\begin{array}{ll}
\textbf{m}=\textbf{M}\bm{\Gamma}_1\textbf{n}_1+\textbf{r}_1\\
\textbf{m}=\textbf{M}\bm{\Gamma}_2\textbf{n}_2+\textbf{r}_2\\
\:\:\:\:\;\,\vdots\\
\textbf{m}=\textbf{M}\bm{\Gamma}_L\textbf{n}_L+\textbf{r}_L.
\end{array}\right.
\end{equation}
Without loss of generality, we assume that $\textbf{m}\in\mathcal{A}_1$. Therefore, we treat the first equation in (\ref{fangchengzu}) as a reference to be subtracted from the other $L-1$ equations, and we get
\begin{equation}\label{fangchengzu1}
\left\{\begin{array}{ll}
\textbf{M}\bm{\Gamma}_1\textbf{n}_1-\textbf{M}\bm{\Gamma}_2\textbf{n}_2=\textbf{r}_2-\textbf{r}_1\\
\textbf{M}\bm{\Gamma}_1\textbf{n}_1-\textbf{M}\bm{\Gamma}_3\textbf{n}_3=\textbf{r}_3-\textbf{r}_1\\
\:\:\:\:\quad\quad\quad\quad\quad\quad\;\;\,\vdots\\
\textbf{M}\bm{\Gamma}_1\textbf{n}_1-\textbf{M}\bm{\Gamma}_L\textbf{n}_L=\textbf{r}_L-\textbf{r}_1.\\
\end{array}\right.
\end{equation}
Left-multiplying $\textbf{M}^{-1}$ on both sides of all the equations in (\ref{fangchengzu1}), we obtain
\begin{equation}\label{fangchengzu2}
\left\{\begin{array}{ll}
\bm{\Gamma}_1\textbf{n}_1-\bm{\Gamma}_2\textbf{n}_2=\textbf{M}^{-1}(\textbf{r}_2-\textbf{r}_1)\\
\bm{\Gamma}_1\textbf{n}_1-\bm{\Gamma}_3\textbf{n}_3=\textbf{M}^{-1}(\textbf{r}_3-\textbf{r}_1)\\
\:\:\:\:\quad\quad\quad\quad\;\;\,\vdots\\
\bm{\Gamma}_1\textbf{n}_1-\bm{\Gamma}_L\textbf{n}_L=\textbf{M}^{-1}(\textbf{r}_L-\textbf{r}_1).\\
\end{array}\right.
\end{equation}
From (\ref{fangchengzu2}), we know that $\textbf{M}^{-1}(\textbf{r}_i-\textbf{r}_1)$ for $2\leq i\leq L$ are integer vectors, i.e.,
\begin{equation}
\textbf{r}_i-\textbf{r}_1\in \text{LAT}(\textbf{M}).
\end{equation}
We then perform the modulo-$\bm{\Gamma}_i$ operation on both sides of the corresponding $(i-1)$-th equation in (\ref{fangchengzu2}) for $2\leq i\leq L$ to get
\begin{equation}\label{fangchengzu3}
\left\{\begin{array}{ll}
\bm{\Gamma}_1\textbf{n}_1\equiv \textbf{0}  \!\!\mod \bm{\Gamma}_1\\
\bm{\Gamma}_1\textbf{n}_1\equiv \textbf{M}^{-1}(\textbf{r}_2-\textbf{r}_1)  \!\!\mod \bm{\Gamma}_2\\
\bm{\Gamma}_1\textbf{n}_1\equiv \textbf{M}^{-1}(\textbf{r}_3-\textbf{r}_1)  \!\!\mod \bm{\Gamma}_3\\
\:\:\,\quad\quad\vdots\\
\bm{\Gamma}_1\textbf{n}_1\equiv \textbf{M}^{-1}(\textbf{r}_L-\textbf{r}_1)  \!\!\mod \bm{\Gamma}_L,\\
\end{array}\right.
\end{equation}
where the first equation is always available.

Since we know the erroneous remainders $\widetilde{\textbf{r}}_i$'s rather than the remainders $\textbf{r}_i$'s, we estimate $\textbf{r}_i-\textbf{r}_1$ for each $2\leq i\leq L$ by using a closest lattice point $\textbf{v}_i$ of $\text{LAT}(\textbf{M})$ to $\widetilde{\textbf{r}}_i-\widetilde{\textbf{r}}_1$, i.e.,
\begin{equation}\label{minx}
\textbf{v}_i\triangleq\argmin_{\textbf{v}\in\text{LAT}(\textbf{M})}\lVert\textbf{v}-(\widetilde{\textbf{r}}_i-\widetilde{\textbf{r}}_1)\rVert.
\end{equation}
Let $\widetilde{\textbf{n}}_i$ for $1\leq i\leq L$ be a set of solutions of (\ref{fangchengzu2}) when $\textbf{r}_i-\textbf{r}_1$ for $2\leq i\leq L$ are replaced with $\textbf{v}_i$. In summary, we have the following Algorithm \ref{algo} for obtaining $\widetilde{\textbf{n}}_i$'s.

\begin{algorithm}[!t]
\caption{}

\vspace{2mm}
\begin{algorithmic}[1]\label{algo}
\STATE Calculate $\textbf{v}_i$ for $2\leq i\leq L$ in (\ref{minx}) from $\widetilde{\textbf{r}}_i$ for $1\leq i\leq L$.
\STATE Calculate the remainder $\bm{\zeta}_i$ of $\textbf{M}^{-1}\textbf{v}_i$ modulo $\bm{\Gamma}_i$ for each $2\leq i\leq L$, i.e.,
\begin{equation}
\textbf{M}^{-1}\textbf{v}_i\equiv \bm{\zeta}_i  \!\!\mod \bm{\Gamma}_i,
\end{equation}
where $\bm{\zeta}_i\in\mathcal{N}(\bm{\Gamma}_i)$.
\STATE Calculate $\bm{\chi}_1\triangleq\bm{\Gamma}_1\widetilde{\textbf{n}}_1\in\mathcal{N}(\bm{\Gamma}_1\bm{\Gamma}_2\cdots\bm{\Gamma}_L\textbf{U}_1)$ via the CC MD-CRT for integer vectors in Theorem \ref{them2} from the following system of congruences
\begin{equation}\label{fangchengzu4}
\left\{\begin{array}{ll}
\bm{\Gamma}_1\widetilde{\textbf{n}}_1\equiv \textbf{0}  \!\!\mod \bm{\Gamma}_1\\
\bm{\Gamma}_1\widetilde{\textbf{n}}_1\equiv \bm{\zeta}_2  \!\!\mod \bm{\Gamma}_2\\
\bm{\Gamma}_1\widetilde{\textbf{n}}_1\equiv \bm{\zeta}_3  \!\!\mod \bm{\Gamma}_3\\
\:\:\,\quad\quad\vdots\\
\bm{\Gamma}_1\widetilde{\textbf{n}}_1\equiv \bm{\zeta}_L  \!\!\mod \bm{\Gamma}_L.\\
\end{array}\right.
\end{equation}
\STATE Calculate $\widetilde{\textbf{n}}_1=\bm{\Gamma}_1^{-1}\bm{\chi}_1\in\mathcal{N}(\bm{\Gamma}_2\bm{\Gamma}_3\cdots\bm{\Gamma}_L\textbf{U}_1)$, and then
\begin{equation}\label{xiaxxxx}
\widetilde{\textbf{n}}_i=\bm{\Gamma}_i^{-1}(\bm{\chi}_1-\textbf{M}^{-1}\textbf{v}_i)\;\text{ for }2\leq i\leq L.
\end{equation}
\end{algorithmic}
\end{algorithm}

Based on Algorithm \ref{algo}, we have the following result.
\begin{theorem}[Robust MD-CRT for integer vectors--\uppercase\expandafter{\romannumeral1}]\label{them_rcrt}
Let $L$ nonsingular moduli be given by (\ref{rrrr}). For an integer vector $\textbf{m}\in\bigcup_{i=1}^{L}\mathcal{A}_i$
(assuming without loss of generality that $\textbf{m}\in\mathcal{A}_{1}$),
we can accurately determine the folding vectors $\textbf{n}_i$'s of $\textbf{m}$ from the erroneous remainders $\widetilde{\textbf{r}}_i$'s by Algorithm \ref{algo}, if and only if
\begin{equation}\label{condition}
\bm{\theta}_i=\textbf{0}\;\text{ for }2\leq i\leq L,
\end{equation}
where $\bm{\theta}_i$ is defined by
\begin{equation}
\bm{\theta}_i\triangleq\argmin_{\bm{\theta}\in\text{LAT}(\textbf{M})}\lVert\bm{\theta}-(\triangle\textbf{r}_i-\triangle\textbf{r}_1)\rVert.
\end{equation}
Besides, we present two simple sufficient conditions for accurately determining the folding vectors $\textbf{n}_i$'s as follows.
\begin{enumerate}
  \item \textit{Condition 1:} A sufficient condition is given by
  \begin{equation}\label{condition_suf1}
\lVert\triangle\textbf{r}_i-\triangle\textbf{r}_1\rVert<\frac{\lambda_{\text{LAT}(\textbf{M})}}{2}\;\text{ for }2\leq i\leq L.
\end{equation}
  \item \textit{Condition 2:} Let $\tau$ be the remainder error bound, i.e., $\lVert\triangle\textbf{r}_i\rVert\leq\tau$ for $1\leq i\leq L$, and then a much simpler sufficient condition is given by
\begin{equation}\label{condition_suf2}
\tau<\frac{\lambda_{\text{LAT}(\textbf{M})}}{4}.
\end{equation}
\end{enumerate}
Once the folding vectors $\textbf{n}_i$'s are accurately obtained, a robust reconstruction of $\textbf{m}$ can be calculated by $\widetilde{\textbf{m}}=\frac{1}{L}\sum_{i=1}^L(\textbf{M}_{i}\textbf{n}_{i}+\widetilde{\textbf{r}}_{i})$. Obviously, the reconstruction error is upper bounded by $\tau$, i.e.,
\begin{equation}
\lVert\widetilde{\textbf{m}}-\textbf{m}\rVert\leq\tau.
\end{equation}
\end{theorem}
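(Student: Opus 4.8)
The plan is to reduce the entire statement to one geometric fact about $\text{LAT}(\textbf{M})$. Since $\textbf{m}=\textbf{M}\bm{\Gamma}_i\textbf{n}_i+\textbf{r}_i$ for every $i$, subtracting the first relation gives, as in (\ref{fangchengzu2}), that $\textbf{r}_i-\textbf{r}_1\in\text{LAT}(\textbf{M})$ for $2\le i\le L$. Hence the noisy difference satisfies $\widetilde{\textbf{r}}_i-\widetilde{\textbf{r}}_1=(\textbf{r}_i-\textbf{r}_1)+(\triangle\textbf{r}_i-\triangle\textbf{r}_1)$ with the first term already on the lattice, so by translation invariance of $\text{LAT}(\textbf{M})$ under its own points the closest lattice point $\textbf{v}_i$ of (\ref{minx}) splits as $\textbf{v}_i=(\textbf{r}_i-\textbf{r}_1)+\bm{\theta}_i$, where $\bm{\theta}_i$ is exactly the closest lattice point to $\triangle\textbf{r}_i-\triangle\textbf{r}_1$. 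First I would establish this decomposition, which shows that Algorithm \ref{algo} run on the noisy data reproduces the algorithm run on the true data precisely when every $\bm{\theta}_i=\textbf{0}$.

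For the sufficiency half of the ``if and only if'' I would assume $\bm{\theta}_i=\textbf{0}$ for all $i$, so $\textbf{v}_i=\textbf{r}_i-\textbf{r}_1$ and the congruences (\ref{fangchengzu4}) carry the true right-hand sides of (\ref{fangchengzu3}). Because $\textbf{m}\in\mathcal{A}_1$ forces $\textbf{n}_1\in\mathcal{N}(\bm{\Gamma}_2\cdots\bm{\Gamma}_L\textbf{U}_1)$ and hence $\bm{\chi}_1=\bm{\Gamma}_1\textbf{n}_1\in\mathcal{N}(\bm{\Gamma}_1\bm{\Gamma}_2\cdots\bm{\Gamma}_L\textbf{U}_1)$, the CC MD-CRT of Theorem \ref{them2} recovers $\bm{\chi}_1$ uniquely, so $\widetilde{\textbf{n}}_1=\textbf{n}_1$; substituting $\textbf{v}_i=\textbf{r}_i-\textbf{r}_1$ into (\ref{xiaxxxx}) and comparing with (\ref{fangchengzu2}) then gives $\widetilde{\textbf{n}}_i=\textbf{n}_i$ for every $i$.

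The main obstacle is the necessity half: proving that if some $\bm{\theta}_i\neq\textbf{0}$ then at least one folding vector is wrong, so that the condition is sharp. Here I would write $\textbf{M}^{-1}\textbf{v}_i=\textbf{M}^{-1}(\textbf{r}_i-\textbf{r}_1)+\textbf{M}^{-1}\bm{\theta}_i$ with $\textbf{M}^{-1}\bm{\theta}_i$ a nonzero integer vector, and split into two cases. If $\textbf{M}^{-1}\bm{\theta}_i\notin\text{LAT}(\bm{\Gamma}_i)$, then $\bm{\zeta}_i$ in Step 2 is corrupted to a different element of $\mathcal{N}(\bm{\Gamma}_i)$, and by bijectivity of the CC MD-CRT the recovered $\bm{\chi}_1$ differs from $\bm{\Gamma}_1\textbf{n}_1$, so $\widetilde{\textbf{n}}_1\neq\textbf{n}_1$. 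If instead $\textbf{M}^{-1}\bm{\theta}_i\in\text{LAT}(\bm{\Gamma}_i)$, then $\bm{\zeta}_i$ and $\bm{\chi}_1$ are unaffected, but (\ref{xiaxxxx}) yields $\widetilde{\textbf{n}}_i=\textbf{n}_i-\bm{\Gamma}_i^{-1}\textbf{M}^{-1}\bm{\theta}_i$, which differs from $\textbf{n}_i$ by a nonzero integer vector. Either way the output is incorrect, completing the equivalence.

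The two sufficient conditions and the error bound are then routine. For Condition 1, $\textbf{0}\in\text{LAT}(\textbf{M})$ lies within distance $\lVert\triangle\textbf{r}_i-\triangle\textbf{r}_1\rVert<\lambda_{\text{LAT}(\textbf{M})}/2$ of $\triangle\textbf{r}_i-\triangle\textbf{r}_1$, whereas any nonzero lattice point is at distance $>\lambda_{\text{LAT}(\textbf{M})}/2$ by the triangle inequality and the definition (\ref{zuiduanjuli}), forcing $\bm{\theta}_i=\textbf{0}$. Condition 2 follows since $\lVert\triangle\textbf{r}_i-\triangle\textbf{r}_1\rVert\le 2\tau<\lambda_{\text{LAT}(\textbf{M})}/2$ reduces to Condition 1. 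Finally, once all $\textbf{n}_i$ are exact, the identity $\textbf{M}_i\textbf{n}_i+\widetilde{\textbf{r}}_i=\textbf{m}+\triangle\textbf{r}_i$ gives $\widetilde{\textbf{m}}-\textbf{m}=\frac{1}{L}\sum_{i=1}^{L}\triangle\textbf{r}_i$, and the triangle inequality bounds the reconstruction error by $\tau$.
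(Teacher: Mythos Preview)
Your proposal is correct and follows essentially the same route as the paper: the translation identity $\textbf{v}_i=(\textbf{r}_i-\textbf{r}_1)+\bm{\theta}_i$, the sufficiency via CC MD-CRT uniqueness, the two-case necessity argument, and the triangle-inequality derivation of Conditions~1--2 all match the paper's proof. One small point to tighten when you write it up: your necessity case split should be made \emph{global} rather than attached to a single index $i$---in your second case the claim ``$\bm{\chi}_1$ is unaffected'' requires that \emph{every} $\bm{\zeta}_j$ is correct, not just $\bm{\zeta}_i$, so the dichotomy should read ``either some $j$ has $\textbf{M}^{-1}\bm{\theta}_j\notin\text{LAT}(\bm{\Gamma}_j)$, or all $j$ satisfy $\textbf{M}^{-1}\bm{\theta}_j\in\text{LAT}(\bm{\Gamma}_j)$'' (exactly as the paper phrases its Cases~A and~B).
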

\begin{proof}
We first prove the sufficiency. From (\ref{minx}), we have
\begin{equation}\label{minx2}
\textbf{v}_i\triangleq\argmin_{\textbf{v}\in\text{LAT}(\textbf{M})}\lVert\textbf{v}-(\textbf{r}_i-\textbf{r}_1)-(\triangle\textbf{r}_i-\triangle\textbf{r}_1)\rVert
\end{equation}
for $2\leq i\leq L$. As lattices are known to be closed under addition and subtraction operations, we take $\bm{\theta}=\textbf{v}-(\textbf{r}_i-\textbf{r}_1)\in\text{LAT}(\textbf{M})$, and then (\ref{minx2}) is equivalent to
\begin{equation}\label{minx3}
\bm{\theta}_i\triangleq\argmin_{\bm{\theta}\in\text{LAT}(\textbf{M})}\lVert\bm{\theta}-(\triangle\textbf{r}_i-\triangle\textbf{r}_1)\rVert
\end{equation}
for $2\leq i\leq L$. If the condition in (\ref{condition}), i.e., $\bm{\theta}_i=\textbf{0}\;\text{ for }2\leq i\leq L$, holds, we obtain $\textbf{v}_i=\textbf{r}_i-\textbf{r}_1$ for $2\leq i\leq L$. Then, from (\ref{fangchengzu3}) and (\ref{fangchengzu4}), $\bm{\Gamma}_1\textbf{n}_1$ and $\bm{\Gamma}_1\widetilde{\textbf{n}}_1$ have the same remainders $\bm{\zeta}_i$'s with respect to the moduli $\bm{\Gamma}_i$'s.
Due to $\textbf{m}\in\mathcal{A}_{1}$ and $\textbf{n}_1=\lfloor \textbf{M}_1^{-1}\textbf{m}\rfloor$, we obtain $\textbf{n}_1\in\mathcal{N}(\bm{\Gamma}_{2}\bm{\Gamma}_{3}\cdots\bm{\Gamma}_L\textbf{U}_1)$, where $\textbf{U}_1$ is any unimodular matrix, and thus $\bm{\Gamma}_1\textbf{n}_1\in\mathcal{N}(\bm{\Gamma}_1\bm{\Gamma}_{2}\cdots\bm{\Gamma}_L\textbf{U}_1)$. From (\ref{fangchengzu4}),
$\bm{\Gamma}_1\textbf{n}_1$ can be accurately determined by the CC MD-CRT for integer vectors in Theorem \ref{them2}, so can be $\textbf{n}_1$, i.e., $\widetilde{\textbf{n}}_1=\textbf{n}_1$. After obtaining $\textbf{n}_1$, we can accurately determine the other folding vectors $\textbf{n}_i$ for $2\leq i\leq L$ by substituting $\textbf{n}_1$ into (\ref{fangchengzu2}).
Therefore, we get $\widetilde{\textbf{n}}_i=\textbf{n}_i$ for $1\leq i\leq L$ in (\ref{xiaxxxx}).

We next prove the necessity. Assume that there exists at least one remainder error that does not satisfy (\ref{condition}). For example, the $k$-th remainder error $\triangle\textbf{r}_k$ with $2\leq k\leq L$ satisfies
\begin{equation}\label{xxxxxx}
\bm{\theta}_k\neq\textbf{0}.
\end{equation}
Therefore, $\textbf{v}_k$ in (\ref{minx}) does not equal $\textbf{r}_k-\textbf{r}_1$. We then have the following cases.

\textit{Case A}: There exists one $j$ with $2\leq j\leq L$ such that
\begin{equation}\label{cxax}
\bm{\theta}_j\notin\text{LAT}(\textbf{M}\bm{\Gamma}_j),
\end{equation}
i.e., $\bm{\theta}_j\neq\textbf{M}\bm{\Gamma}_j\textbf{k}$ for any integer vector $\textbf{k}$. We then prove that the remainders of $\textbf{M}^{-1}\textbf{v}_j$ and $\textbf{M}^{-1}(\textbf{r}_j-\textbf{r}_1)$ modulo $\bm{\Gamma}_j$ are different. Assume that $\textbf{M}^{-1}\textbf{v}_j$ and $\textbf{M}^{-1}(\textbf{r}_j-\textbf{r}_1)$ have the same remainder modulo $\bm{\Gamma}_j$, i.e.,
\begin{equation}\label{xiaonian}
\textbf{M}^{-1}\textbf{v}_j-\textbf{M}^{-1}(\textbf{r}_j-\textbf{r}_1)=\bm{\Gamma}_j\textbf{q}
\end{equation}
for some integer vector $\textbf{q}$. Left-multiplying $\textbf{M}$ on both sides of (\ref{xiaonian}), we get
$\textbf{v}_j-(\textbf{r}_j-\textbf{r}_1)=\textbf{M}\bm{\Gamma}_j\textbf{q}$,
i.e., $\bm{\theta}_j=\textbf{M}\bm{\Gamma}_j\textbf{q}$, which contradicts with (\ref{cxax}). Therefore, the remainders of $\textbf{M}^{-1}\textbf{v}_j$ and $\textbf{M}^{-1}(\textbf{r}_j-\textbf{r}_1)$ modulo $\bm{\Gamma}_j$ are different. As a consequence, $\bm{\chi}_1=\bm{\Gamma}_1\widetilde{\textbf{n}}_1$ obtained from the system of congruences in (\ref{fangchengzu4})
does not equal $\bm{\Gamma}_1\textbf{n}_1$ as in (\ref{fangchengzu3}), and hence $\widetilde{\textbf{n}}_1\neq\textbf{n}_1$.

\textit{Case B}: For each $2\leq i\leq L$, $\bm{\theta}_i\in\text{LAT}(\textbf{M}\bm{\Gamma}_i)$ but there exists at least one $j$ with $2\leq j\leq L$ such that $\bm{\theta}_j\neq\textbf{0}$; see, for example, that the $k$-th remainder error makes $\bm{\theta}_k\neq\textbf{0}$ according to (\ref{xxxxxx}), i.e., $\textbf{v}_k\neq\textbf{r}_k-\textbf{r}_1$.
Since $\textbf{v}_i=\bm{\theta}_i+(\textbf{r}_i-\textbf{r}_1)$ and $\bm{\theta}_i\in\text{LAT}(\textbf{M}\bm{\Gamma}_i)$ for $2\leq i\leq L$, we have
$\textbf{M}^{-1}\textbf{v}_i\equiv \textbf{M}^{-1}(\textbf{r}_i-\textbf{r}_1)  \!\!\mod \bm{\Gamma}_i$.
So, $\bm{\Gamma}_1\textbf{n}_1$ and $\bm{\Gamma}_1\widetilde{\textbf{n}}_1$ have the same remainders $\bm{\zeta}_i$'s with respect to the moduli $\bm{\Gamma}_i$'s, and $\textbf{n}_1$ can be accurately determined, i.e., $\widetilde{\textbf{n}}_1=\textbf{n}_1$. However, due to $\textbf{v}_k\neq\textbf{r}_k-\textbf{r}_1$, we have $\widetilde{\textbf{n}}_k\neq\textbf{n}_k$ from (\ref{xiaxxxx}).
This proves the necessity.

We finally prove the two simple sufficient conditions in (\ref{condition_suf1}) and (\ref{condition_suf2}) for accurately determining the folding vectors $\textbf{n}_i$'s, respectively.

$1)$ \textit{Condition 1}: Assume that there exists $\bm{\theta}_i\in\text{LAT}(\textbf{M})$ with $\bm{\theta}_i\neq\textbf{0}$ satisfying
\begin{equation}
\bm{\theta}_i=\argmin_{\bm{\theta}\in\text{LAT}(\textbf{M})}\lVert\bm{\theta}-(\triangle\textbf{r}_i-\triangle\textbf{r}_1)\rVert
\end{equation}
for each $2\leq i\leq L$. Then, we have
\begin{equation}
\begin{split}
\lVert\bm{\theta}_i\rVert
&=\lVert\bm{\theta}_i-(\triangle\textbf{r}_i-\triangle\textbf{r}_1)-(\textbf{0}-(\triangle\textbf{r}_i-\triangle\textbf{r}_1))\rVert\\
&\leq \lVert\bm{\theta}_i-(\triangle\textbf{r}_i-\triangle\textbf{r}_1)\rVert+\lVert\triangle\textbf{r}_i-\triangle\textbf{r}_1\rVert \\
& \leq 2\lVert\triangle\textbf{r}_i-\triangle\textbf{r}_1\rVert < \lambda_{\text{LAT}(\textbf{M})}\,,
\end{split}
\end{equation}
which contradicts with $\lVert\bm{\theta}_i\rVert\geq\lambda_{\text{LAT}(\textbf{M})}$. Thus, we obtain $\bm{\theta}_i=\textbf{0}$ for each $2\leq i\leq L$.

$2)$ \textit{Condition 2}: When $\lVert\triangle\textbf{r}_i\rVert\leq\tau$ for $1\leq i\leq L$, we have
\begin{equation}
\lVert\triangle\textbf{r}_l-\triangle\textbf{r}_1\rVert\leq\lVert\triangle\textbf{r}_l\rVert+\lVert\triangle\textbf{r}_1\rVert\leq2\tau<\frac{\lambda_{\text{LAT}(\textbf{M})}}{2}
\end{equation}
for $2\leq l\leq L$, which implies \textit{Condition 1}.

This completes the proof of the theorem.
\end{proof}

\begin{remark}
In the 1-dimensional case when $M$ is an arbitrary positive integer and $\Gamma_i$'s are pairwise coprime positive integers, we can readily verify that $\romannumeral1)$ $\mathcal{A}_1=\mathcal{A}_2=\cdots=\mathcal{A}_L=\bigcup_{i=1}^{L}\mathcal{A}_i=\mathcal{N}(M\Gamma_1\Gamma_2\cdots\Gamma_L)$,
and $\romannumeral2)$ the conditions in (\ref{condition}) and (\ref{condition_suf1}) imply each other, whereas $\romannumeral1)$ and $\romannumeral2)$ are generally not observed in the MD case. Therefore, in the 1-dimensional case, from Theorem \ref{them_rcrt},
it turns out that $\lvert\triangle r_i-\triangle r_1\rvert<\frac{M}{2}$ for $2\leq i\leq L$ is a necessary and sufficient condition for accurately determining the folding integers $n_i$ for $1\leq i\leq L$, which is very similar to the robust CRT for integers in Proposition \ref{robustcrt_integer}.
The only difference is that there is one more equality sign in the left side of (\ref{chifan}), which is due to the fact that the rounding operation instead of a norm on $\mathbb{R}$ is used in \cite{wenjiewang}.
\end{remark}

Interestingly, we observe that the result of the robust MD-CRT for integer vectors is dependent upon its reconstruction algorithm.
Different reconstruction algorithms might bring about different results of the robust MD-CRT for integer vectors.
In the following, we propose another reconstruction algorithm, by which a different result of the robust MD-CRT for integer vectors is derived.

By Proposition \ref{pr2}, we first calculate the Smith normal form of $\textbf{M}$ in (\ref{rrrr}) as
\begin{equation}\label{zhongyao}
\textbf{U}\textbf{M}\textbf{V}=\bm{\Lambda},
\end{equation}
where $\textbf{U}$ and $\textbf{V}$ are unimodular matrices, and $\bm{\Lambda}$ is a diagonal integer matrix. So, we have $\textbf{M}^{-1}=\textbf{V}\bm{\Lambda}^{-1}\textbf{U}$. From (\ref{fangchengzu2}), we get
\begin{equation}\label{xiaofang1}
\left\{\begin{array}{ll}
\bm{\Gamma}_1\textbf{n}_1-\bm{\Gamma}_2\textbf{n}_2=\textbf{V}\bm{\Lambda}^{-1}\textbf{U}(\textbf{r}_2-\textbf{r}_1)\\
\bm{\Gamma}_1\textbf{n}_1-\bm{\Gamma}_3\textbf{n}_3=\textbf{V}\bm{\Lambda}^{-1}\textbf{U}(\textbf{r}_3-\textbf{r}_1)\\
\:\:\:\:\quad\quad\quad\quad\;\;\,\vdots\\
\bm{\Gamma}_1\textbf{n}_1-\bm{\Gamma}_L\textbf{n}_L=\textbf{V}\bm{\Lambda}^{-1}\textbf{U}(\textbf{r}_L-\textbf{r}_1).\\
\end{array}\right.
\end{equation}
Left-multiplying $\textbf{V}^{-1}$ on both sides of all the equations in (\ref{xiaofang1}), we obtain
\begin{equation}\label{xiaofang2}
\left\{\begin{array}{ll}
\textbf{V}^{-1}\bm{\Gamma}_1\textbf{n}_1-\textbf{V}^{-1}\bm{\Gamma}_2\textbf{n}_2=\bm{\Lambda}^{-1}\textbf{U}(\textbf{r}_2-\textbf{r}_1)\\
\textbf{V}^{-1}\bm{\Gamma}_1\textbf{n}_1-\textbf{V}^{-1}\bm{\Gamma}_3\textbf{n}_3=\bm{\Lambda}^{-1}\textbf{U}(\textbf{r}_3-\textbf{r}_1)\\
\:\:\:\:\quad\quad\quad\quad\quad\quad\quad\;\;\,\,\vdots\\
\textbf{V}^{-1}\bm{\Gamma}_1\textbf{n}_1-\textbf{V}^{-1}\bm{\Gamma}_L\textbf{n}_L=\bm{\Lambda}^{-1}\textbf{U}(\textbf{r}_L-\textbf{r}_1).\\
\end{array}\right.
\end{equation}
We then perform the modulo-$\textbf{V}^{-1}\bm{\Gamma}_i$ operation on both sides of the corresponding $(i-1)$-th equation in (\ref{xiaofang2}) for $2\leq i\leq L$ to obtain
\begin{equation}\label{xiaofang3}
\left\{\begin{array}{ll}
\textbf{V}^{-1}\bm{\Gamma}_1\textbf{n}_1\equiv \textbf{0}  \!\!\mod \textbf{V}^{-1}\bm{\Gamma}_1\\
\textbf{V}^{-1}\bm{\Gamma}_1\textbf{n}_1\equiv \bm{\Lambda}^{-1}\textbf{U}(\textbf{r}_2-\textbf{r}_1)  \!\!\mod \textbf{V}^{-1}\bm{\Gamma}_2\\
\textbf{V}^{-1}\bm{\Gamma}_1\textbf{n}_1\equiv \bm{\Lambda}^{-1}\textbf{U}(\textbf{r}_3-\textbf{r}_1)  \!\!\mod \textbf{V}^{-1}\bm{\Gamma}_3\\
\:\:\,\quad\quad\quad\:\:\vdots\\
\textbf{V}^{-1}\bm{\Gamma}_1\textbf{n}_1\equiv \bm{\Lambda}^{-1}\textbf{U}(\textbf{r}_L-\textbf{r}_1)  \!\!\mod \textbf{V}^{-1}\bm{\Gamma}_L,\\
\end{array}\right.
\end{equation}
where the first equation is always available. From Lemma \ref{caona}, we know that $\textbf{V}^{-1}\bm{\Gamma}_1\bm{\Gamma}_2\cdots\bm{\Gamma}_L$ is an lcrm of the moduli $\textbf{V}^{-1}\bm{\Gamma}_i$'s in (\ref{xiaofang3}).
Because of $\textbf{m}\in\mathcal{A}_{1}$ and $\textbf{n}_1=\lfloor \textbf{M}_1^{-1}\textbf{m}\rfloor$, we obtain $\textbf{n}_1\in\mathcal{N}(\bm{\Gamma}_{2}\bm{\Gamma}_{3}\cdots\bm{\Gamma}_L\textbf{U}_1)$, where $\textbf{U}_1$ is any unimodular matrix, and thus $\textbf{V}^{-1}\bm{\Gamma}_1\textbf{n}_1\in\mathcal{N}(\textbf{V}^{-1}\bm{\Gamma}_1\bm{\Gamma}_{2}\cdots\bm{\Gamma}_L\textbf{U}_1)$. So, according to Corollary \ref{cor2},
$\textbf{V}^{-1}\bm{\Gamma}_1\textbf{n}_1$ can be accurately determined by the MD-CRT for integer vectors, so can be $\textbf{n}_1$.

We estimate $\textbf{U}(\textbf{r}_i-\textbf{r}_1)$ for each $2\leq i\leq L$ by using a closest lattice point $\textbf{p}_i$ of $\text{LAT}(\bm{\Lambda})$ to $\textbf{U}(\widetilde{\textbf{r}}_i-\widetilde{\textbf{r}}_1)$, i.e.,
\begin{equation}\label{minxminx}
\textbf{p}_i\triangleq\argmin_{\textbf{p}\in\text{LAT}(\bm{\Lambda})}\lVert\textbf{p}-\textbf{U}(\widetilde{\textbf{r}}_i-\widetilde{\textbf{r}}_1)\rVert.
\end{equation}
Due to $\textbf{U}(\textbf{r}_i-\textbf{r}_1)\in\text{LAT}(\bm{\Lambda})$ and the closeness of addition and subtraction operations on lattices, (\ref{minxminx}) is equivalent to
\begin{equation}\label{minxminx2}
\bm{\vartheta}_i\triangleq\argmin_{\bm{\vartheta}\in\text{LAT}(\bm{\Lambda})}\lVert\bm{\vartheta}-\textbf{U}(\triangle\textbf{r}_i-\triangle\textbf{r}_1)\rVert.
\end{equation}
Since the erroneous remainders $\widetilde{\textbf{r}}_i$'s are known rather than the remainders $\textbf{r}_i$'s, let $\widetilde{\textbf{n}}_i$ for $1\leq i\leq L$ be a set of solutions of (\ref{xiaofang2})
when $\textbf{U}(\textbf{r}_i-\textbf{r}_1)$ for $2\leq i\leq L$ are replaced with $\textbf{p}_i$. In summary, we have the following Algorithm \ref{algoalgo} for obtaining $\widetilde{\textbf{n}}_i$'s.

\begin{algorithm}[!t]
\caption{}

\vspace{2mm}
\begin{algorithmic}[1]\label{algoalgo}
\STATE Calculate $\textbf{p}_i$ for $2\leq i\leq L$ in (\ref{minxminx}) from $\widetilde{\textbf{r}}_i$ for $1\leq i\leq L$.
\STATE Calculate the remainder $\bm{\varpi}_i$ of $\bm{\Lambda}^{-1}\textbf{p}_i$ modulo $\textbf{V}^{-1}\bm{\Gamma}_i$ for each $2\leq i\leq L$, i.e.,
\begin{equation}
\bm{\Lambda}^{-1}\textbf{p}_i\equiv \bm{\varpi}_i  \!\!\mod \textbf{V}^{-1}\bm{\Gamma}_i,
\end{equation}
where $\bm{\varpi}_i\in\mathcal{N}(\textbf{V}^{-1}\bm{\Gamma}_i)$.
\STATE Calculate $\bm{\psi}_1\triangleq\textbf{V}^{-1}\bm{\Gamma}_1\widetilde{\textbf{n}}_1\in\mathcal{N}(\textbf{V}^{-1}\bm{\Gamma}_1\bm{\Gamma}_2\cdots\bm{\Gamma}_L\textbf{U}_1)$
via the MD-CRT for integer vectors in Corollary \ref{cor2} from the following system of congruences
\begin{equation}\label{xiaofang4}
\left\{\begin{array}{ll}
\textbf{V}^{-1}\bm{\Gamma}_1\widetilde{\textbf{n}}_1\equiv \textbf{0}  \!\!\mod \textbf{V}^{-1}\bm{\Gamma}_1\\
\textbf{V}^{-1}\bm{\Gamma}_1\widetilde{\textbf{n}}_1\equiv \bm{\varpi}_2  \!\!\mod \textbf{V}^{-1}\bm{\Gamma}_2\\
\textbf{V}^{-1}\bm{\Gamma}_1\widetilde{\textbf{n}}_1\equiv \bm{\varpi}_3  \!\!\mod \textbf{V}^{-1}\bm{\Gamma}_3\\
\:\:\,\quad\quad\quad\;\;\vdots\\
\textbf{V}^{-1}\bm{\Gamma}_1\widetilde{\textbf{n}}_1\equiv \bm{\varpi}_L  \!\!\mod \textbf{V}^{-1}\bm{\Gamma}_L.\\
\end{array}\right.
\end{equation}
\STATE Calculate $\widetilde{\textbf{n}}_1=\bm{\Gamma}_1^{-1}\textbf{V}\bm{\psi}_1\in\mathcal{N}(\bm{\Gamma}_2\bm{\Gamma}_3\cdots\bm{\Gamma}_L\textbf{U}_1)$, and then
\begin{equation}\label{xiaxxxx22}
\widetilde{\textbf{n}}_i=\bm{\Gamma}_i^{-1}\textbf{V}(\bm{\psi}_1-\bm{\Lambda}^{-1}\textbf{p}_i)\;\text{ for }2\leq i\leq L.
\end{equation}
\end{algorithmic}
\end{algorithm}

Based on Algorithm \ref{algoalgo}, we have the following result.
\begin{theorem}[Robust MD-CRT for integer vectors--\uppercase\expandafter{\romannumeral2}]\label{them_rcrt222}
Let $L$ nonsingular moduli be given by (\ref{rrrr}) and the Smith normal form of $\textbf{M}$ be given by (\ref{zhongyao}). For an integer vector
$\textbf{m}\in\bigcup_{i=1}^{L}\mathcal{A}_i$
(assuming without loss of generality that $\textbf{m}\in\mathcal{A}_{1}$),
we can accurately determine the folding vectors $\textbf{n}_i$'s of $\textbf{m}$ from the erroneous remainders $\widetilde{\textbf{r}}_i$'s by Algorithm \ref{algoalgo}, if and only if
\begin{equation}\label{condition2222}
\bm{\vartheta}_i=\textbf{0}\;\text{ for }2\leq i\leq L.
\end{equation}
Besides, we present two simple sufficient conditions for accurately determining the folding vectors $\textbf{n}_i$'s as follows.
\begin{enumerate}
  \item \textit{Condition 1}: A sufficient condition is given by
  \begin{equation}\label{condition_suf1_2222}
\lVert\textbf{U}(\triangle\textbf{r}_i-\triangle\textbf{r}_1)\rVert<\frac{\lambda_{\text{LAT}(\bm{\Lambda})}}{2}\;\text{ for }2\leq i\leq L.
\end{equation}
  \item \textit{Condition 2}: Let $\tau$ be the remainder error bound, i.e., $\lVert\triangle\textbf{r}_i\rVert\leq\tau$ for $1\leq i\leq L$, and then a much simpler sufficient condition is given by
\begin{equation}\label{condition_suf2_2222}
\tau<\frac{\lambda_{\text{LAT}(\bm{\Lambda})}}{4\lVert\textbf{U}\rVert_{\ast}},
\end{equation}
where $\lVert\textbf{U}\rVert_{\ast}$ stands for the subordinate matrix norm of $\textbf{U}$ based on the vector norm $\lVert\cdot\rVert$, i.e.,
$\lVert\textbf{U}\rVert_{\ast}=\underset{\lVert\textbf{x}\rVert=1}{\text{sup}}\{\lVert\textbf{U}\textbf{x}\rVert\}$.
\end{enumerate}
Once the folding vectors $\textbf{n}_i$'s are accurately obtained, a robust reconstruction of $\textbf{m}$ can be calculated by $\widetilde{\textbf{m}}=\frac{1}{L}\sum_{i=1}^L(\textbf{M}_{i}\textbf{n}_{i}+\widetilde{\textbf{r}}_{i})$. Obviously, the reconstruction error is upper bounded by $\tau$, i.e.,
\begin{equation}
\lVert\widetilde{\textbf{m}}-\textbf{m}\rVert\leq\tau.
\end{equation}
\end{theorem}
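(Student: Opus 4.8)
The plan is to mirror the proof of Theorem \ref{them_rcrt}, replacing the lattice $\text{LAT}(\textbf{M})$ and the residues modulo $\bm{\Gamma}_i$ by their images under the Smith normal form transformation in (\ref{zhongyao}). The first step is to justify that the closest-point computation (\ref{minxminx}) reduces to (\ref{minxminx2}). Since $\textbf{r}_i-\textbf{r}_1\in\text{LAT}(\textbf{M})$ (as seen from (\ref{fangchengzu2})) and $\textbf{U}=\bm{\Lambda}\textbf{V}^{-1}\textbf{M}^{-1}$ with $\textbf{V}^{-1}$ unimodular, one has $\textbf{U}(\textbf{r}_i-\textbf{r}_1)=\bm{\Lambda}\textbf{V}^{-1}\textbf{M}^{-1}(\textbf{r}_i-\textbf{r}_1)\in\text{LAT}(\bm{\Lambda})$. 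Substituting $\bm{\vartheta}=\textbf{p}-\textbf{U}(\textbf{r}_i-\textbf{r}_1)$ and invoking closure of lattices under subtraction then turns (\ref{minxminx}) into (\ref{minxminx2}), so that the correctness of Algorithm \ref{algoalgo} is entirely governed by whether each $\bm{\vartheta}_i$ vanishes.

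For sufficiency, suppose $\bm{\vartheta}_i=\textbf{0}$ for $2\leq i\leq L$. Then $\textbf{p}_i=\textbf{U}(\textbf{r}_i-\textbf{r}_1)$, so $\bm{\Lambda}^{-1}\textbf{p}_i=\bm{\Lambda}^{-1}\textbf{U}(\textbf{r}_i-\textbf{r}_1)$, and the residues $\bm{\varpi}_i$ computed in Step 2 coincide with those appearing in the exact system (\ref{xiaofang3}). Hence $\textbf{V}^{-1}\bm{\Gamma}_1\widetilde{\textbf{n}}_1$ and $\textbf{V}^{-1}\bm{\Gamma}_1\textbf{n}_1$ share the same residues modulo the moduli $\textbf{V}^{-1}\bm{\Gamma}_i$. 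Because $\textbf{m}\in\mathcal{A}_1$ gives $\textbf{n}_1\in\mathcal{N}(\bm{\Gamma}_2\cdots\bm{\Gamma}_L\textbf{U}_1)$ and thus $\textbf{V}^{-1}\bm{\Gamma}_1\textbf{n}_1\in\mathcal{N}(\textbf{V}^{-1}\bm{\Gamma}_1\bm{\Gamma}_2\cdots\bm{\Gamma}_L\textbf{U}_1)$, Corollary \ref{cor2} (applicable since $\textbf{V}^{-1}$ is unimodular) recovers $\textbf{V}^{-1}\bm{\Gamma}_1\textbf{n}_1$ uniquely from (\ref{xiaofang4}), whence $\widetilde{\textbf{n}}_1=\textbf{n}_1$; back-substitution through (\ref{xiaxxxx22}) then yields $\widetilde{\textbf{n}}_i=\textbf{n}_i$ for all $i$.

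For necessity, assume some $\bm{\vartheta}_k\neq\textbf{0}$ and split into two cases as in Theorem \ref{them_rcrt}. In Case A there exists $j$ with $\bm{\Lambda}^{-1}\bm{\vartheta}_j\notin\text{LAT}(\textbf{V}^{-1}\bm{\Gamma}_j)$; since $\bm{\Lambda}^{-1}\textbf{p}_j-\bm{\Lambda}^{-1}\textbf{U}(\textbf{r}_j-\textbf{r}_1)=\bm{\Lambda}^{-1}\bm{\vartheta}_j$, the residues of $\bm{\Lambda}^{-1}\textbf{p}_j$ and $\bm{\Lambda}^{-1}\textbf{U}(\textbf{r}_j-\textbf{r}_1)$ modulo $\textbf{V}^{-1}\bm{\Gamma}_j$ must differ (otherwise $\bm{\Lambda}^{-1}\bm{\vartheta}_j$ would lie in $\text{LAT}(\textbf{V}^{-1}\bm{\Gamma}_j)$), so the CRT step produces $\bm{\psi}_1\neq\textbf{V}^{-1}\bm{\Gamma}_1\textbf{n}_1$ and hence $\widetilde{\textbf{n}}_1\neq\textbf{n}_1$. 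In Case B every residue is correct—so $\widetilde{\textbf{n}}_1=\textbf{n}_1$—yet some $\bm{\vartheta}_k\neq\textbf{0}$; feeding $\textbf{p}_k=\bm{\vartheta}_k+\textbf{U}(\textbf{r}_k-\textbf{r}_1)$ into (\ref{xiaxxxx22}) gives $\widetilde{\textbf{n}}_k-\textbf{n}_k=-\bm{\Gamma}_k^{-1}\textbf{V}\bm{\Lambda}^{-1}\bm{\vartheta}_k$, which is nonzero by nonsingularity of $\bm{\Gamma}_k^{-1}\textbf{V}\bm{\Lambda}^{-1}$. I expect this case analysis to be the main obstacle: the delicate point is tracking the Smith-form change of variables so that the dichotomy ``corrupted CRT step'' versus ``corrupted back-substitution'' is correctly pinned to whether $\bm{\Lambda}^{-1}\bm{\vartheta}_j$ belongs to $\text{LAT}(\textbf{V}^{-1}\bm{\Gamma}_j)$, and in ensuring the transformed range condition still lets Corollary \ref{cor2} apply to the non-commutative moduli $\textbf{V}^{-1}\bm{\Gamma}_i$.

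Finally, the two sufficient conditions and the error bound follow routinely. Condition 1, $\lVert\textbf{U}(\triangle\textbf{r}_i-\triangle\textbf{r}_1)\rVert<\lambda_{\text{LAT}(\bm{\Lambda})}/2$, forces $\bm{\vartheta}_i=\textbf{0}$: if not, the triangle inequality and minimality of $\bm{\vartheta}_i$ give $\lVert\bm{\vartheta}_i\rVert\leq2\lVert\textbf{U}(\triangle\textbf{r}_i-\triangle\textbf{r}_1)\rVert<\lambda_{\text{LAT}(\bm{\Lambda})}$, contradicting (\ref{zuiduanjuli}). Condition 2 implies Condition 1 via $\lVert\textbf{U}(\triangle\textbf{r}_i-\triangle\textbf{r}_1)\rVert\leq\lVert\textbf{U}\rVert_{\ast}(\lVert\triangle\textbf{r}_i\rVert+\lVert\triangle\textbf{r}_1\rVert)\leq2\tau\lVert\textbf{U}\rVert_{\ast}$, where the extra factor $\lVert\textbf{U}\rVert_{\ast}$ is precisely what distinguishes this bound from the one in Theorem \ref{them_rcrt}. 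For the reconstruction, using $\textbf{m}=\textbf{M}_i\textbf{n}_i+\textbf{r}_i$ and $\widetilde{\textbf{r}}_i=\textbf{r}_i+\triangle\textbf{r}_i$ gives $\widetilde{\textbf{m}}=\textbf{m}+\frac{1}{L}\sum_{i=1}^L\triangle\textbf{r}_i$, so $\lVert\widetilde{\textbf{m}}-\textbf{m}\rVert\leq\frac{1}{L}\sum_{i=1}^L\lVert\triangle\textbf{r}_i\rVert\leq\tau$.
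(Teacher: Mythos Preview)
Your proposal is correct and follows exactly the approach the paper intends: the paper itself omits the proof, stating only that ``on the basis of the above analysis (\ref{zhongyao})--(\ref{minxminx2}), the proof of Theorem \ref{them_rcrt222} is similar to that of Theorem \ref{them_rcrt},'' and your write-up carries out precisely that parallel argument with the Smith-form change of variables. Your case split for necessity (whether $\bm{\Lambda}^{-1}\bm{\vartheta}_j$ lies in $\text{LAT}(\textbf{V}^{-1}\bm{\Gamma}_j)$), your invocation of Corollary~\ref{cor2} for the transformed moduli $\textbf{V}^{-1}\bm{\Gamma}_i$, and your derivation of Conditions 1 and 2 (including the appearance of $\lVert\textbf{U}\rVert_\ast$) all match what the paper's setup in (\ref{xiaofang1})--(\ref{minxminx2}) is designed to support.
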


On the basis of the above analysis (\ref{zhongyao})--(\ref{minxminx2}), the proof of Theorem \ref{them_rcrt222} is similar to that of Theorem \ref{them_rcrt} and is thus omitted here. Let us take a simple example below to show
a difference between Theorem \ref{them_rcrt} and Theorem \ref{them_rcrt222} (between Algorithm \ref{algo} and Algorithm \ref{algoalgo}). Their difference is caused by
the non-equivalence of the conditions in (\ref{condition}) and (\ref{condition2222}).

\begin{example}
Let $\textbf{U}=\left(
                  \begin{array}{cc}
                    2 & 1 \\
                    1 & 1 \\
                  \end{array}
                \right)$ be a unimodular matrix, and $\textbf{M}$ in (\ref{rrrr}) be $\textbf{M}=\textbf{U}^{-1}\bm{\Lambda}\textbf{U}$, where $\bm{\Lambda}=\left(
                  \begin{array}{cc}
                    8 & 0 \\
                    0 & 8 \\
                  \end{array}
                \right)$. According to Proposition \ref{pr5}, we know
        \begin{equation*}
        \text{LAT}(\textbf{M})=\text{LAT}(\textbf{U}^{-1}\bm{\Lambda})=\text{LAT}\left(\left(
                  \begin{array}{cc}
                    8 & -8 \\
                    -8 & 16 \\
                  \end{array}
                \right)\right).
        \end{equation*}
Without loss of generality, we consider the first two remainder errors, i.e., $\triangle\textbf{r}_1$ and $\triangle\textbf{r}_2$. Let $\triangle\textbf{r}_2-\triangle\textbf{r}_1\triangleq\left(
                                                                                                                                                                                      \begin{array}{c}
                                                                                                                                                                                        \Delta_1 \\
                                                                                                                                                                                        \Delta_2 \\
                                                                                                                                                                                      \end{array}
                                                                                                                                                                                    \right).$
Then,
\begin{equation*}
\textbf{U}(\triangle\textbf{r}_2-\triangle\textbf{r}_1)=\left(
                                                          \begin{array}{c}
                                                            2\Delta_1+\Delta_2 \\
                                                            \Delta_1+\Delta_2 \\
                                                          \end{array}
                                                        \right).
\end{equation*}
In this example, we measure the distance by the Euclidean norm of vectors in $\mathbb{R}^2$. On one hand, let $\Delta_1=5$ and $\Delta_2=-8$. It is ready to verify that
$\bm{\vartheta}_2=\textbf{0}$, i.e., the condition in (\ref{condition2222}) holds for $i=2$. However, $\bm{\theta}_2\neq\textbf{0}$ in (\ref{condition}), since
$\triangle\textbf{r}_2-\triangle\textbf{r}_1$ is much closer to
a non-zero lattice point, e.g., $\left(
                                                                                                                                                                                  \begin{array}{c}
                                                                                                                                                                                    8 \\
                                                                                                                                                                                    -8 \\
                                                                                                                                                                                  \end{array}
                                                                                                                                                                                \right)$,
of $\text{LAT}(\textbf{M})$ than to $\textbf{0}$.
On the other hand, let $\Delta_1=3$ and $\Delta_2=0$. It is ready to verify that $\bm{\theta}_2=\textbf{0}$, i.e., the condition in (\ref{condition}) holds for $i=2$.
However, $\bm{\vartheta}_2\neq\textbf{0}$ in (\ref{condition2222}), since
$\textbf{U}(\triangle\textbf{r}_2-\triangle\textbf{r}_1)$ is much closer to
a non-zero lattice point, e.g., $\left(
                                                                                                    \begin{array}{c}
                                                                                                      8 \\
                                                                                                      0 \\
                                                                                                    \end{array}
                                                                                                  \right)$,
of $\text{LAT}(\bm{\Lambda})$ than to $\textbf{0}$.
\end{example}

We shall make a remark that the MD-CRT and robust MD-CRT for integer vectors studied in this paper are different from the generalized CRT and robust generalized CRT for integers in \cite{xialao1,xialao2,xiaopingli1,huiyongliao,xiaoli5,hanshen1,hanshen2}.
In the generalized CRT and robust generalized CRT for integers, every modular is a positive integer and multiple large positive integers are reconstructed from their unordered remainder sets, where an unordered remainder set consists of the remainders of the multiple integers modulo one modular, but the correspondence between the multiple integers and their remainders in the remainder set is unknown.
However, in the MD-CRT and robust MD-CRT for integer vectors, every modular is a nonsingular integer matrix and an integer vector is reconstructed from its remainders, where a remainder is an integer vector.
In particular, when the moduli $\textbf{M}_i\in\mathbb{Z}^{D\times D}$ for $1\leq i\leq L$ are diagonal integer matrices with positive main diagonal elements, i.e., $\textbf{M}_i=\text{diag}(M_i(1,1),M_i(2,2),\cdots,M_i(D,D))$ with $M_i(j,j)>0$ for $1\leq j\leq D$ and $1\leq i\leq L$, let $\textbf{R}=\text{diag}(R(1,1),R(2,2),\cdots,R(D,D))$ be their lcrm, where $R(j,j)$ is the lcm of $M_1(j,j),M_2(j,j),\cdots,M_L(j,j)$ for each $1\leq j\leq D$.
Then, reconstruction of an integer vector $\textbf{m}=(m(1),m(2),\cdots,m(D))^{T}\in\mathcal{N}(\textbf{R})$ using the MD-CRT and robust MD-CRT for integer vectors is equivalent to reconstruction of all elements of the integer vector one by one using the CRT and robust CRT for integers, and is also equivalent to reconstruction of all elements of the integer vector using the generalized CRT and robust generalized CRT for integers with ordered remainder sets.

\section{Simulation Results}\label{sec5}
In this section, we first show numerical simulations to verify the robust MD-CRT for integer vectors. We moreover apply the robust MD-CRT for integer vectors to MD frequency estimation when a complex MD sinusoidal signal is undersampled by multiple sub-Nyquist sampling matrices. In all the experiments below, without loss of generality, we consider the robust MD-CRT for integer vectors in Theorem \ref{them_rcrt} (i.e., Algorithm \ref{algo}), where the integer vector or frequency to be estimated falls into the range of $\mathcal{A}_{1}$ with $\textbf{U}_1=\textbf{I}$ in (\ref{rangerange}), and the vector norm $\lVert\cdot\rVert$ is the Euclidean norm, i.e., $\lVert\cdot\rVert_2$. In the simulations, we solve the integer quadratic programming problems in (\ref{zuiduanjuli}) and (\ref{minx}) using enumeration \cite{enum} and MOSEK with CVX \cite{boyd}, respectively.

Let moduli be $\textbf{M}_i=\textbf{M}\bm{\Gamma}_i$ for $i=1,2$, where $\bm{\Gamma}_1=\left(
                                                                                         \begin{array}{cc}
                                                                                           1 & 3 \\
                                                                                           3 & 1 \\
                                                                                         \end{array}
                                                                                       \right)
$, $\bm{\Gamma}_2=\left(
                                                                                         \begin{array}{cc}
                                                                                           3 & 4 \\
                                                                                           4 & 3 \\
                                                                                         \end{array}
                                                                                       \right)
$, and two different $\textbf{M}$'s are considered, given by $\textbf{M}=\left(
                                                                     \begin{array}{cc}
                                                                       48 & 17 \\
                                                                       8 & 46 \\
                                                                     \end{array}
                                                                   \right)
$ and $\textbf{M}=2\left(
                                                                     \begin{array}{cc}
                                                                       48 & 17 \\
                                                                       8 & 46 \\
                                                                     \end{array}
                                                                   \right)=\left(
                                                                     \begin{array}{cc}
                                                                       96 & 34 \\
                                                                       16 & 92 \\
                                                                     \end{array}
                                                                   \right)$ for simplicity.
We can easily know from \cite{PPV5,PPV6} that the integer circulant matrices $\bm{\Gamma}_1$ and $\bm{\Gamma}_2$ are commutative and coprime. According to Theorem \ref{them_rcrt}, the two different $\textbf{M}$'s lead to two different remainder error bounds $\tau<48.66/4=12.17$ and $\tau<97.32/4=24.33$, respectively. With respect to each of $\textbf{M}$'s, we uniformly choose the unknown integer vector $\textbf{m}\in\mathcal{A}_{1}$ and two remainder errors
$\lVert\triangle\textbf{r}_1\rVert_2\leq\tau$ and $\lVert\triangle\textbf{r}_2\rVert_2\leq\tau$. In this simulation, we consider the remainder error bounds $\tau=0,2,4,6,\cdots,30$, and for each of them, $5000$ trials are run. We apply Algorithm \ref{algo} to get the estimate $\widetilde{\textbf{m}}$, and in Fig. \ref{fig1}, we illustrate the mean error $E(\lVert\textbf{m}-\widetilde{\textbf{m}}\rVert_2)$ in terms of various remainder error bounds. One can observe from Fig. \ref{fig1} that for both of the two cases with different $\textbf{M}$'s, all the reconstruction errors are much smaller than the remainder error bound $\tau$, until $\tau$ achieves the maximal possible bound. This coincides with the theoretical result in Theorem \ref{them_rcrt}.

\begin{figure}[H]
\centerline{\includegraphics[width=1.05\columnwidth,draft=false]{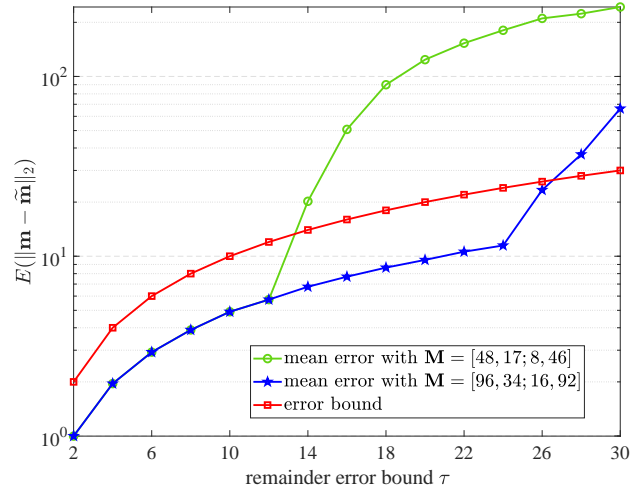}}
\caption{Mean error and theoretical error bound for the two cases with different $\textbf{M}$'s.}
 \label{fig1}
\end{figure}

Next, we formulate the application of the robust MD-CRT for integer vectors in MD sinusoidal frequency estimation as follows. Without loss of generality, suppose that $\textbf{f}=\textbf{m}\in\mathbb{Z}^D$ is an unknown integer MD frequency of interest in a complex MD sinusoidal signal $x(\textbf{t})$ and may be very high:
\begin{equation}\label{haihai}
x(\textbf{t})=a\exp(j2\pi \textbf{f}^T\textbf{t})+\omega(\textbf{t}),\; \textbf{t}\in\mathbb{R}^D,
\end{equation}
where $a$ is an unknown non-zero constant and $\omega(\textbf{t})$ is additive noise. Let $\textbf{M}_i^{-T}$ for $1\leq i\leq L$ be $L$ different sampling matrices that have sampling densities $|\text{det}(\textbf{M}_i)|$, respectively, where each $\textbf{M}_i$ is a $D\times D$ nonsingular integer matrix. For each sampling matrix $\textbf{M}_i^{-T}$, the undersampled sinusoidal signal is given by
\begin{equation}\label{haihai2}
x_i(\textbf{n})=a\exp(j2\pi \textbf{f}^T\textbf{M}_i^{-T}\textbf{n})+\omega_i(\textbf{n}),\; \textbf{n}\in\mathbb{Z}^D.
\end{equation}
We take the MD discrete Fourier transform (DFT) with respect to $\textbf{M}_i^{T}$ \cite{dft1} for the above $x_i(\textbf{n})$, $\textbf{n}\in\mathcal{N}(\textbf{M}_i^{T})$, and we have, for $\textbf{k}\in\mathcal{N}(\textbf{M}_i)$,
\begin{align}\label{xiouxi}
X_i(\textbf{k})&=a\sum_{\textbf{n}\in\mathcal{N}(\textbf{M}_i^T)}\exp(j2\pi \textbf{f}^T\textbf{M}_i^{-T}\textbf{n})\exp(-j2\pi\textbf{k}^T\textbf{M}_i^{-T}\textbf{n})+\Omega_i(\textbf{k})\nonumber\\
&=a\sum_{\textbf{n}\in\mathcal{N}(\textbf{M}_i^T)}\exp(-j2\pi (\textbf{k}-\textbf{f})^T\textbf{M}_i^{-T}\textbf{n})+\Omega_i(\textbf{k})\nonumber\\
&=a\sum_{\textbf{n}\in\mathcal{N}(\textbf{M}_i^T)}\exp(-j2\pi (\textbf{k}-\textbf{r}_i)^T\textbf{M}_i^{-T}\textbf{n})+\Omega_i(\textbf{k})\nonumber\\
&=a|\text{det}(\textbf{M}_i)|\delta(\textbf{k}-\textbf{r}_i)+\Omega_i(\textbf{k}),
\end{align}
where $\Omega_i(\textbf{k})$ is the MD DFT of $\omega_i(\textbf{n})$ with respect to $\textbf{M}_i^{T}$, $\textbf{r}_i$ is the remainder of $\textbf{f}$ modulo $\textbf{M}_i$, i.e., $\textbf{r}_i=\langle\textbf{f}\rangle_{\textbf{M}_i}$, and $\delta(\textbf{n})$ is the MD discrete delta function, i.e., $\delta(\textbf{n})=1$ if $\textbf{n}=\textbf{0}$ and $\delta(\textbf{n})=0$ otherwise. Note that the last equation in (\ref{xiouxi}) holds due to the unitarity of the MD DFT \cite{unitary}, i.e., for any nonsingular integer matrix $\textbf{M}\in\mathbb{Z}^{D\times D}$,
\begin{equation}
\sum_{\textbf{n}\in\mathcal{N}(\textbf{M})}\exp(-j2\pi\textbf{k}^T\textbf{M}^{-1}\textbf{n})=|\text{det}(\textbf{M})|\delta(\langle\textbf{k}\rangle_{\textbf{M}^T})\;\text{ for }\textbf{k}\in\mathbb{Z}^D.
\end{equation}
Therefore, we can detect the remainder $\textbf{r}_i$ of $\textbf{f}$ modulo $\textbf{M}_i$ (also called the aliased frequency) as a peak in magnitude of the MD DFT domain of $x_i(\textbf{n})$ in (\ref{xiouxi}), if the signal-to-noise ratio (SNR) is not too low. Nevertheless, when the SNR is not too high, the detected remainder $\widetilde{\textbf{r}}_i$ is most likely to be erroneous, i.e., $\widetilde{\textbf{r}}_i\triangleq\textbf{r}_i+\triangle\textbf{r}_i\in\mathcal{N}(\textbf{M}_i)$, where $\triangle\textbf{r}_i$ is the remainder error. Then, the robust MD-CRT for integer vectors provides an intuitive way to estimate $\textbf{f}$ from the erroneous remainders $\{\widetilde{\textbf{r}}_i\}_{i=1}^L$ modulo the corresponding moduli $\{\textbf{M}_i\}_{i=1}^L$. At this point, the sampling densities ($|\text{det}(\textbf{M}_i)|$ for $1\leq i\leq L$) of the multiple sub-Nyquist samplings  may be far less than the Nyquist sampling density that is defined by $|\text{det}(\textbf{R})|$, where $\textbf{R}$ is an lcrm of $\{\textbf{M}_i\}_{i=1}^L$.

\begin{figure}[H]
\centerline{\includegraphics[width=1.05\columnwidth,draft=false]{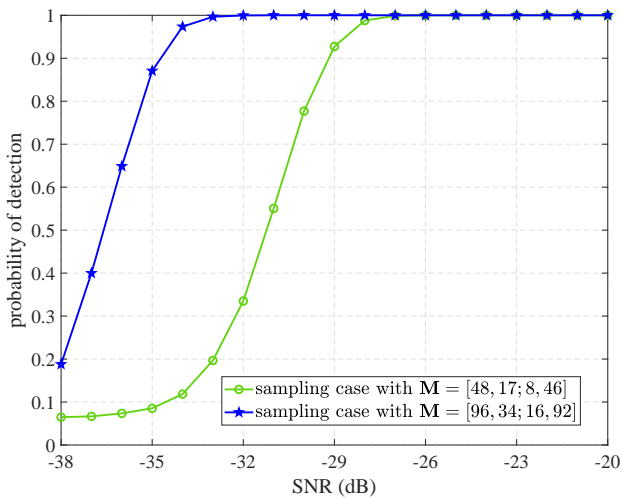}}
\caption{The probability of detection versus various SNR's for the two sampling cases with different $\textbf{M}$'s.}
 \label{fig2}
\end{figure}

We then illustrate the performance of the robust MD-CRT for integer vectors in the application of MD sinusoidal frequency estimation. In this simulation, we adopt the same $\textbf{M}_i=\textbf{M}\bm{\Gamma}_i$ for $i=1,2$ as in the first simulation (i.e., Fig. \ref{fig1}). Specifically, we undersample the sinusoidal signal with two sampling matrices $\textbf{M}_i^{-T}$ for $i=1,2$, followed by two MD DFT's with respect to $\textbf{M}_i^{T}$'s on the undersampled sinusoids, respectively, where we facilitate calculating the MD DFT's with respect to $\textbf{M}_i^{T}$'s by their equivalent separable MD DFT's \cite{jiawenxian}. We set $\textbf{f}=\left(
                                 \begin{array}{c}
                                   1645\\
                                   1373 \\
                                 \end{array}
                               \right)
$ and obviously this frequency belongs to $\mathcal{A}_{1}$ for both of the two sampling cases with different $\textbf{M}$'s. The additive noise in (\ref{haihai}) is complex white Gaussian noise, i.e., $\omega_i(\textbf{n})\sim\mathcal{CN}(0,2\sigma^2)$ in (\ref{haihai2}), and the SNR is defined as $\text{SNR}=10\log_{10}|a|^2/2\sigma^2\;\text{dB}$.
In Fig. \ref{fig2}, we present the probability of detection to illustrate the estimation performance of the robust MD-CRT for integer vectors in terms of various SNR's for the two sampling cases, where the estimated frequency $\textbf{f}$ is said to be correctly detected if its folding vectors are accurately determined, i.e., a robust estimate of $\textbf{f}$ is obtained, by Algorithm \ref{algo}. Fig. \ref{fig3} shows the mean relative error $E(\lVert\textbf{f}-\widetilde{\textbf{f}}\rVert_2/\lVert\textbf{f}\rVert_2)$ between the true $\textbf{f}$ and the reconstruction $\widetilde{\textbf{f}}$ verse SNR's for the two sampling cases. In these two figures, the SNR is increased from $-38$dB to $-20$dB and $5000$ trials are implemented for each SNR.
From Figs. \ref{fig2} and \ref{fig3}, the sampling case with $\textbf{M}=\left(
                                                                     \begin{array}{cc}
                                                                       96 & 34 \\
                                                                       16 & 92 \\
                                                                     \end{array}
                                                                   \right)$
achieves better performance (higher probability of detection and lower mean relative error) than the sampling case with $\textbf{M}=\left(
                                                                     \begin{array}{cc}
                                                                       48 & 17 \\
                                                                       8 & 46 \\
                                                                     \end{array}
                                                                   \right)$,
which is in accordance with the theoretical result in Theorem \ref{them_rcrt} that the former case has a larger robustness bound than the latter case, as mentioned before.

\begin{figure}[H]
\centerline{\includegraphics[width=1.05\columnwidth,draft=false]{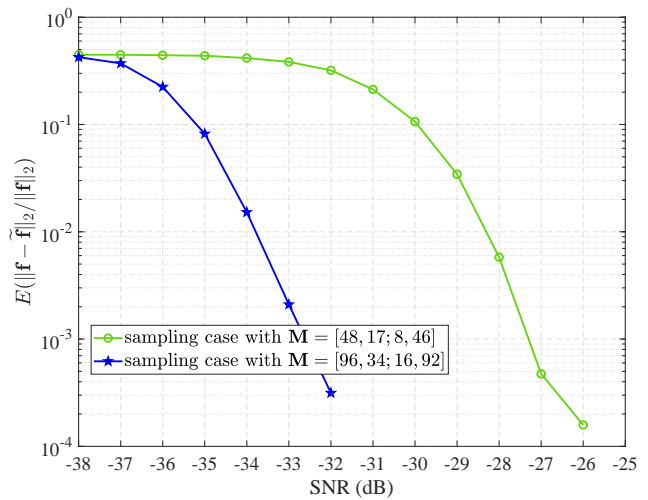}}
\caption{Mean relative error versus various SNR's for the two sampling cases with different $\textbf{M}$'s.}
 \label{fig3}
\end{figure}

As a final comment, general non-separable sampling matrices $\{\textbf{M}_i\}_{i=1}^L$ may lead to interesting MD signal processing properties as it has been pointed out earlier in the literature, for example, \cite{dft1,ppvbook}.

\section{Conclusion}\label{sec6}
In this paper, the CRT and robust CRT for integers are extended to the MD case, called the MD-CRT and robust MD-CRT for integer vectors, respectively, which are expected to have numerous applications in MD signal processing. Specifically, we first derive the MD-CRT for integer vectors with respect to a general set of moduli (namely a set of arbitrary nonsingular integer matrices), which allows to uniquely reconstruct an integer vector from its remainders, if the integer vector is in the fundamental parallelepiped of the lattice generated by an lcrm of all the moduli.
When the moduli are given in some special forms, we further present explicit reconstruction formulae. Furthermore,
we provide some results of the robust MD-CRT for integer vectors under the assumption that the remaining integer matrices of all the moduli left divided by their gcld are pairwise commutative and coprime.
Accordingly, we propose two different reconstruction algorithms, by which
two different conditions on the remainder error bound for the reconstruction robustness are separately obtained and proved to be related to a quarter of the minimum distance of the lattice generated by the gcld of all the moduli or the Smith normal form of the gcld. The robust MD-CRT for integer vectors with respect to a general set of moduli is still an open problem for future research.

\appendix

\subsection{Matrix Computation in the Bezout's theorem}\label{ap0}
Define $\textbf{S}=\left(
             \begin{array}{cc}
               \textbf{M} & \textbf{N} \\
             \end{array}\right)$,
which is a $D\times 2D$ integer matrix of rank $D$. From Proposition \ref{pr2}, the Smith normal form of $\textbf{S}$ is
\begin{equation}\label{calcal}
\textbf{U}\left(
             \begin{array}{cc}
               \textbf{M} & \textbf{N} \\
             \end{array}\right)\textbf{V}=\left(
             \begin{array}{cc}
               \bm{\Lambda} & \bm{0} \\
             \end{array}\right),
\end{equation}
where $\textbf{U}$ and $\textbf{V}$ are both unimodular matrices of sizes $D\times D$ and $2D\times 2D$, respectively, and $\bm{\Lambda}$ is a $D\times D$ diagonal integer matrix. Since $\textbf{U}^{-1}$ is also unimodular, we can write (\ref{calcal}) as
\begin{equation}\label{waas}
\left(\begin{array}{cc}
\textbf{M} & \textbf{N} \\
\end{array}\right)\textbf{V}=\left(
\begin{array}{cc}
\textbf{L} & \bm{0} \\
\end{array}\right),
\end{equation}
where $\textbf{L}=\textbf{U}^{-1}\bm{\Lambda}$ is a $D\times D$ integer matrix. Partitioning the $2D\times 2D$ unimodular matrix $\textbf{V}$ into $D\times D$ blocks, we have
\begin{equation}
\left(\begin{array}{cc}
\textbf{M} & \textbf{N} \\
\end{array}\right)\underbrace{\left(
                                \begin{array}{cc}
                                  \textbf{V}_{11} & \textbf{V}_{12} \\
                                  \textbf{V}_{21} & \textbf{V}_{22} \\
                                \end{array}
                              \right)
}_{\textbf{V}}=\left(
\begin{array}{cc}
\textbf{L} & \bm{0} \\
\end{array}\right).
\end{equation}
This implies
\begin{equation}\label{xxxxi}
\textbf{M}\textbf{V}_{11}+\textbf{N}\textbf{V}_{21}=\textbf{L}.
\end{equation}
By rewriting (\ref{waas}) as
\begin{equation}
\left(\begin{array}{cc}
\textbf{M} & \textbf{N} \\
\end{array}\right)=\left(
\begin{array}{cc}
\textbf{L} & \bm{0} \\
\end{array}\right)\underbrace{\left(
                    \begin{array}{cc}
                      \textbf{K}_{11} & \textbf{K}_{12} \\
                      \textbf{K}_{21} & \textbf{K}_{22}\\
                    \end{array}
                  \right)}_{\textbf{V}^{-1}},
\end{equation}
we have
$\textbf{M}=\textbf{L}\textbf{K}_{11}$ and $\textbf{N}=\textbf{L}\textbf{K}_{12}$,
where $\textbf{K}_{ij}$ for $1\leq i,j\leq2$ are all integer matrices due to the unimodularity of $\textbf{V}$.
Therefore, $\textbf{L}$ is a cld of $\textbf{M}$ and $\textbf{N}$. Then, we demonstrate that $\textbf{L}$ is actually a gcld of $\textbf{M}$ and $\textbf{N}$. For any other cld $\textbf{T}$ of $\textbf{M}$ and $\textbf{N}$, i.e., $\textbf{M}=\textbf{T}\textbf{A}$ and $\textbf{N}=\textbf{T}\textbf{B}$ for some integer matrices $\textbf{A}$ and $\textbf{B}$, we have, from (\ref{xxxxi}),
$\textbf{T}\left(\textbf{A}\textbf{V}_{11}+\textbf{B}\textbf{V}_{21}\right)=\textbf{L}$,
which means that $\textbf{T}$ is a left divisor of $\textbf{L}$. Therefore, $\textbf{L}$ is a gcld of $\textbf{M}$ and $\textbf{N}$, and is given by
\begin{equation}
\textbf{L}=\textbf{U}^{-1}\bm{\Lambda}.
\end{equation}
From (\ref{xxxxi}), the integer matrices $\textbf{P}$ and $\textbf{Q}$ in (\ref{btl}) are given by
\begin{equation}
\textbf{P}=\textbf{V}_{11} \text{ and } \textbf{Q}=\textbf{V}_{21}.
\end{equation}
In particular, if $\textbf{M}$ and $\textbf{N}$ are left coprime, their gcld $\textbf{L}$ must be unimodular. We right-multiply $\textbf{L}^{-1}$ on both sides of (\ref{xxxxi}), and can further get
\begin{equation}
\textbf{M}\textbf{V}_{11}\textbf{L}^{-1}+\textbf{N}\textbf{V}_{21}\textbf{L}^{-1}=\textbf{I}.
\end{equation}
This equation is called the Bezout's identity. In this case, $\textbf{I}$ is viewed as a gcld of $\textbf{M}$ and $\textbf{N}$, and the integer matrices
$\textbf{P}$ and $\textbf{Q}$ in (\ref{btl}) in the Bezout's identity are
\begin{equation}\label{zhongdian}
\textbf{P}=\textbf{V}_{11}\textbf{L}^{-1}=\textbf{V}_{11}\bm{\Lambda}^{-1}\textbf{U} \text{ and } \textbf{Q}=\textbf{V}_{21}\textbf{L}^{-1}=\textbf{V}_{21}\bm{\Lambda}^{-1}\textbf{U}.
\end{equation}

\subsection{Proof of Lemma \ref{lem1}}\label{ap1}
It is obvious that $\textbf{R}$ is a crm of $\textbf{M}_1,\textbf{M}_2,\cdots,\textbf{M}_L$. Then, for any other crm $\textbf{C}$ of $\textbf{M}_1,\textbf{M}_2,\cdots,\textbf{M}_L$, we have $\textbf{C}=\textbf{M}_L\textbf{Q}$ for some integer matrix $\textbf{Q}$. Moreover, since $\textbf{C}$ is a crm of $\textbf{M}_1,\textbf{M}_2,\cdots,\textbf{M}_{L-1}$, and $\textbf{B}$ is an lcrm of $\textbf{M}_1,\textbf{M}_2,\cdots,\textbf{M}_{L-1}$, we know that $\textbf{C}$ is a right multiple of $\textbf{B}$, i.e., $\textbf{C}=\textbf{B}\textbf{P}$
for some integer matrix $\textbf{P}$. Thus, $\textbf{C}$ is a crm of $\textbf{B}$ and $\textbf{M}_L$. Since $\textbf{R}$ is an lcrm of $\textbf{B}$ and $\textbf{M}_L$,  $\textbf{C}$ is known as a right multiple of $\textbf{R}$, i.e., $\textbf{C}=\textbf{R}\textbf{A}$ for some integer matrix $\textbf{A}$. Therefore, $\textbf{R}$ is an lcrm of $\textbf{M}_1,\textbf{M}_2,\cdots,\textbf{M}_L$. Similarly, we can prove that if $\textbf{B}$ is an lclm of $\textbf{M}_1,\textbf{M}_2,\cdots,\textbf{M}_{L-1}$, and $\textbf{R}$ is an lclm of $\textbf{B}$ and $\textbf{M}_L$, then $\textbf{R}$ is an lclm of $\textbf{M}_1,\textbf{M}_2,\cdots,\textbf{M}_L$.

\subsection{Proof of Lemma \ref{lem2}}\label{ap2}
As $\textbf{N}_1,\textbf{N}_2,\cdots,\textbf{N}_L$ are pairwise commutative, we immediately verify the commutativity of $\textbf{N}_{i_1}\textbf{N}_{i_2}\cdots\textbf{N}_{i_p}$ and $\textbf{N}_{j_1}\textbf{N}_{j_2}\cdots\textbf{N}_{j_q}$.
We next prove their coprimeness. For easy of presentation, we first look at a simple case when $L=3$. In this case, we need to prove without loss of generality that $\textbf{N}_1\textbf{N}_2$ and $\textbf{N}_3$ are coprime. Let $\textbf{D}$ be a gcrd of $\textbf{N}_1\textbf{N}_2$ and $\textbf{N}_3$. Since $\textbf{N}_1$ and $\textbf{N}_3$ are coprime, from the Bezout's theorem in Proposition \ref{pr3} we have, for some integer matrices $\textbf{P}$ and $\textbf{Q}$, $\textbf{P}\textbf{N}_1+\textbf{Q}\textbf{N}_3=\textbf{I}$, on both sides of which we
right-multiply $\textbf{N}_2\textbf{D}^{-1}$, and then commute $\textbf{N}_2$ and $\textbf{N}_3$ to get $\textbf{P}\textbf{N}_1\textbf{N}_2\textbf{D}^{-1}+\textbf{Q}\textbf{N}_2\textbf{N}_3\textbf{D}^{-1}=\textbf{N}_2\textbf{D}^{-1}$.
Since $\textbf{D}$ is a gcrd of $\textbf{N}_1\textbf{N}_2$ and $\textbf{N}_3$,
we know that $\textbf{N}_2\textbf{D}^{-1}$ is an integer matrix. That is to say,
$\textbf{D}$ is a right divisor of $\textbf{N}_2$. As stated above, $\textbf{D}$ is a right divisor of $\textbf{N}_3$, and $\textbf{N}_2$ and $\textbf{N}_3$ are coprime. So, $\textbf{D}$ must be a unimodular matrix. Thus, $\textbf{N}_1\textbf{N}_2$ and $\textbf{N}_3$ are right coprime (equivalently coprime from Proposition \ref{pr4} and their commutativity).
Accordingly, the above result can be readily generalized to the case when $L>3$, and therefore, $\textbf{N}_{i_1}\textbf{N}_{i_2}\cdots\textbf{N}_{i_p}$ and $\textbf{N}_{j_1}\textbf{N}_{j_2}\cdots\textbf{N}_{j_q}$ are coprime.

In addition, based on Proposition \ref{pr4} and the above result, we know that $\textbf{R}_2\triangleq\textbf{N}_{i_1}\textbf{N}_{i_2}$ is an lcm of $\textbf{N}_{i_1}$ and $\textbf{N}_{i_2}$, and $\textbf{R}_2$ is commutative and coprime with $\textbf{N}_{i_3}$. So,
$\textbf{R}_3\triangleq\textbf{R}_2\textbf{N}_{i_3}=\textbf{N}_{i_1}\textbf{N}_{i_2}\textbf{N}_{i_3}$ is an lcm of $\textbf{R}_2$ and $\textbf{N}_{i_3}$, and $\textbf{R}_3$ is commutative and coprime with $\textbf{N}_{i_4}$. Moreover, $\textbf{R}_3$ is an lcrm of $\textbf{N}_{i_1},\textbf{N}_{i_2}$, and $\textbf{N}_{i_3}$ from Lemma \ref{lem1}.
Continue this procedure until $\textbf{R}_p\triangleq\textbf{R}_{p-1}\textbf{N}_{i_p}=\textbf{N}_{i_1}\textbf{N}_{i_2}\cdots\textbf{N}_{i_p}$ is an lcm of $\textbf{R}_{p-1}$ and $\textbf{N}_{i_p}$. From Lemma \ref{lem1}, $\textbf{R}_p$ is an lcrm of $\textbf{N}_{i_1}, \textbf{N}_{i_2}, \cdots, \textbf{N}_{i_p}$, and similarly, we can deduce that $\textbf{R}_p$ is also an lclm of $\textbf{N}_{i_1}, \textbf{N}_{i_2}, \cdots, \textbf{N}_{i_p}$.

\subsection{Proof of Corollary \ref{cor1}}\label{ap3}
Since $\textbf{N}_i$ is a left divisor of $\textbf{M}_i$ for each $1\leq i\leq L$, there exists some integer matrix $\textbf{P}_i$ such that $\textbf{M}_i=\textbf{N}_i\textbf{P}_i$ for each $1\leq i\leq L$. So, from the remainders $\textbf{r}_i=\langle\textbf{m}\rangle_{\textbf{M}_i}$, we have
\begin{equation}\label{xiaox}
\textbf{m}=\textbf{N}_i\textbf{P}_i\textbf{n}_i+\textbf{r}_i\; \text{ for }1\leq i\leq L,
\end{equation}
where $\textbf{n}_i$'s are unknown folding vectors. Regarding (\ref{xiaox}) as a system of congruences with respect to the moduli $\textbf{N}_i$'s,
we get
\begin{equation}\label{pan}
\textbf{m}\equiv \bm{\xi}_i \!\!\mod \textbf{N}_i\; \text{ for }1\leq i\leq L,
\end{equation}
where $\bm{\xi}_i=\langle\textbf{r}_i\rangle_{\textbf{N}_i}$.
Since $\textbf{N}_1,\textbf{N}_2,\cdots,\textbf{N}_L$ are pairwise commutative and coprime, we know from Lemma \ref{lem2} that $\textbf{N}_1\textbf{N}_2\cdots\textbf{N}_L$ is their lcrm,
so is $\textbf{R}=\textbf{N}_1\textbf{N}_2\cdots\textbf{N}_L\textbf{U}$ for a unimodular matrix $\textbf{U}$.
Therefore, we obtain from Theorem \ref{them1} that $\textbf{m}\in\mathcal{N}(\textbf{R})$ can be uniquely reconstructed from its remainders $\bm{\xi}_i$'s or $\textbf{r}_i$'s. Next, we prove that $\textbf{m}$ in (\ref{jiejie}) is actually a solution of the system of congruences in (\ref{pan}). We express $\textbf{m}$ as
$\textbf{m}=\textbf{R}\textbf{n}+\sum_{i=1}^{L}\textbf{W}_i\widehat{\textbf{W}}_i\textbf{r}_i$
for some integer vector $\textbf{n}$.
Then, for each modulo-$\textbf{N}_j$ operation, we calculate
\begin{equation}
\begin{split}
\left\langle\textbf{m}\right\rangle_{\textbf{N}_j}
& = \left\langle\textbf{R}\textbf{n}+\textbf{W}_j\widehat{\textbf{W}}_j\textbf{r}_j+\sum_{i=1,i\neq j}^{L}\textbf{W}_i\widehat{\textbf{W}}_i\textbf{r}_i\right\rangle_{\textbf{N}_j} \\
& = \left\langle\textbf{W}_j\widehat{\textbf{W}}_j\textbf{r}_j\right\rangle_{\textbf{N}_j} = \left\langle\left(\textbf{I}-\textbf{N}_j\textbf{Q}_j\right)\textbf{r}_j\right\rangle_{\textbf{N}_j} \\
& = \langle\textbf{r}_j\rangle_{\textbf{N}_j}= \bm{\xi}_j\,,
\end{split}
\end{equation}
where the second equality is due to the commutativity of $\textbf{N}_i$'s, the third equality is obtained from (\ref{bbbbb}), and (\ref{bbbbb}) holds because $\textbf{N}_i$ is coprime with $\textbf{W}_i$ for each $1\leq i\leq L$ from Lemma \ref{lem2}. This completes the proof of the corollary.

\subsection{Proof of Lemma \ref{caona}}\label{ap4}
As $\textbf{A}$ is an lcrm of $\bm{\Gamma}_i$'s, $\textbf{M}\textbf{A}$ is a crm of $\textbf{M}_i$'s.
For any other crm $\textbf{C}$ of $\textbf{M}_i$'s, i.e., $\textbf{C}=\textbf{M}\bm{\Gamma}_i\textbf{P}_i$ for some integer matrices $\textbf{P}_i$'s, we have $\textbf{M}^{-1}\textbf{C}=\bm{\Gamma}_i\textbf{P}_i$, i.e.,
$\textbf{M}^{-1}\textbf{C}$ is a crm of $\bm{\Gamma}_i$'s. So, $\textbf{M}^{-1}\textbf{C}$ is a right multiple of $\textbf{A}$, i.e., $\textbf{M}^{-1}\textbf{C}=\textbf{A}\textbf{G}$ for some integer matrix $\textbf{G}$.
Hence, we have $\textbf{C}=\textbf{M}\textbf{A}\textbf{G}$. That is to say, $\textbf{M}\textbf{A}$ is an lcrm of $\textbf{M}_i$'s.

\subsection{Proof of Corollary \ref{cor_xiao}}\label{ap5}
Since $\bm{\Gamma}_i$'s are pairwise commutative and coprime, we know from Lemma \ref{lem2} that $\bm{\Gamma}_1\bm{\Gamma}_2\cdots\bm{\Gamma}_L$ is an lcm of $\bm{\Gamma}_i$'s.
Based on Lemma \ref{caona}, $\textbf{R}=\textbf{M}\bm{\Gamma}_1\bm{\Gamma}_2\cdots\bm{\Gamma}_L\textbf{U}$ for any unimodular matrix $\textbf{U}$ is an lcrm of $\textbf{M}_i$'s. Without loss of generality, we let $\textbf{N}_1=\textbf{M}\bm{\Gamma}_1$, $\textbf{N}_2=\bm{\Gamma}_2$, $\cdots$, $\textbf{N}_L=\bm{\Gamma}_L$. As $\textbf{M},\bm{\Gamma}_1,\bm{\Gamma}_2,\cdots,\bm{\Gamma}_L$ are pairwise commutative and coprime,
we obtain from Lemma \ref{lem2} that $\textbf{N}_i$'s are pairwise commutative and coprime. In addition, it is also readily seen that $\textbf{R}=\textbf{N}_1\textbf{N}_2\cdots\textbf{N}_L\textbf{U}$, and $\textbf{N}_i$ is a left divisor of $\textbf{M}_i$ for each $1\leq i\leq L$. Therefore, by Corollary \ref{cor1}, we can uniquely reconstruct $\textbf{m}\in \mathcal{N}(\textbf{R})$ from the moduli $\textbf{M}_i$'s and its remainders $\textbf{r}_i=\langle\textbf{m}\rangle_{\textbf{M}_i}$ by (\ref{jiejie}).

\subsection{Proof of Corollary \ref{cor2}}\label{ap6}
Since $\bm{\Gamma}_i$'s are pairwise commutative and coprime, $\bm{\Gamma}_1\cdots\bm{\Gamma}_{i-1}\bm{\Gamma}_{i+1}\cdots\bm{\Gamma}_L$ and $\bm{\Gamma}_i$ are known to be commutative and coprime from Lemma \ref{lem2}. We next prove that $\textbf{W}_i$ and $\textbf{M}_i$ are left coprime for each $1\leq i\leq L$.
Let $\textbf{D}$ be a gcld of $\textbf{W}_i$ and $\textbf{M}_i$. We then have $\textbf{W}_i=\textbf{M}\bm{\Gamma}_1\cdots\bm{\Gamma}_{i-1}\bm{\Gamma}_{i+1}\cdots\bm{\Gamma}_L=\textbf{D}\textbf{P}$ and $\textbf{M}_i=\textbf{M}\bm{\Gamma}_i=\textbf{D}\textbf{Q}$ for some integer matrices $\textbf{P}$ and $\textbf{Q}$.
Hence, we have $\bm{\Gamma}_1\cdots\bm{\Gamma}_{i-1}\bm{\Gamma}_{i+1}\cdots\bm{\Gamma}_L=\textbf{M}^{-1}\textbf{D}\textbf{P}$ and $\bm{\Gamma}_i=\textbf{M}^{-1}\textbf{D}\textbf{Q}$.
As $\textbf{M}$ is unimodular,
$\textbf{M}^{-1}\textbf{D}$ is an integer matrix and is a cld of $\bm{\Gamma}_1\cdots\bm{\Gamma}_{i-1}\bm{\Gamma}_{i+1}\cdots\bm{\Gamma}_L$ and $\bm{\Gamma}_i$. Since $\bm{\Gamma}_1\cdots\bm{\Gamma}_{i-1}\bm{\Gamma}_{i+1}\cdots\bm{\Gamma}_L$ and $\bm{\Gamma}_i$ are commutative and coprime, all of their cld's must be unimodular.
Therefore, $\textbf{M}^{-1}\textbf{D}$ is a unimodular matrix, so is $\textbf{D}$. That is to say, $\textbf{W}_i$ and $\textbf{M}_i$ are left coprime. Based on the Bezout's theorem in Proposition \ref{pr3}, there exist integer matrices, denoted by $\widehat{\textbf{W}}_i$ and $\textbf{Q}_i$, such that (\ref{bbb2}) holds for each $1\leq i\leq L$,
and we can calculate them by following the procedure (\ref{calcal})--(\ref{zhongdian}). In addition, from Lemma \ref{lem2} and Lemma \ref{caona}, we know that $\textbf{R}=\textbf{M}\bm{\Gamma}_1\bm{\Gamma}_2\cdots\bm{\Gamma}_L\textbf{U}$ for any unimodular matrix $\textbf{U}$ is an lcrm of the moduli $\textbf{M}_i$'s. The remaining proof is similar to the proof of Corollary \ref{cor1} and is omitted here.

\subsection{Proof of Lemma \ref{wwwwww}}\label{ap7}
Let $\bm{\alpha}_1=\left(
                  \begin{array}{c}
                    1 \\
                    1 \\
                  \end{array}
                \right)$ and $\bm{\alpha}_2=\left(
                  \begin{array}{c}
                    1 \\
                    -1 \\
                  \end{array}
                \right)$. It is readily checked that $\bm{\alpha}_1$ and $\bm{\alpha}_2$ are two eigenvectors of $\textbf{P}$ with the corresponding eigenvalues $p+q$ and $p-q$. Any integer vector in $\mathbb{R}^2$ can be represented by a linear combination of $\bm{\alpha}_1$ and $\bm{\alpha}_2$. Assume that $\textbf{P}$ can be diagonalized as $\textbf{P}=\textbf{U}\bm{\Lambda}\textbf{U}^{-1}$, where $\textbf{U}$ is a $2\times2$ unimodular matrix and $\bm{\Lambda}$ is a diagonal integer matrix. This means that $\textbf{U}$ is an eigenmatrix of $\textbf{P}$. Let $\textbf{u}$ be any column vector of $\textbf{U}$ and it can be represented by $\textbf{u}=a\bm{\alpha}_1+b\bm{\alpha}_2$ with $a,b\in\mathbb{R}$. Then, we get
$\textbf{P}\textbf{u}=p\textbf{u}+q(a\bm{\alpha}_1-b\bm{\alpha}_2)$.
Since $\textbf{U}$ is unimodular, it is nonsingular, and its column vectors cannot be the all-zero vectors. Since $\textbf{u}$ is a non-zero eigenvector of $\textbf{P}$ and $q\neq0$, we know that one and only one of $a$ and $b$ must be $0$.
Thus, $\textbf{U}$ has to be the form of $\left(
                  \begin{array}{cc}
                    a_1 & a_2 \\
                    a_1 & a_2 \\
                  \end{array}
                \right)$, $\left(
                  \begin{array}{cc}
                    b_1 & b_2 \\
                    -b_1 & -b_2 \\
                  \end{array}
                \right)$, $\left(
                  \begin{array}{cc}
                    a_1 & a_2 \\
                    a_1 & -a_2 \\
                  \end{array}
                \right)$, or $\left(
                  \begin{array}{cc}
                    b_1 & b_2 \\
                    -b_1 & b_2 \\
                  \end{array}
                \right)$.
Obviously, the determinants of $\left(
                  \begin{array}{cc}
                    a_1 & a_2 \\
                    a_1 & a_2 \\
                  \end{array}
                \right)$ and  $\left(
                  \begin{array}{cc}
                    b_1 & b_2 \\
                    -b_1 & -b_2 \\
                  \end{array}
                \right)$ are zero, which indicates that the first two forms of matrices are not possible to be unimodular. The determinants of $\left(
                  \begin{array}{cc}
                    a_1 & a_2 \\
                    a_1 & -a_2 \\
                  \end{array}
                \right)$ and $\left(
                  \begin{array}{cc}
                    b_1 & b_2 \\
                    -b_1 & b_2 \\
                  \end{array}
                \right)$ are $-2a_1a_2$ and $2b_1b_2$, respectively, which are not equal to $\pm1$ for any $a_1,a_2,b_1,b_2\in\mathbb{Z}$. This indicates that the last two forms of matrices are also not possible to be unimodular. Therefore, $\textbf{U}$ cannot be unimodular. That is to say, $\textbf{P}$ cannot be diagonalized as $\textbf{P}=\textbf{U}\bm{\Lambda}\textbf{U}^{-1}$, where $\textbf{U}$ is a $2\times2$ unimodular matrix and $\bm{\Lambda}$ is a diagonal integer matrix.

\end{document}